\documentclass[11pt,draftcls,onecolumn]{IEEEtran}

\usepackage{cite}

\usepackage{colortbl}
\usepackage{color}
\usepackage{xcolor}
\usepackage{bm}
\usepackage{bbm}
\usepackage{amsmath}
\usepackage{multirow}

\usepackage{cases}

\usepackage{amssymb}
\usepackage{amsthm}
\usepackage{graphicx}
\usepackage{mathrsfs}
\usepackage{empheq}	
\usepackage[mathcal]{euscript}
\usepackage[margin = 2cm]{geometry}
\usepackage{framed}
\usepackage{comment}

\usepackage{dsfont}
\usepackage{xr}
\makeatletter
\newcommand*{\addFileDependency}[1]{
  \typeout{(#1)}
  \@addtofilelist{#1}
  \IfFileExists{#1}{}{\typeout{No file #1.}}
}
\makeatother

\DeclareFontFamily{U}{mathx}{\hyphenchar\font45}
\DeclareFontShape{U}{mathx}{m}{n}{
      <5> <6> <7> <8> <9> <10>
      <10.95> <12> <14.4> <17.28> <20.74> <24.88>
      mathx10
      }{}
\DeclareSymbolFont{mathx}{U}{mathx}{m}{n}
\DeclareFontSubstitution{U}{mathx}{m}{n}
\DeclareMathAccent{\widecheck}{0}{mathx}{"71}

\newtheorem{theorem}{Theorem}
\newtheorem{lemma}{Lemma}
\newtheorem{corollary}{Corollary}

\newtheorem{assumption}{Assumption}

\newtheorem{definition}{Definition}

\newcommand{\E}{\mathbb{E}}

\newcommand{\beq}{\begin{equation}}
\newcommand{\eeq}{\end{equation}}
\newcommand{\beqa}{\begin{IEEEeqnarray}{rCl}}
\newcommand{\eeqa}{\end{IEEEeqnarray}}

\newcommand{\thetatrue}{\theta_0}

\DeclareMathOperator*{\argmin}{arg\,min}

\title{A Fundamental Limit in Decentralized Decision-Making
\thanks{
M. Carpentiero, F. Scala, and V. Matta are with the Department of Information and Electrical Engineering and Applied Mathematics (DIEM), University of Salerno, I-84084 Fisciano (SA), Italy, and also with the National Inter-University Consortium for Telecommunications (CNIT), Italy (e-mails: \{mcarpentiero, fescala, vmatta\}@unisa.it).

A. H. Sayed is with
the Institute of Electrical and Micro Engineering, 
EPFL, CH-1015 Lausanne, Switzerland (e-mail: ali.sayed@epfl.ch).

}
}

\author{\IEEEauthorblockN{Marco Carpentiero, Felice Scala, Vincenzo Matta, and Ali H. Sayed}
}

\begin{document}
\maketitle

\begin{abstract}
In decentralized decision-making, several agents connected according to a network graph aim at solving a classification problem by collecting streaming observations. Due to decentralization, they run an iterative algorithm where, at each iteration, they can only exchange information locally with their neighbors. 
While decentralized estimation solutions have been shown to match the performance of optimal centralized systems, we show here that surprisingly this conclusion does not hold for decentralized decision-making.
Specifically, we prove that the error probability for the \emph{best decentralized} decision strategy exhibits \emph{an irreducible loss} with respect to the optimal centralized classifier. 
This result establishes a fundamental limit for the performance of any decentralized decision strategy.
We obtain an analytical relation showing that this limit is related to the interplay between  decentralization and classification. The first aspect appears through the distances between the nodes in the graph, while the second aspect plays through the moment generating functions of the likelihood ratios that describe the decision problem. By applying the derived closed-form relation to different network topologies and inference problems, we observe some interesting and perhaps unexpected behavior emerging.  
In particular, we characterize the scaling law (with the network size) for the loss over popular network topologies, showing that the error probabilities might differ by orders of magnitude; and we examine how performance is affected by the relative distance between informative and uninformative agents over the graph.
\end{abstract}

\begin{IEEEkeywords}
Social learning, error probability, large deviations, exact asymptotics.
\end{IEEEkeywords}

\section{Introduction and Related Work}
The problem of distributed decision-making refers to a collection of spatially dispersed agents that cooperate to accomplish a classification task. Each agent collects some data and interacts with the other agents according to some network structure. 

Many earlier works on distributed decision-making focused on architectures with a fusion center that can be reached by any agent~\cite{Veeravalli,Poor,Varshney,Tong,Willett}. 
Later on, the rising popularity of \emph{fully flat or decentralized} architectures led to the following paradigm shift: $i)$ each agent acts as a fusion center, meaning that \emph{it is responsible for making its own decisions}; and $ii)$ the decisions of the individual agents leverage the information disseminated across the network by means of \emph{local consultation steps among neighbors}. 
This fully decentralized paradigm is also referred to as \emph{decentralized decision-making} or \emph{social learning}~\cite{MattaBordignonSayedBook,SocLearnSPmagazine}. 

There exist several useful strategies to perform decentralized decision-making. All of them are characterized by the essential constraint of decentralization: at each round of the inferential algorithm (i.e., at each time epoch), each agent is allowed to receive information only from its neighbors, according to the graph topology that defines the network. 
Under this constraint, some schemes aggregate the beliefs (i.e., the posterior probabilities) of the neighbors by means of a weighted arithmetic average~\cite{zhaoLearningSocialNetworks2012, jadbabaieNonBayesianSocialLearning2012}; while other schemes use a weighted \emph{geometric} average of the posterior probabilities~\cite{lalithaSocialLearningDistributed2018,nedicFastConvergenceRates2017, Jadbabaie2018}. One fundamental property that has been shown for both arithmetic and geometric averaging is \emph{consistency}: under mild conditions on the network connectivity and the identifiability of the classification problem, decentralized decision-making attains a vanishing error probability as time increases~\cite{MattaBordignonSayedBook}. 

Since consistency is an \emph{asymptotic} property, different consistent strategies may exhibit very different error probabilities for finite time horizons. 
Therefore, to properly compare the decentralized decision strategies, it is necessary to go beyond consistency and provide a quantitative evaluation for the error probabilities. 
The main works addressing this issue focused on a \emph{large deviation analysis}~\cite{MattaBordignonSayedBook,MouraLDnonGauss,lalithaSocialLearningDistributed2018,nedicFastConvergenceRates2017}. The takeaway from these works is that (under geometric averaging) the error probabilities of all agents vanish with time at the same exponential rate. 
Moreover, when the observations are statistically independent across the agents, this exponent is the one attained by the optimal centralized Bayesian system, i.e., by the maximum a posteriori probability (MAP) classifier~\cite{KayDet}.

The equivalence between the error exponents attained by the agents may contribute to create a ``folklore'' in decentralized decision-making, namely, that for sufficiently large amounts of data all agents share the same decision performance. Likewise, the equivalence with the optimal centralized exponent may lead to the conviction that decentralized decision-making is asymptotically optimal. 
These conclusions are also encouraged by similar findings obtained for decentralized \emph{estimation/regression} problems, where it has been shown that all agents asymptotically attain the same mean-square-error performance~\cite{chenSayedLearningBehavior2015PartI,chenSayedLearningBehavior2015PartII}.

Unfortunately, in this work we show that the asymptotic optimality of decentralized decision-making does not hold, as we now explain. To start with, observe that large deviations only deal with the error exponent. Hence, we may have two error probabilities, say $p_{1,t}=e^{-t}$ and $p_{2,t}=100 \, e^{-t}$ (where $t$ represents the time index), which, albeit sharing the same exponent, differ by two orders of magnitude~\cite{DemboZeitouni,DenHollander,MattaBordignonSayedBook}. As a result, the equivalence itself in terms of error exponents is not enough to claim the asymptotic optimality of decentralized decision-making. 
Motivated by this issue, recent works examined the higher-order terms (i.e., the terms beyond the error exponent) that affect the error probability in traditional social learning~\cite{HuangWang,ourEUSIPCO2025,ourEUSIPCOpaperarxiv2025}.

In~\cite{HuangWang}, for a binary detection problem with bounded likelihood ratios and for a binary Gaussian shift-in mean problem, the authors compute closed-form \emph{upper bounds} on the asymptotic ratio between the individual agents' error probabilities and the optimal centralized probability. That is, they show that the loss due to decentralization cannot be larger than a certain amount. 
For the Gaussian case, the \emph{exact} asymptotic ratio between the decentralized and the centralized error probabilities is instead evaluated in~\cite{ourEUSIPCOpaperarxiv2025}, revealing the analytical form of the loss experienced by traditional social learning with respect to the optimal centralized classifier. In particular, this loss is persistent in the long run, i.e., it cannot be neglected asymptotically. 

However, it is not clear whether the aforementioned gap is due to the specific decision strategy or is an intrinsic, irreducible gap due to decentralization. In this work, we answer this question.
Specifically, in Sec.~\ref{sec:decperf} we establish a fundamental limit on the performance of \emph{any decentralized decision strategy}. 
We ascertain that there exists an \emph{irreducible gap} between the performance of any decentralized strategy and the optimal centralized strategy. We characterize this loss in terms of a closed-form analytical expression that highlights the interplay between the network topology (through the distances between the agents over the network graph) and the classification problem (through the moment generating functions of the log likelihood ratios).
In Sec.~\ref{sec:tradSL}, we show that traditional social learning does not attain the fundamental limit. 
Then, in Sec.~\ref{sec:ex1} we examine how the performance scales with the network size under typical network topologies, whereas in Sec.~\ref{sec:ex2} we show how the interplay between the topology and the agents' informativeness affects the performance loss.

\section{Basic Quantities and Notation}
In this section we collect the main elements describing the inference problem and the network structure.

\subsection{Inference Problem}
We consider a decision-making process where a network of agents is interested in learning the state of a given phenomenon of interest, which is represented by a \emph{hypothesis} $\theta$ belonging to a discrete finite set $\Theta$. The hypotheses are distributed according to a \emph{prior} probability mass function (pmf) denoted by $\pi$. To avoid trivial cases, we assume that all entries of $\pi$ are nonzero.

The ensemble of network agents is represented by the set $\mathcal{K}\triangleq\{1,2,\ldots,K\}$.
At each time instant $t\in\mathbb{N}$, each agent $k\in\mathcal{K}$ gathers an observation $\bm{x}_{k,t}\in\mathbb{R}^{n_k}$ (we use bold notation for random quantities), where the space dimensionalities $n_k$ can be heterogeneous across the agents. The data are assumed independent over time and space (i.e., across the agents).

The generative mechanism for the data is as follows. Given the true hypothesis (denoted by $\thetatrue$), the observations for agent $k$ are identically distributed over time, according to some \emph{likelihood} model $\ell_k(x_k|\thetatrue)$ that, as a function of $x_k$, is a probability density function (pdf) on $\mathbb{R}^{n_k}$. For every fixed agent $k\in\mathcal{K}$, the pdfs $\ell_k(x_k|\theta)$ have the same support for all $\theta\in\Theta$. We define the log likelihood ratio (LLR) of agent $k$ at time $t$ as
\begin{equation}
\bm{\lambda}_{k,t}(\thetatrue,\theta)\triangleq\ln\frac{\ell_k(\bm{x}_{k,t}|\thetatrue)}{\ell_k(\bm{x}_{k,t}|\theta)},\qquad \theta\neq\thetatrue.
\label{eq:definizioneprimigenia}
\end{equation}

\begin{assumption}[\textbf{Likelihoods}]
\label{assum:luckylike}
For each pair of hypotheses $(\thetatrue,\theta)$ with $\theta\neq\thetatrue$, and each agent $k\in\mathcal{K}$, we consider the following two cases: $i)$ if the likelihood models $\ell_k(\bm{x}_k|\thetatrue)$ and $\ell_k(\bm{x}_k|\theta)$ are equal, then agent $k$ is unable to distinguish these hypotheses and the log likelihood ratio $\bm{\lambda}_{k,t}(\thetatrue,\theta)$ is equal to zero (with probability $1$); $ii)$ otherwise, the log likelihood ratio $\bm{\lambda}_{k,t}(\thetatrue,\theta)$ is a random variable distributed, conditionally on $\thetatrue$, according to some pdf with positive mean
\begin{equation}
\mathbb{E}_{\thetatrue}\,\bm{\lambda}_{k,t}(\thetatrue,\theta)\triangleq D_k(\thetatrue||\theta)>0.
\end{equation}
Here, the subscript in the expectation operator $\mathbb{E}_{\thetatrue}$ denotes that the expectation is computed under the true hypothesis $\thetatrue$, whereas  $D_k(\thetatrue||\theta)$ denotes the Kullback-Leibler (KL) divergence between $\ell_k(\bm{x}_k|\thetatrue)$ and $\ell_k(\bm{x}_k|\theta)$~\cite{CoverThomas}.\hfill$\square$
\end{assumption}
We rule out the extreme situation where the decision problem cannot be solved by the network of agents. In other words, the decision problem is \emph{identifiable}, which corresponds to saying that, for each pair of hypotheses, the KL divergence between the \emph{joint} likelihoods across the agents is positive. In view of the additivity of the KL divergence over independent observations, the identifiability condition translates into the following assumption: 
\begin{assumption}[\textbf{Identifiability}]
\label{assum:ident}
The decision problem is globally identifiable by the network of agents, which means that, for all $\thetatrue, \theta \in\Theta$, with $\theta\neq\thetatrue$,
\begin{equation}
\sum_{k=1}^K D_k(\thetatrue||\theta)=
\sum_{j\in\mathcal{J}(\thetatrue,\theta)} D_j(\thetatrue||\theta)
>0,
\end{equation}
where $\mathcal{J}(\thetatrue,\theta)\neq\emptyset$ is the set of agents for which the pair $(\thetatrue,\theta)$ is identifiable.\hfill$\square$
\end{assumption}
In our analysis, a critical role will be played by the log moment generating function (LMGF) of the log likelihood ratios $\bm{\lambda}_{k,t}(\thetatrue,\theta)$, which we denote by
\begin{equation}
\Lambda_k(s;\thetatrue,\theta)
\triangleq \ln\mathbb{E}_{\thetatrue}\exp\Big(
s \, \bm{\lambda}_{k,t}(\thetatrue,\theta)
\Big),
\label{eq:LMGFdef}
\end{equation}
where we see that the LMGF does \emph{not} depend on $t$ due to the identical distribution over time. We assume the following regularity condition.
\begin{assumption}[\textbf{LMGF of the log likelihood ratios}]
\label{assum:finiteMGFs}
For all $k\in\mathcal{K}$, and all $\thetatrue, \theta\in\Theta$, with $\theta\neq\thetatrue$, we have
\begin{equation}
\Lambda_k(s;\thetatrue,\theta)<\infty\quad \forall s\in\mathbb{R}.
\end{equation}
\hfill$\square$
\end{assumption}
We also introduce the \emph{global} LMGF corresponding to the sum variable
\begin{equation}
\bm{\lambda}_t(\thetatrue,\theta)\triangleq \sum_{k=1}^K \bm{\lambda}_{k,t}(\thetatrue,\theta)
=
\sum_{j\in\mathcal{J}(\thetatrue,\theta)} \bm{\lambda}_{j,t}(\thetatrue,\theta),
\label{eq:sumLLRvariable}
\end{equation}
namely,
\begin{align}
\Lambda(s;\thetatrue,\theta)
&\triangleq 
\ln\mathbb{E}_{\thetatrue}
\exp
\Big(s\,\bm{\lambda}_{t}(\thetatrue,\theta)\Big)
\nonumber\\
&=
\sum_{k=1}^K \Lambda_k(s;\thetatrue,\theta)
=
\sum_{j\in\mathcal{J}(\thetatrue,\theta)} \Lambda_j(s;\thetatrue,\theta),
\label{eq:sumLMGF}
\end{align}
where the second-to-last equality follows from the additivity of the LMGF for independent random variables, whereas the last equality holds because $\bm{\lambda}_{j,t}(\thetatrue,\theta)=0$ (almost surely) when $j\notin\mathcal{J}(\thetatrue,\theta)$, i.e., when $\thetatrue$ and $\theta$ are not identifiable for agent $j$.

\subsection{Network Model and Decentralization}
We assume that the agents can communicate according to a certain network topology, which is described by a graph, whose vertices represent the agents, and whose \emph{directed} edges represent a communication link from one agent to another. If an edge exists, originating from agent $j$ and pointing toward agent $k$, then we say that $k$ can receive information from $j$. Moreover, each agent can always use its own information, which means there is a self-loop connecting each agent to itself. We define the \emph{directed} neighborhood $\mathcal{N}_k$ for agent $k$ as the set of agents from which $k$ can receive information through a directed link (i.e., in a single hop), including $k$ itself.

\begin{definition}[\textbf{Network distances}]
\label{def:dist}
For each pair $(j,k)$, we denote by $d_{jk}$ the (directed) distance that must be traversed, over the graph, to reach node $k$ starting from node $j$. More precisely, we say that $d_{jk}=m$ when the shortest path from $j$ to $k$ is equal to $m+1$ hops.\footnote{Typically, in the graph literature, when the number of hops is equal to $m+1$, the distance is defined as $m+1$. Instead, we define the distance by subtracting $1$ since it is more convenient, in our analysis, to consider that neighboring agents lie at distance zero.} This means that  $d_{jk}=0$ when $j\in\mathcal{N}_k$.\hfill$\square$
\end{definition}
\begin{assumption}[\textbf{Connectivity}]
\label{assum:net}
We assume $d_{jk}<\infty$ for all pairs $(j,k)$, i.e., any agent can be reached through a path that originates at any other agent.\hfill$\square$ 
\end{assumption}

\begin{figure}
\centering
\includegraphics[width=0.5\linewidth]{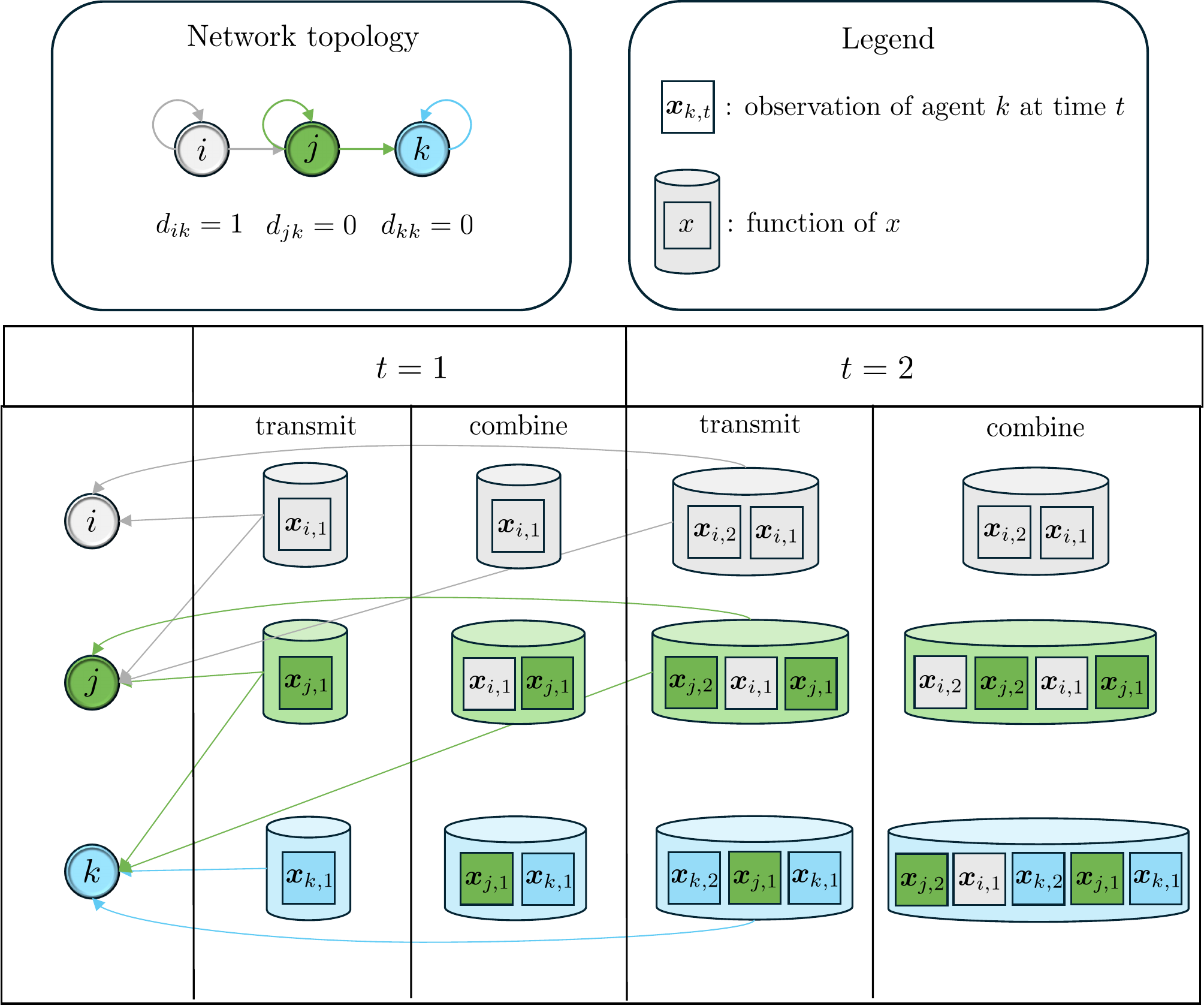}
\caption{Pictorial illustration of the flow of information across a decentralized system. As specified in the legend, the cylindrical container represents a generic function of the observations within it.}
\label{fig:pictorial}
\end{figure}

The next definition summarizes the essential constraints that a decentralized inference system must obey.

\begin{definition}[\textbf{Decentralized inference system}]
\label{def:idec}
By a decentralized system, we mean a network of agents that learn through successive iteration steps $t\in\mathbb{N}$. 
At each iteration, all agents collect a fresh observation. 
Then, they compute and deliver over the network a function (e.g., a belief or a likelihood ratio) that depends on the fresh observation and on the information received from the neighbors until time $t$. Then, this decentralized process obeys the following constraints: 
\begin{itemize}
\item Agent $k$ can receive information only from its (directed) neighbors $j\in\mathcal{N}_k$, namely, from the agents for which $d_{jk}=0$.
\item 
At time $t$, agent $k$ can compute a decision statistic that can be formally written as
\begin{equation}
f_{k,t}\Big(
\big\{\left\{
\bm{x}_{j,\tau}
\right\}_{\tau=1}^{t-d_{jk}}
\big\}_{j=1}^K
\Big),    
\end{equation}
for some function $f_{k,t}$ (which can also be a vector-valued function, such as a probability vector). Here, $\{\{
\bm{x}_{j,\tau}
\}_{\tau=1}^{t-d_{jk}}
\}_{j=1}^K$ represents the largest ensemble of data that can reach agent $k$ at time $t$.\hfill$\square$ 
\end{itemize}
\end{definition}

Observe that, regardless of the particular learning algorithm, after $t$ steps the information received by agent $k$ cannot be based on observations that are too ``recent`` and originate from agents that are too ``distant`` from $k$.
This concept can be more conveniently illustrated by means of  Fig.~\ref{fig:pictorial}. Consider the three agents, $i$, $j$, and $k$, connected as shown in the upper-left panel in the figure. 
At time $t=1$, agent $k$ collects a new local observation $\bm{x}_{k,1}$ and receives information from its neighbor $j$. This information is, in general, some function of the data $\bm{x}_{j,1}$. Similarly, agent $j$ observes $\bm{x}_{j,1}$ and receives some function of the data $\bm{x}_{i,1}$. 
The agents can then combine the information received from their neighbors. For example, agent $k$ can compute a certain function of $\bm{x}_{j,1}$ and $\bm{x}_{k,1}$. Likewise, agent $j$ will compute a certain function of $\bm{x}_{i,1}$ and $\bm{x}_{j,1}$.

Next, at $t=2$, agent $k$ (along with the fresh observation $\bm{x}_{k,2}$), will receive some function of $\bm{x}_{i,1}$ and $\bm{x}_{j,1}$ from agent $j$. After the combination phase, agent $k$ will compute a certain decision statistic from the received (functions of) data. Formally, this decision statistic can be represented as some global function of $\bm{x}_{j,2}, \bm{x}_{k,2}, \bm{x}_{i,1}, \bm{x}_{j,1},$ and $\bm{x}_{k,1}$.

Note that, at $t=2$, agent $k$ received no information related to the \emph{current} observation $\bm{x}_{i,2}$, since it takes $2$ hops to reach $k$ from $i$. In contrast, agent $k$ has already received information related to the \emph{previous-lag} observation $\bm{x}_{i,1}$, because at $t=1$ (first hop) agent $j$ received information from agent $i$, and at $t=2$ agent $j$ forwarded to agent $k$ a function that also incorporates information related to $\bm{x}_{i,1}$.

\section{Decentralizated Performance}
\label{sec:decperf}

We are interested in establishing the best attainable performance (i.e., the lowest error probability) within the class of decentralized strategies identified by Definition~\ref{def:idec}. 
To this end, we observe preliminarily that any function (i.e., transformation) of the raw data cannot bring more information than the raw data themselves. Therefore, it is understood that the \emph{maximum possible amount of information available to agent $k$ at time $t$} is given by the ensemble of raw data
\begin{equation}
\left\{\left\{
\bm{x}_{j,\tau}
\right\}_{\tau=1}^{t-d_{jk}}
\right\}_{j=1}^K.
\label{eq:idealamount}
\end{equation}
This is the largest ensemble of data that can reach agent $k$ at time $t$, compatibly with the decentralization constraints in Definition~\ref{def:idec}. Note that this is not the total amount of data available over the graph until the time instant $t$. This is because some agents may be further away from $k$.

Then, the \emph{best decision strategy} (i.e., the strategy minimizing the error probability) for agent $k$ is the maximum a posteriori probability (MAP) rule applied to the ensemble of data \eqref{eq:idealamount}. 
Any decentralized strategy belonging to the family described by Definition~\ref{def:idec} will feature an error probability that cannot be lower than the error probability achieved by the MAP classifier operating on the data set \eqref{eq:idealamount}. 

We will now examine the MAP classifier. In order to implement it, we need to construct the posterior belief of $\theta$ given the data \eqref{eq:idealamount}, which is given by
\begin{equation}
\bm{\mu}_{k,t}(\theta)\propto
\pi(\theta)\,
\prod_{j=1}^K
\prod_{\tau=1}^{t-d_{jk}}
\ell_j(\bm{x}_{j,\tau}|\theta),
\label{eq:beliefs}
\end{equation}
where, as usual in Bayesian theory, the proportionality sign $\propto$ hides the normalization factor represented by the marginal pdf of the data.
According to \eqref{eq:beliefs}, the log belief ratio of agent $k$ at time $t$ is
\begin{align}
\bm{\beta}_{k,t}(\thetatrue,\theta)
&\triangleq\ln\frac{\bm{\mu}_{k,t}(\thetatrue)}{\bm{\mu}_{k,t}(\theta)}\nonumber\\
&=
\ln\frac{\pi(\thetatrue)}{\pi(\theta)} + \sum_{j=1}^K\sum_{\tau=1}^{t-d_{jk}} \bm{\lambda}_{j,\tau}(\thetatrue,\theta).
\label{eq:betaktinitdef}
\end{align}
In order to investigate the loss arising from the decentralization, we need to consider as a benchmark the centralized MAP classifier that is based on \emph{all data}. 
The log belief ratio for this optimal system is obtained by setting $d_{jk}=0$ for all $j$ and $k$ in \eqref{eq:betaktinitdef}, and will be denoted by $\bm{\beta}^{\mathrm{cen}}_t(\thetatrue,\theta)$.

\subsection{Definition of some useful quantities}
Preliminarily, it is convenient to list some definitions that characterize the main quantities used in the forthcoming theorems.

\vspace*{5pt}
\noindent
\emph{Exact asymptotics.} 
For two sequences $a_t$ and $b_t$, the notation $a_t\simeq b_t$ signifies that
\begin{equation}
\lim_{t\rightarrow\infty} \frac{a_t}{b_t}=1.
\label{eq:exasydef}
\end{equation}
Following a standard terminology used in statistics~\cite{DemboZeitouni}, specifically in the framework of \emph{exact asymptotics}, when \eqref{eq:exasydef} holds, we say that $b_t$ is an exact asymptotic approximation for $a_t$ (or vice versa).\footnote{The qualification ``exact'' is used in the literature to indicate that other forms of asymptotic approximations, such as \emph{large deviations}, do not guarantee the asymptotic equivalence in \eqref{eq:exasydef}, but only equivalence at the leading exponential order; see also the discussion in the introduction.}

\vspace*{5pt}
\noindent
\emph{Error probabilities.}
For the optimal decentralized system, we define the binary or pairwise error probability for agent $k$ at time $t$, relative to choosing some wrong hypothesis $\theta\neq\thetatrue$ (the subscript to the probability operator indicates that the probability is evaluated under the true hypothesis $\thetatrue$):
\begin{equation}
p_{k,t}(\thetatrue,\theta)\triangleq \mathbb{P}_{\thetatrue}[\bm{\beta}_{k,t}(\thetatrue,\theta)\leq 0]
\label{eq:binerprob}
\end{equation}
and the total error probability
\begin{equation}
p_{k,t}\triangleq \sum_{\thetatrue\in\Theta}\pi(\thetatrue)\,
\mathbb{P}_{\thetatrue}\left[\min_{\theta\neq\thetatrue}\bm{\beta}_{k,t}(\thetatrue,\theta)\leq 0\right].
\label{eq:totalerrdef}
\end{equation}
The corresponding quantities for the optimal centralized system will be denoted by $p_t^{\mathrm{cen}}(\thetatrue,\theta)$ and $p_t^{\mathrm{cen}}$, respectively.

\vspace*{5pt}
\noindent
\emph{Minimizer of the network LMGF.}
Consider an agent $j\in\mathcal{J}(\thetatrue,\theta)$, i.e., such that $\mathbb{E}_{\thetatrue}\,\bm{\lambda}_{j,t}(\thetatrue,\theta)>0$. By assumption, $\ell_j(x_j|\thetatrue)$ and $\ell_j(x_j|\theta)$ have the same support, and since they both integrate to $1$ (as they are pdfs), the log likelihood ratio $\bm{\lambda}_{j,t}(\thetatrue,\theta)$ takes on negative and positive values with nonzero probability. 
From the same arguments used in the discussion following Assumption~\ref{assum:csijtau} in Appendix~\ref{app:asysums}, we conclude that: the LMGF $\Lambda(s;\thetatrue,\theta)$ is strictly convex; the equation (we use the prime notation to denote differentiation with respect to the first argument)
\begin{equation}
\Lambda^{\prime}(s;\thetatrue,\theta)=0
\label{eq:stateqth}
\end{equation}
has a unique solution $s(\thetatrue,\theta)<0$; and
\begin{equation}
\Lambda\Big(s(\thetatrue,\theta);\thetatrue,\theta\Big)<0.
\end{equation} 
Furthermore, it is easily verified that $\Lambda(-1;\thetatrue,\theta)=\Lambda(0,\thetatrue,\theta)=0$, which implies that $s(\thetatrue,\theta)\in (-1,0)$.

\vspace*{5pt}
\noindent
\emph{Error exponents.}
The quantity
\begin{equation}
\mathcal{E}(\thetatrue,\theta)
\triangleq 
-\Lambda\Big(s(\thetatrue,\theta);\thetatrue,\theta\Big)>0
\label{eq:errexpth1def}
\end{equation}
will be seen in Theorem~\ref{th:theorem1maintext} to represent the error exponent governing the exponential rate of convergence to zero of the pairwise error probabilities $p_t^{\mathrm{cen}}(\thetatrue,\theta)$ and $p_{k,t}(\thetatrue,\theta)$ for each agent $k\in\mathcal{K}$. We also introduce the worst-case (i.e., the smallest) error exponent across any pair of hypotheses,
\begin{equation}
\mathcal{E}^\star\triangleq \min_{
\substack{(\thetatrue,\theta)\in\Theta^2
\\
\theta\neq\thetatrue}} \mathcal{E}(\thetatrue,\theta).
\label{eq:smallestexp}
\end{equation}
Likewise, the set
\begin{equation}
\mathcal{S}^\star\triangleq \argmin_{
\substack{(\thetatrue,\theta)\in\Theta^2
\\
\theta\neq\thetatrue}} \mathcal{E}(\thetatrue,\theta)
\label{eq:setsmallestexp}
\end{equation}
collects all the (ordered) pairs of hypotheses sharing the smallest error exponent. 

\vspace*{5pt}
\noindent
\emph{Asymptotic variance correction.}
The quantity
\begin{equation}
\sigma^2(\thetatrue,\theta)
\triangleq 
s^2(\thetatrue,\theta)\,
\Lambda^{\prime\prime}\Big(s(\thetatrue,\theta);\thetatrue,\theta\Big)
\label{eq:varianceth1def}
\end{equation}
will be seen in Theorem~\ref{th:theorem1maintext} to represent an asymptotic variance correction used to build the approximation for the pairwise error probabilities. 

\vspace*{5pt}
\noindent
\emph{Prior correction.}
The quantity
\begin{equation}
\eta_\pi(\thetatrue,\theta)\triangleq
\left(
\dfrac{\pi(\theta)}{\pi(\thetatrue)}\right)^{|s(\thetatrue,\theta)|}.
\label{eq:priortermdef}
\end{equation}
will be seen in Theorem~\ref{th:theorem1maintext} to represent a correction, accounting for uneven prior assignments, used to build the approximation for the pairwise error probabilities.

\subsection{Main results}
\begin{theorem}[\textbf{Exact asymptotics for the pairwise error probabilities}]
\label{th:theorem1maintext}
Let Assumptions~\ref{assum:luckylike}--\,\ref{assum:net} be satisfied. 
For each pair $(\thetatrue,\theta)$ with $\theta\neq\thetatrue$, let the log likelihood ratios $\bm{\lambda}_{k,t}(\thetatrue,\theta)$ fulfill, for all $k\in\mathcal{K}$ and $t\in\mathbb{N}$, Assumption~\ref{assum:csijtau} from Appendix~\ref{app:asysums}. 
Then, the function
\begin{equation}
\mathscr{P}_{t}^{\mathrm{cen}}(\thetatrue,\theta)
\triangleq
\frac{
\eta_\pi(\thetatrue,\theta)}{\sqrt{2\pi\,\sigma^2(\thetatrue,\theta)}}
\,
\frac{
e^{
-t\,\mathcal{E}(\thetatrue,\theta)
}
}
{\sqrt{t}}
\label{eq:ptcenappdef}
\end{equation}
is an exact asymptotic approximation for the pairwise error probability of the centralized system, namely, we have
\begin{equation}
p_{t}^{\mathrm{cen}}(\thetatrue,\theta)
\simeq \mathscr{P}_{t}^{\mathrm{cen}}(\thetatrue,\theta).
\label{eq:binarypt}
\end{equation}
The pairwise error probability of each agent $k\in\mathcal{K}$ in the optimal decentralized system is instead approximated as
\begin{equation}
p_{k,t}(\thetatrue,\theta)
\simeq 
\mathscr{P}_{k,t}(\thetatrue,\theta),
\label{eq:binarypkt}
\end{equation}
where
\begin{equation}
\mathscr{P}_{k,t}(\thetatrue,\theta)
\triangleq
\mathscr{P}_{t}^{\mathrm{cen}}(\thetatrue,\theta)\,
\mathscr{L}_k(\thetatrue,\theta),
\label{eq:pktappdef}
\end{equation}
and the loss term $\mathscr{L}_k(\thetatrue,\theta)$ is given by\footnote{
Given two hypotheses $\theta_1$ and $\theta_2$, from \eqref{eq:definizioneprimigenia} and \eqref{eq:LMGFdef} we see that
\begin{equation}
\Lambda_j(s;\theta_1,\theta_2)=\Lambda_j(-1-s;\theta_2,\theta_1).
\label{eq:LLRLMGFfundident}
\end{equation}
In particular, Eq. \eqref{eq:LLRLMGFfundident} implies that $s(\theta_2,\theta_1)=-1-s(\theta_1,\theta_2)$, which in turn yields $\Lambda_j(s(\theta_1,\theta_2);\theta_1,\theta_2)=\Lambda_j(s(\theta_2,\theta_1);\theta_2,\theta_1)$. As a result, the loss defined by \eqref{eq:lossth1} is symmetric with respect to its arguments, namely, $\mathscr{L}_k(\theta_1,\theta_2)=\mathscr{L}_k(\theta_2,\theta_1)$. 
}
\begin{equation}
\mathscr{L}_k(\thetatrue,\theta)\triangleq
\exp
\left(
-\sum_{j=1}^K d_{jk}\,
\Lambda_j\Big(s(\thetatrue,\theta);\thetatrue,\theta\Big)
\right)
\geq 1.
\label{eq:lossth1}
\end{equation}
\end{theorem}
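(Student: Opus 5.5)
Fix a pair $(\thetatrue,\theta)$ and drop it from the notation, writing $s\triangleq s(\thetatrue,\theta)$. The plan has three steps: an exact change of measure (exponential tilting), a local limit theorem for the tilted sum, and a bound on the number of missing summands.

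\textbf{Step 1: tilting.} By \eqref{eq:betaktinitdef}, $\bm{\beta}_{k,t}=\ln(\pi(\thetatrue)/\pi(\theta))+\bm{S}_{k,t}$. Here $\bm{S}_{k,t}$ is a sum of independent variables: for each agent $j$ there are $t-d_{jk}$ i.i.d. copies of $\bm{\lambda}_{j}$. Under $\mathbb{P}_{\thetatrue}$, each $\bm{\lambda}_{j,\tau}$ is tilted with the common parameter $s$, i.e., its density is multiplied by $e^{s\lambda-\Lambda_j(s)}$. The tilted law of $\bm{S}_{k,t}$ then has mean
\[
\sum_j (t-d_{jk})\Lambda_j'(s)=t\,\Lambda'(s)-\sum_j d_{jk}\Lambda_j'(s)=-\sum_j d_{jk}\Lambda_j'(s).
\]
This mean is $O(1)$, because $\Lambda'(s)=0$. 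The tilted variance is $t\Lambda''(s)+O(1)$. The change of measure gives exactly
\[
p_{k,t}=\exp\Big(\sum_j (t-d_{jk})\Lambda_j(s)\Big)\,\widetilde{\mathbb{E}}\big[e^{-s\bm{S}_{k,t}}\mathbb{1}\{\bm{S}_{k,t}\le -\ln(\pi(\thetatrue)/\pi(\theta))\}\big].
\]
The prefactor equals $e^{-t\mathcal{E}}\exp(-\sum_j d_{jk}\Lambda_j(s))=e^{-t\mathcal{E}}\mathscr{L}_k$. The loss term therefore appears exactly, and what remains is to show that the tilted expectation is asymptotically the same as in the centralized case.

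\textbf{Step 2: the residual expectation.} Write $c=\ln(\pi(\thetatrue)/\pi(\theta))$ and $-s=|s|>0$. The expectation is $\widetilde{\mathbb{E}}[e^{|s|\bm{S}}\mathbb{1}\{\bm{S}\le -c\}]$, and it is dominated by values of $\bm{S}$ in an $O(1)$ window just below $-c$. I would invoke the local limit / exact asymptotics result for sums in Appendix~\ref{app:asysums}, under Assumption~\ref{assum:csijtau} (presumably a non-lattice or density-type condition that makes a local CLT hold for triangular arrays of non-identical summands). It should give that the density of $\bm{S}_{k,t}$ near any bounded point is $\simeq 1/\sqrt{2\pi t\Lambda''(s)}$. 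The $O(1)$ mean shift and the finitely many missing summands do not affect this, since they only move the Gaussian center by $O(1)\ll\sqrt t$. Then
\[
\widetilde{\mathbb{E}}[\cdot]\simeq \frac{1}{\sqrt{2\pi t\Lambda''(s)}}\int_{-\infty}^{-c}e^{|s|x}\,dx=\frac{e^{-|s|c}}{|s|\sqrt{2\pi t\Lambda''(s)}}.
\]
Moreover $e^{-|s|c}=\eta_\pi$ and $|s|\sqrt{\Lambda''}=\sqrt{\sigma^2}$. This gives \eqref{eq:binarypkt}, and the case $d_{jk}\equiv 0$ gives \eqref{eq:binarypt}.

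\textbf{Step 3: the bound $\mathscr{L}_k\ge 1$.} Each $\Lambda_j$ is convex with $\Lambda_j(-1)=\Lambda_j(0)=0$. Since $s\in(-1,0)$, convexity gives $\Lambda_j(s)\le 0$, and since $d_{jk}\ge 0$ the exponent in \eqref{eq:lossth1} is nonnegative.

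\textbf{Main obstacle.} The delicate point is Step 2: making the local limit theorem uniform so that it can be integrated against the weight $e^{|s|x}$ over the half-line. This requires a uniform local CLT together with a tail bound to control the contribution from far below $-c$; the latter is easy because $e^{|s|x}$ decays there. The argument must also handle heterogeneous agents and $k$-dependent truncations. I would rely on the appendix's Assumption~\ref{assum:csijtau} to supply exactly this Bahadur--Rao-type statement for non-identically distributed sums.
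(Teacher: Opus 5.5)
Your proposal is correct, and its skeleton is the paper's. The appendix result that the paper simply invokes (Theorem~\ref{th:theorapp}, with $\gamma\mapsto\ln\frac{\pi(\thetatrue)}{\pi(\theta)}$ and $d_j\mapsto d_{jk}$) tilts every summand at the common parameter $s_0=s(\thetatrue,\theta)$. It arrives at essentially your identity \eqref{eq:Pzt0}, with $-\sum_j d_{jk}\Lambda_j(s_0)$ and $-|s_0|\gamma$ in the deterministic prefactor, tilted mean $m_z=O(1)$, and tilted variance $t\Lambda''(s_0)+O(1)$. Your Step~3 is also fine: plain convexity already gives $\Lambda_j(s_0)\le 0$, and the paper uses strict convexity only to get strict negativity for $j\in\mathcal{J}(\thetatrue,\theta)$.

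The routes differ in how the residual $\widetilde{\mathbb{E}}[e^{|s_0|\bm{z}_t}\mathbb{I}_{\bm{z}_t\le 0}]$ is evaluated. The paper groups the heterogeneous tail terms into one fixed-law variable, $\bm{z}_t\stackrel{\mathrm{d}}{=}\bm{\xi}_0+\sum_{\tau=1}^{t-n}\bm{\xi}_\tau$. It then integrates by parts to write the residual as $\int_{-\infty}^0 e^{\zeta}[\mathsf{F}_t(\cdot)-\mathsf{F}_t(\cdot)]\,d\zeta$, and uses a first-order Edgeworth expansion of the CDF with uniform error $o(1/\sqrt{t})$, see \eqref{eq:Edgeworth}. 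Tails are then free because $\int_{-\infty}^0 e^{\zeta}d\zeta=1$, and the skewness term vanishes by dominated convergence.

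You use a uniform local limit theorem instead, which reduces the final step to dominated convergence against $e^{|s|x}$ on $(-\infty,-c]$. However, the appendix contains no such LLT: Theorem~\ref{th:theorapp} is already the complete exact-asymptotics statement. So this is the step you would actually have to supply. It does hold under Assumption~\ref{assum:csijtau}, though not directly from \eqref{eq:boundonchf}, which concerns the untilted laws. The argument runs as follows.
\begin{enumerate}
\item The bound \eqref{eq:boundonchf} makes the $m$-fold convolution $f^{*m}$ of the aggregate density bounded once $m\delta>1$.
\item Hence the tilted convolution $\tilde f^{*m}(x)=f^{*m}(x)\,e^{sx-m\Lambda(s)}$ satisfies $\int(\tilde f^{*m})^2\le\|f^{*m}\|_\infty\,e^{m\Lambda(2s)-2m\Lambda(s)}<\infty$, using finiteness of the LMGF at $2s$.
\item So $|\tilde\varphi|^{2m}$ is integrable, and the classical uniform density LLT applies to $\sum_{\tau=1}^{t-n}\bm{\xi}_\tau$, which has mean $0$ and variance $\Lambda''(s)$ under the tilt.
\item The fixed-law $\bm{\xi}_0$ is then absorbed by one more dominated-convergence step; this is the rigorous version of your ``$O(1)$ shift'' remark.
\end{enumerate}
Alternatively, an interval-level (Stone-type) LLT needs only non-latticity, which is automatic here, at the cost of a Riemann-sum and geometric-tail argument.

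In short, the paper's CDF-level route needs no control of the tilted densities. Yours replaces the integration-by-parts bookkeeping with this extra regularity check.
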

\begin{IEEEproof}
The result follows from Theorem~\ref{th:theorapp} in Appendix~\ref{app:asysums}, with the following associations:
\begin{align}
&\mathcal{J}\mapsto \mathcal{J}(\thetatrue,\theta),
\quad
& &\bm{\xi}_{j,\tau}\mapsto \bm{\lambda}_{j,\tau}(\thetatrue,\theta),\label{eq:firstassoc}
\\
&\gamma\mapsto \ln\frac{\pi(\thetatrue)}{\pi(\theta)},
\quad
& &d_j\mapsto d_{jk},
\\
&\Lambda_j(s)\mapsto \Lambda_j(s;\thetatrue,\theta),
\quad
& &\Lambda(s)\mapsto \Lambda(s;\thetatrue,\theta).\label{eq:lastassoc}
\end{align}
It is straightforward to verify that, under Assumptions~\ref{assum:luckylike}--\,\ref{assum:net}, the log likelihood ratios $\bm{\lambda}_{j,\tau}(\thetatrue,\theta)$, for $j\in\mathcal{J}(\thetatrue,\theta)$, satisfy the conditions required in Assumption~\ref{assum:csijtau} for the random variables $\bm{\xi}_{j,\tau}$. 
Then, Eq. \eqref{eq:binarypkt} follows by performing the substitutions in \eqref{eq:firstassoc}--\eqref{eq:lastassoc}. The companion result for the centralized system in \eqref{eq:binarypt} is simply obtained by setting $d_{jk}=0$ for all $j$ and $k$. 

Moreover, the inequality in \eqref{eq:lossth1} holds because (recall that $\mathcal{J}(\thetatrue,\theta)\neq \emptyset$ in view of Assumption~\ref{assum:ident}) 
\begin{equation}
\Lambda_j\Big(s(\thetatrue,\theta);\thetatrue,\theta\Big)<0\quad \forall j\in\mathcal{J}(\thetatrue,\theta).
\label{eq:neqLamjnewlastfin}
\end{equation}
This relation can be explained by observing that $s(\thetatrue,\theta)\in (-1,0)$ and $\Lambda_j(s;\thetatrue,\theta)<0$ for all $s\in (-1,0)$, which holds since $\Lambda_j(-1;\thetatrue,\theta)=\Lambda_j(0,\thetatrue,\theta)=0$ and, for $j\in\mathcal{J}(\thetatrue,\theta)$,   
$\Lambda_j(s;\thetatrue,\theta)$ is a strictly convex function; see the discussion following Assumption~\ref{assum:csijtau} in Appendix~\ref{app:asysums}.
\end{IEEEproof}

\begin{theorem}[\textbf{Exact asymptotics for the total error probabilities}]
\label{th:mainfinalth}
Under the same assumptions used in Theorem~\ref{th:theorem1maintext}, we construct
the following approximations for the total error probabilities in terms of the pairwise error probability approximations from \eqref{eq:ptcenappdef} and \eqref{eq:pktappdef}:
\begin{align}
\mathscr{P}_t^{\mathrm{cen}}&\triangleq \sum_{\thetatrue\in\Theta}\pi(\thetatrue)\sum_{\theta\neq\thetatrue}\mathscr{P}_t^{\mathrm{cen}}(\thetatrue,\theta),
\label{eq:overallptappdef}
\\
\mathscr{P}_{k,t}&\triangleq \sum_{\thetatrue\in\Theta}
\pi(\thetatrue)\sum_{\theta\neq\thetatrue}
\mathscr{P}_{k,t}(\thetatrue,\theta).
\label{eq:overallpktappdef}
\end{align}
Then, it holds that
\begin{equation}
p_t^{\mathrm{cen}}\simeq
\mathscr{P}_{t}^{\mathrm{cen}},\qquad
p_{k,t}\simeq\mathscr{P}_{k,t}.
\end{equation}
\end{theorem}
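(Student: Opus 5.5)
Fix $\thetatrue$ and write $\mathcal{E}_{\min}(\thetatrue)\triangleq\min_{\theta\neq\thetatrue}\mathcal{E}(\thetatrue,\theta)$. The plan is to show, for each $\thetatrue$ separately, that
\begin{equation}
\mathbb{P}_{\thetatrue}\Big[\min_{\theta\neq\thetatrue}\bm{\beta}_{k,t}(\thetatrue,\theta)\leq 0\Big]\simeq\sum_{\theta\neq\thetatrue}\mathscr{P}_{k,t}(\thetatrue,\theta)
\label{eq:plan-fixedtheta}
\end{equation}
in the sense that the difference between the two sides is $o$ of the right-hand side. Averaging over $\pi(\thetatrue)$ then gives $p_{k,t}\simeq\mathscr{P}_{k,t}$, since a finite sum of positive terms each asymptotic to its approximation is asymptotic to the sum of the approximations. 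The centralized claim is the special case $d_{jk}=0$.

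The union bound gives the upper side:
\begin{equation}
\mathbb{P}_{\thetatrue}\Big[\min_{\theta\neq\thetatrue}\bm{\beta}_{k,t}\leq 0\Big]\leq\sum_{\theta\neq\thetatrue}p_{k,t}(\thetatrue,\theta).
\end{equation}
By Theorem~\ref{th:theorem1maintext}, the right-hand side is asymptotic to $\sum_{\theta}\mathscr{P}_{k,t}(\thetatrue,\theta)$. For the lower side I use Bonferroni:
\begin{equation}
\mathbb{P}\Big[\bigcup_\theta A_\theta\Big]\geq\sum_\theta\mathbb{P}[A_\theta]-\sum_{\theta\neq\theta'}\mathbb{P}[A_\theta\cap A_{\theta'}],
\end{equation}
where $A_\theta=\{\bm{\beta}_{k,t}(\thetatrue,\theta)\leq0\}$. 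It therefore suffices to show that each joint probability $\mathbb{P}_{\thetatrue}[A_\theta\cap A_{\theta'}]$ is $o\big(e^{-t\mathcal{E}_{\min}(\thetatrue)}/\sqrt t\big)$, because the dominant pairwise terms are of exact order $e^{-t\mathcal{E}_{\min}(\thetatrue)}/\sqrt t$ by \eqref{eq:ptcenappdef}. Pairs with larger exponents are simply negligible in both sums, which causes no difficulty.

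The main obstacle is bounding the joint probability, and here I would use a two-dimensional Chernoff bound. For $a,b\leq0$, apply Markov's inequality to $\exp(a\bm{\beta}_{k,t}(\thetatrue,\theta)+b\bm{\beta}_{k,t}(\thetatrue,\theta'))$. Up to constants from the priors and the finitely many delayed terms, this gives
\begin{equation}
\mathbb{P}_{\thetatrue}[A_\theta\cap A_{\theta'}]\leq C\exp\big(t\,M(a,b)\big),\qquad M(a,b)\triangleq\ln\mathbb{E}_{\thetatrue}\, e^{a\bm{\lambda}_t(\thetatrue,\theta)+b\bm{\lambda}_t(\thetatrue,\theta')}.
\end{equation}
The key identity is $\bm{\lambda}_t(\thetatrue,\theta)-\bm{\lambda}_t(\thetatrue,\theta')=\bm{\lambda}_t(\theta',\theta)$, which is a nondegenerate variable when $\theta\neq\theta'$ by identifiability. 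From this I expect $\min_{a,b\leq0}M(a,b)<-\max\{\mathcal{E}(\thetatrue,\theta),\mathcal{E}(\thetatrue,\theta')\}$ strictly, so the joint probability decays with a strictly larger exponent than $\mathcal{E}_{\min}(\thetatrue)$ and hence is $o(e^{-t\mathcal{E}_{\min}(\thetatrue)}/\sqrt t)$. To verify the strict inequality, evaluate $M$ at $(s(\thetatrue,\theta),0)$, where it equals $-\mathcal{E}(\thetatrue,\theta)$. Then check that the partial derivative in $b$ there, namely the tilted mean of $\bm{\lambda}_t(\thetatrue,\theta')$ under the exponential tilt defined by $s(\thetatrue,\theta)$, is strictly positive. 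Perturbing $b$ slightly negative then strictly decreases $M$, and the symmetric argument handles the other coordinate.

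If that derivative sign turns out not to be positive in some configuration, I would fall back on the alternative of taking any strictly convex combination of the two one-dimensional tilts. Strict convexity of the joint LMGF, which holds because the two variables are not almost surely proportional, then yields the strict gap. The finite delays $d_{jk}$ only change the bounds by $t$-independent constant factors, so they do not affect the conclusion.
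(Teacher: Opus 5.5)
Your skeleton matches the paper's: a per-$\thetatrue$ union bound above, second-order Bonferroni below, and negligible joint terms. Replacing the G\"{a}rtner--Ellis analysis of Theorem~\ref{th:jointhreshcross} by a single Chernoff bound $\mathbb{P}_{\thetatrue}[A_\theta\cap A_{\theta'}]\le C(a,b)\,e^{tM(a,b)}$, at a fixed $(a,b)$ in the negative quadrant, is a legitimate simplification, since only an upper bound on the joint probabilities is needed. The weak point is the strictness step.

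First, the inequality you expect, $\min_{a,b\le0}M(a,b)<-\max\{\mathcal{E}(\thetatrue,\theta),\mathcal{E}(\thetatrue,\theta')\}$, is false in general. Let every agent observe unit-variance Gaussians with mean $0$, $2$, $1$ under $\thetatrue$, $\theta$, $\theta'$, respectively. Both events are thresholds on the same sum of observations, so $A_\theta\subset A_{\theta'}$ for large $t$. The joint exponent is then exactly $\mathcal{E}(\thetatrue,\theta)$, and indeed the tilted mean of $\bm{\lambda}_t(\thetatrue,\theta')$ at $(s(\thetatrue,\theta),0)=(-1/2,0)$ equals $-K/2<0$. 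This is harmless, because you only need to beat $\mathcal{E}_{\min}(\thetatrue)$. The bound $\mathbb{P}[A_\theta\cap A_{\theta'}]\le\min\{p_{k,t}(\thetatrue,\theta),p_{k,t}(\thetatrue,\theta')\}$ already settles every pair unless $\mathcal{E}(\thetatrue,\theta)=\mathcal{E}(\thetatrue,\theta')=\mathcal{E}_{\min}(\thetatrue)$.

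Second, in that remaining case your fallback asserts strict convexity of $M$ because the two LLRs are not a.s. proportional, and this does not follow. Strict convexity fails as soon as they are \emph{affinely} related. In the same Gaussian example, $\bm{\lambda}_{j,t}(\thetatrue,\theta)=2\,\bm{\lambda}_{j,t}(\thetatrue,\theta')+1$ for every $j$, so the bivariate LMGF has a singular Hessian. The paper passes from Lemma~\ref{lem:LLRnondeg} to Lemma~\ref{lem:LMGFsc} by the same inference, so the repair below is needed there too.

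What you actually need is strict convexity of $M$ only along the segment joining $(s_1,0)$ and $(0,s_2)$, where $s_1=s(\thetatrue,\theta)$ and $s_2=s(\thetatrue,\theta')$. Equivalently, $|s_1|\,\bm{\lambda}_t(\thetatrue,\theta)-|s_2|\,\bm{\lambda}_t(\thetatrue,\theta')$ must not be a.s. constant, and this does hold under equal exponents. Write $L_1(s)=\Lambda(s;\thetatrue,\theta)$ and $L_2(s)=\Lambda(s;\thetatrue,\theta')$, and suppose $\bm{\lambda}_t(\thetatrue,\theta)=c\,\bm{\lambda}_t(\thetatrue,\theta')+d$ a.s. with $c>0$. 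Then $L_1(s)=L_2(cs)+ds$, and $L_1(-1)=0$ forces $d=L_2(-c)$. There are three cases:
\begin{itemize}
\item If $c<1$, then $d<0$ (as $L_2<0$ on $(-1,0)$), whence $-\mathcal{E}(\thetatrue,\theta)=L_1(s_1)=L_2(cs_1)+d\,s_1>L_2(cs_1)\ge-\mathcal{E}(\thetatrue,\theta')$, contradicting equal exponents.
\item If $c>1$, inverting the relation gives the opposite strict inequality.
\item If $c=1$, then $d=0$ and $\bm{\lambda}_t(\theta',\theta)=0$ a.s., contradicting your key identity.
\end{itemize}
Hence $M(\alpha s_1,(1-\alpha)s_2)<-\mathcal{E}_{\min}(\thetatrue)$ for $\alpha\in(0,1)$. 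Your derivative route also works: convexity gives $M(0,s_2)\ge M(s_1,0)+s_2\,\partial_bM(s_1,0)$, i.e., $\partial_bM(s_1,0)\ge0$, with equality only if $M$ is affine on that segment, which the argument above excludes.

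With this step in place your argument is complete. Compared with the paper, it avoids the rate-function and boundary analysis, and it does not need the lower half of G\"{a}rtner--Ellis. The price is that it does not identify the exact joint exponent.
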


\begin{IEEEproof}
We prove the claim for the decentralized system, from which the proof for the centralized system will follow immediately. 
To start with, observe that the approximation $\mathscr{P}_{k,t}$ in \eqref{eq:overallpktappdef} is a sum of the approximations $\mathscr{P}_{k,t}(\thetatrue,\theta)$ defined in \eqref{eq:pktappdef}, which decay exponentially with exponents $\mathcal{E}(\thetatrue,\theta)$. 
Accordingly, only the terms featuring the smallest error exponent $\mathcal{E}^{\star}$ in \eqref{eq:smallestexp} will dominate asymptotically. 
Thus, we are interested in evaluating the limit of the following ratio:
\begin{align}
&\frac{\mathscr{P}_{k,t}}
{
e^{-t\,\mathcal{E}^\star}/\sqrt{2\pi\,t}
}
=
\frac
{\sum_{\thetatrue\in\Theta}
\pi(\thetatrue)\sum_{\theta\neq\thetatrue} \mathscr{P}_{k,t}(\thetatrue,\theta)
}
{e^{-t\,\mathcal{E}^\star}/\sqrt{2\pi\,t}}
\nonumber\\
\nonumber\\
&=
\frac
{\sum_{\thetatrue\in\Theta}
\pi(\thetatrue)\sum_{\theta\neq\thetatrue} 
\mathscr{P}_{t}^{\mathrm{cen}}(\thetatrue,\theta)\,
\mathscr{L}_k(\thetatrue,\theta)
}
{e^{-t\,\mathcal{E}^\star}/\sqrt{2\pi\,t}}
\nonumber\\
\nonumber\\
&=
\sum_{\thetatrue\in\Theta}
\pi(\thetatrue)\sum_{\theta\neq\thetatrue}
\dfrac{
\eta_\pi(\thetatrue,\theta)}{\sigma(\thetatrue,\theta)}
\,
e^{
-t\big(\mathcal{E}(\thetatrue,\theta)-\mathcal{E}^\star\big)
}
\mathscr{L}_k(\thetatrue,\theta),
\end{align}
where, in the three equalities, we use  \eqref{eq:overallpktappdef}, \eqref{eq:pktappdef}, and  \eqref{eq:ptcenappdef}, respectively.
From the definition of the smallest exponent $\mathcal{E}^{\star}$ in \eqref{eq:smallestexp} and the set $\mathcal{S}^\star$ in \eqref{eq:setsmallestexp}, we see that
\begin{equation}
\lim_{t\rightarrow\infty}\frac{\mathscr{P}_{k,t}}
{\dfrac{e^{-t\,\mathcal{E}^\star}}{\sqrt{2\pi\,t}}}=
\sum_{(\thetatrue,\theta)\in\mathcal{S}^\star}
\pi(\thetatrue)
\dfrac{\eta_\pi(\thetatrue,\theta)}{\sigma(\thetatrue,\theta)}\,\mathscr{L}_k(\thetatrue,\theta),
\label{eq:worstcterm}
\end{equation}
We conclude from \eqref{eq:worstcterm} that, to prove that $p_{k,t}\simeq \mathscr{P}_{k,t}$, it suffices to show that 
\begin{equation}
\lim_{t\rightarrow\infty}\frac{p_{k,t}}
{\dfrac{e^{-t\,\mathcal{E}^\star}}{\sqrt{2\pi\,t}}}
=
\sum_{(\thetatrue,\theta)\in\mathcal{S}^\star}
\pi(\thetatrue)
\dfrac{\eta_\pi(\thetatrue,\theta)}{\sigma(\thetatrue,\theta)}\,\mathscr{L}_k(\thetatrue,\theta).
\label{eq:equivstatem}
\end{equation}
We proceed then to prove \eqref{eq:equivstatem}.

By applying the union bound, the total error probability is upper bounded by the sum of the pairwise error probabilities as follows:
\begin{equation}
p_{k,t}
\leq
\sum_{\thetatrue\in\Theta}\pi(\thetatrue)
\sum_{\theta\neq\thetatrue} p_{k,t}(\thetatrue,\theta),
\label{eq:simplyUB}
\end{equation}
which, in view of the subadditivity of the limit superior, implies that
\begin{equation}
\limsup_{t\rightarrow\infty}
\frac{p_{k,t}}
{\dfrac{e^{-t\,\mathcal{E}^\star}}{\sqrt{2\pi\,t}}}
\leq
\sum_{\thetatrue\in\Theta}\pi(\thetatrue)
\sum_{\theta\neq\thetatrue} 
\limsup_{t\rightarrow\infty}
\frac{p_{k,t}(\thetatrue,\theta)}
{\dfrac{e^{-t\,\mathcal{E}^\star}}{\sqrt{2\pi\,t}}}.
\label{eq:simplyUBlimsup}
\end{equation}
Now, we know from Theorem~\ref{th:theorem1maintext} that the pairwise error probabilities $p_{k,t}(\thetatrue,\theta)$ are asymptotically equivalent to the approximations $\mathscr{P}_{k,t}(\thetatrue,\theta)$ defined by \eqref{eq:pktappdef}. This implies that in the limit \eqref{eq:simplyUBlimsup}, only the terms with the smallest exponent survive, namely,
\begin{equation}
\limsup_{t\rightarrow\infty}
\frac{p_{k,t}}
{\dfrac{e^{-t\,\mathcal{E}^\star}}{\sqrt{2\pi\,t}}}
\leq
\sum_{(\thetatrue,\theta)\in\mathcal{S}^\star}
\!\!\!
\pi(\thetatrue)
\dfrac{\eta_\pi(\thetatrue,\theta)}{\sigma(\thetatrue,\theta)}\,\mathscr{L}_k(\thetatrue,\theta).
\label{eq:simplyUBlimsup2}
\end{equation}
Therefore, the claim in \eqref{eq:equivstatem} will be proved if we show that the limit inferior fulfills \eqref{eq:simplyUBlimsup2} with the reverse inequality. 

To this end, observe that we can write
\begin{align}
&\mathbb{P}_{\thetatrue}\left[\min_{\theta\neq\thetatrue}\bm{\beta}_{k,t}(\thetatrue,\theta)\leq 0\right]\nonumber\\
&=\mathbb{P}_{\thetatrue}\Bigg[\bigcup_{\theta\neq\thetatrue}
\left\{\bm{\beta}_{k,t}(\thetatrue,\theta)\leq 0\right\}\Bigg]
\leq \mathbb{P}_{\thetatrue}\left[
\bm{\beta}_{k,t}(\thetatrue,\theta)\leq 0\right]\nonumber\\
&-\frac 1 2\,
\underbrace{
\mathbb{P}_{\thetatrue}\left[
\bm{\beta}_{k,t}(\thetatrue,\theta)\leq 0,
\,
\bm{\beta}_{k,t}(\thetatrue,\theta')\leq 0
\right]}_{\triangleq b_t(\thetatrue,\theta,\theta^{\prime})},
\label{eq:Bonferrogenuine}
\end{align}
where the last step follows from the second-order Bonferroni's inequality ~\cite{GalambosBonferroniBook}. Using \eqref{eq:Bonferrogenuine} in \eqref{eq:totalerrdef} we obtain
\begin{equation}
p_{k,t}
\geq 
\sum_{\thetatrue\in\Theta}\pi(\thetatrue)\,
\sum_{\theta\neq\thetatrue} 
\bigg\{
p_{k,t}(\thetatrue,\theta) -\frac 1 2 
\sum_{\substack{\theta'\neq \thetatrue\\
\theta'\neq \theta_{\phantom{0}}}}
b_t(\thetatrue,\theta,\theta^{\prime})
\bigg\}.
\label{eq:Bonferro}
\end{equation}
We focus on the following limit inferior:
\begin{align}
&\liminf_{t\rightarrow\infty}
\frac{p_{k,t}}
{e^{-t\,\mathcal{E}^\star}/\sqrt{2\pi\,t}}
\geq
\sum_{(\thetatrue,\theta)\in\mathcal{S}^\star}
\!\!\!
\pi(\thetatrue)
\dfrac{\eta_\pi(\thetatrue,\theta)}{\sigma(\thetatrue,\theta)}\,\mathscr{L}_k(\thetatrue,\theta)
\nonumber\\
\nonumber\\
&-\frac 1 2
\sum_{\substack{
\thetatrue\in\Theta
\\
\theta^{\phantom{\prime}}:\,(\thetatrue,\theta^{\phantom{\prime}})\in\mathcal{S}^{\star}
\\
\theta':\,(\thetatrue,\theta')\in\mathcal{S}^{\star}
}}
\limsup_{t\rightarrow\infty}
\dfrac{
\pi(\thetatrue)\,b_t(\thetatrue,\theta,\theta^{\prime})}
{
e^{-t\,\mathcal{E}^\star}/\sqrt{2\pi\,t}
},
\label{eq:Bonferroretain}
\end{align}
where the last summation has been restricted to the pairs of hypotheses featuring the smallest exponent $\mathcal{E}^{\star}$ because the probability of the intersection of two events is dominated by the probabilities of the individual events, and, hence, the joint probabilities  $b_t(\thetatrue,\theta,\theta^{\prime})$ cannot exhibit an exponent smaller than $\mathcal{E}(\thetatrue,\theta)$ and $\mathcal{E}(\thetatrue,\theta^{\prime})$. 

By applying Theorem~\ref{th:jointhreshcross} from Appendix~\ref{app:propLLR}, we conclude that the limit superior appearing in \eqref{eq:Bonferroretain} is zero, and \eqref{eq:equivstatem} is proved.
\end{IEEEproof}

\begin{corollary}[\textbf{Fundamental limit on decentralized decision-making}]
\label{th:theoremloss}
Under the same assumptions used in Theorem~\ref{th:theorem1maintext}, any decentralized strategy fulfilling the constraints in Definition~\ref{def:idec} experiences at least the following performance loss with respect to the optimal centralized system:
\begin{equation}
\mathscr{L}_k\triangleq \lim_{t\rightarrow\infty}\frac{p_{k,t}}{p_t^{\mathrm{cen}}}=   \sum_{(\thetatrue,\theta)\in\mathcal{S}^\star}
\omega(\thetatrue,\theta)\,\mathscr{L}_k(\thetatrue,\theta), 
\label{eq:overalloss}
\end{equation}
where
\begin{equation}
\omega(\thetatrue,\theta)\triangleq
\frac{\pi(\thetatrue)\,\eta_\pi(\thetatrue,\theta)/\sigma(\thetatrue,\theta)}
{
\sum\limits_{(\thetatrue^\prime,\theta^\prime)\in\mathcal{S}^\star}
\pi(\thetatrue^\prime)\,\eta_\pi(\thetatrue^\prime,\theta^\prime)/\sigma(\thetatrue^\prime,\theta^\prime)
}.
\label{eq:qqweights}
\end{equation}
\end{corollary}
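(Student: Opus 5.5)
The corollary follows directly from Theorem~\ref{th:mainfinalth}, together with the optimality argument at the start of Sec.~\ref{sec:decperf}. The plan has three steps.

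\emph{Step 1 (reduction to the MAP rule).} Any decentralized strategy of Definition~\ref{def:idec} computes its statistic at agent $k$ and time $t$ as a function of the ensemble \eqref{eq:idealamount}. The MAP rule on that same ensemble minimizes the total error probability among all decision rules measurable with respect to those data. Hence the error probability of any such strategy is bounded below by $p_{k,t}$ in \eqref{eq:totalerrdef}. It therefore suffices to show that $p_{k,t}/p_t^{\mathrm{cen}}$ converges to the right-hand side of \eqref{eq:overalloss}.

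\emph{Step 2 (ratio of limits).} Set $r_t \triangleq e^{-t\mathcal{E}^\star}/\sqrt{2\pi t}$. The proof of Theorem~\ref{th:mainfinalth}, specifically \eqref{eq:equivstatem}, gives
\begin{equation*}
\lim_{t\to\infty}\frac{p_{k,t}}{r_t}=\sum_{(\thetatrue,\theta)\in\mathcal{S}^\star}\pi(\thetatrue)\frac{\eta_\pi(\thetatrue,\theta)}{\sigma(\thetatrue,\theta)}\mathscr{L}_k(\thetatrue,\theta).
\end{equation*}
Setting all $d_{jk}=0$, so that $\mathscr{L}_k\equiv1$, gives the centralized analogue
\begin{equation*}
\lim_{t\to\infty}\frac{p_t^{\mathrm{cen}}}{r_t}=\sum_{(\thetatrue,\theta)\in\mathcal{S}^\star}\pi(\thetatrue)\frac{\eta_\pi(\thetatrue,\theta)}{\sigma(\thetatrue,\theta)}.
\end{equation*}

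\emph{Step 3 (conclusion).} The second limit is finite and strictly positive. This holds because $\mathcal{S}^\star\neq\emptyset$ (as $\Theta$ is finite), $\pi>0$, $\eta_\pi>0$, and $0<\sigma<\infty$; the last point follows from the strict convexity of $\Lambda$ and $s\neq0$. Dividing the two limits therefore yields \eqref{eq:overalloss} with the weights \eqref{eq:qqweights}. These weights are nonnegative and sum to one, so $\mathscr{L}_k\geq1$ follows from \eqref{eq:lossth1}.

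The only point needing care is Step 1. One must make sure the lower bound holds for the total error probability under the prior $\pi$, and not just pairwise. It does, since \eqref{eq:totalerrdef} is exactly the Bayes risk of the MAP rule on the data \eqref{eq:idealamount}. The ties $\bm{\beta}=0$ are immaterial because they have vanishing probability relative to $r_t$. With this in place, the ratio for an arbitrary strategy is at least $p_{k,t}/p_t^{\mathrm{cen}}$, whose limit is $\mathscr{L}_k$.
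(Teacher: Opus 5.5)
Your proposal is correct and follows the paper's own route: apply \eqref{eq:equivstatem} to agent $k$ and, with all $d_{jk}=0$, to the centralized system, then divide the two limits. The claim that every decentralized strategy incurs at least this loss rests, as in the paper, on the MAP-optimality argument at the start of Sec.~\ref{sec:decperf}. Your extra checks (nonzero denominator, $\mathscr{L}_k\geq 1$) are welcome, with one correction: $\sigma(\thetatrue,\theta)>0$ holds because $\Lambda''(s)$ is the variance of a nondegenerate tilted variable, cf.~\eqref{eq:tildedmu2}. Strict convexity alone does not force a positive second derivative at a given point.
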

\begin{IEEEproof}
Applying \eqref{eq:equivstatem} to the centralized case (i.e., setting $\mathscr{L}_k(\thetatrue,\theta)=1$), we have
\begin{equation}
\lim_{t\rightarrow\infty}\frac{p_{t}^{\mathrm{cen}}}
{\dfrac{e^{-t\,\mathcal{E}^\star}}{\sqrt{2\pi\,t}}}
=
\sum_{(\thetatrue,\theta)\in\mathcal{S}^\star}
\pi(\thetatrue)
\dfrac{\eta_\pi(\thetatrue,\theta)}{\sigma(\thetatrue,\theta)}.
\label{eq:equivstatemcent}
\end{equation}
Combining \eqref{eq:equivstatem} and \eqref{eq:equivstatemcent}, we obtain the claim of the theorem.
\end{IEEEproof}
Corollary~\ref{th:theoremloss} establishes that there exists an \emph{irreducible gap} $\mathscr{L}_k$ between the ideal decentralized system and the optimal centralized system. 
Since the performance of the ideal system represents a fundamental limit that any decentralized strategy cannot surpass, this result also implies that \emph{any decentralized strategy cannot reach the performance of the centralized strategy.}

The loss in terms of error probability can also be translated into a \emph{learning delay}, as established by the following corollary.

\begin{corollary}[\textbf{Learning delay}]
\label{th:corolla2}
In the decentralized ideal system, the learning curve pertaining to agent $k$ (i.e., the error probability $p_{k,t}$ as a function of time) experiences an irreducible delay 
\begin{equation}
T_k = \frac{\ln \mathscr{L}_k}{\mathcal{E}^{\star}} 
\label{eq:fundelaydef}
\end{equation}
with respect to the learning curve $p_t^{\mathrm{cen}}$ of the optimal centralized system. Formally, this means that
\begin{equation}
\lim_{t\rightarrow\infty} 
\frac{p_{k,t+T_k}}{p_t^{\mathrm{cen}}}=1.
\label{eq:cor2claim}
\end{equation}
\end{corollary}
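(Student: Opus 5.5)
The plan is to obtain \eqref{eq:cor2claim} from the exact asymptotics of Theorem~\ref{th:mainfinalth}, namely from \eqref{eq:equivstatem} and its centralized counterpart \eqref{eq:equivstatemcent}. Write $C_k\triangleq \sum_{(\thetatrue,\theta)\in\mathcal{S}^\star}\pi(\thetatrue)\,\eta_\pi(\thetatrue,\theta)\,\mathscr{L}_k(\thetatrue,\theta)/\sigma(\thetatrue,\theta)$ and let $C^{\mathrm{cen}}$ be the same sum with $\mathscr{L}_k(\thetatrue,\theta)=1$. Corollary~\ref{th:theoremloss} gives $\mathscr{L}_k=C_k/C^{\mathrm{cen}}$.

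I will split the ratio as
\begin{equation*}
\frac{p_{k,t+T_k}}{p_t^{\mathrm{cen}}}
=
\frac{p_{k,t+T_k}}{e^{-(t+T_k)\mathcal{E}^\star}/\sqrt{2\pi(t+T_k)}}
\cdot
\frac{e^{-t\mathcal{E}^\star}/\sqrt{2\pi t}}{p_t^{\mathrm{cen}}}
\cdot
e^{-T_k\mathcal{E}^\star}\sqrt{\frac{t}{t+T_k}} .
\end{equation*}

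\begin{itemize}
\item The first factor is the ratio in \eqref{eq:equivstatem} evaluated along the shifted sequence $t+T_k\to\infty$, so it converges to $C_k$.
\item The second factor converges to $1/C^{\mathrm{cen}}$ by \eqref{eq:equivstatemcent}.
\item In the last factor, $\sqrt{t/(t+T_k)}\to 1$ because $T_k$ is a fixed constant, while $e^{-T_k\mathcal{E}^\star}=1/\mathscr{L}_k$ by the definition \eqref{eq:fundelaydef}.
\end{itemize}
Multiplying the three limits gives $C_k\cdot(1/C^{\mathrm{cen}})\cdot(1/\mathscr{L}_k)=1$.

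The only delicate point is that $T_k$ need not be an integer, while $p_{k,t}$ is defined only for $t\in\mathbb{N}$. I would handle this in one of two ways. The first is to interpret $p_{k,t+T_k}$ through the exact asymptotic surrogate $\mathscr{P}_{k,t}$ of Theorem~\ref{th:mainfinalth}, which is defined for real $t$. The second is to state the claim with $\lfloor T_k\rceil$ and accept the factor $e^{\pm\mathcal{E}^\star\{\cdot\}}$; this factor disappears only when $T_k$ is an integer. I expect the argument to rely on the first reading, since the corollary is phrased for the continuous learning curve. Apart from this interpretive step, the proof is a direct combination of limits.
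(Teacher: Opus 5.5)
Your proposal is correct and follows essentially the same route as the paper: the paper also evaluates \eqref{eq:equivstatem} along the shifted index $t+T_k$, divides by \eqref{eq:equivstatemcent}, and cancels $e^{-T_k\mathcal{E}^\star}$ against $\mathscr{L}_k$ (your explicit $\sqrt{t/(t+T_k)}\to 1$ factor is the step the paper leaves implicit). Your caveat about non-integer $T_k$ is legitimate and is not addressed in the paper, whose proof simply treats $t+T_k$ as an admissible time index.
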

\begin{IEEEproof}
From \eqref{eq:equivstatem}, we can write
\begin{equation}
\lim_{t\rightarrow\infty}\frac{p_{k,t+T_k}}
{\dfrac{e^{-(t+T_k)\,\mathcal{E}^\star}}{\sqrt{2\pi\,(t+T_k)}}}
=
\sum_{(\thetatrue,\theta)\in\mathcal{S}^\star}
\pi(\thetatrue)
\dfrac{\eta_\pi(\thetatrue,\theta)}{\sigma(\thetatrue,\theta)}\,\mathscr{L}_k(\thetatrue,\theta). 
\label{eq:delayedvers}
\end{equation}
Combining \eqref{eq:equivstatemcent} and \eqref{eq:delayedvers}, we obtain
\begin{align}
\lim_{t\rightarrow\infty}\frac{p_{k,t+T_k}}
{p_t^{\mathrm{cen}}}
&=
e^{-T_k\,\mathcal{E}^{\star}}\,
\frac{\pi(\thetatrue)\,\eta_\pi(\thetatrue,\theta)/\sigma(\thetatrue,\theta)}
{
\sum\limits_{(\thetatrue^\prime,\theta^\prime)\in\mathcal{S}^\star}
\pi(\thetatrue^\prime)\,\eta_\pi(\thetatrue^\prime,\theta^\prime)/\sigma(\thetatrue^\prime,\theta^\prime)
}\nonumber\\
&=
e^{-T_k\,\mathcal{E}^{\star}}\,\mathscr{L}_k,
\label{eq:limratiotoprovedelay}
\end{align}
where in the last step we applied the definition of $\mathscr{L}_k$ from \eqref{eq:overalloss}. The proof is completed by substituting \eqref{eq:fundelaydef} into \eqref{eq:limratiotoprovedelay}.
\end{IEEEproof}
Corollary~\ref{th:corolla2} provides further insight into the learning behavior of decentralized decision-making systems. The takeaway is that \emph{learning is irremediably slowed down due to decentralization.}
To avoid misunderstanding, we remark  that this does not imply in any manner that the decentralized system can attain the centralized performance if one accepts to wait for $T_k$ more time instants. The opposite conclusion is true! In fact, in these additional rounds, the centralized system also collects additional information and learns from it. As a result, the decentralized system will experience a persistent delay and will never attain the performance of the centralized system. 

Examining \eqref{eq:overalloss}, we see that the general expression for the learning delay $T_k$ in \eqref{eq:fundelaydef} depends in a nontrivial manner on the losses $\mathscr{L}_k(\thetatrue,\theta)$ associated with the individual pairs $(\thetatrue,\theta)\in\mathcal{S}^{\star}$ and the relative weights $\omega(\thetatrue,\theta)$ defined by \eqref{eq:qqweights}. The expression simplifies when the losses $\mathscr{L}_k(\thetatrue,\theta)$ pertaining to the pairs $(\thetatrue,\theta)$ within the set $\mathcal{S}^{\star}$ assume the same value (which includes the situation where $\mathcal{S}^{\star}$ contains a single element). Under the equal-loss condition, using \eqref{eq:lossth1} in \eqref{eq:fundelaydef}, we obtain
\begin{equation}
T_k=
\frac{\ln\mathscr{L}_k(\thetatrue,\theta)}{\mathcal{E}(\thetatrue,\theta)}
=\frac
{\sum_{j=1}^K d_{jk}\,
\Lambda_j\Big(s(\thetatrue,\theta);\thetatrue,\theta\Big)}
{\Lambda\Big(s(\thetatrue,\theta);\thetatrue,\theta\Big)},
\label{eq:simplifiedcase}
\end{equation}
where in the last step we used the definition of $\mathcal{E}(\thetatrue,\theta)$ from \eqref{eq:errexpth1def}. 
By further exploiting the definition of $\Lambda\Big(s(\thetatrue,\theta);\thetatrue,\theta\Big)$ from \eqref{eq:sumLMGF}, we see that the weights
\begin{equation}
q_j\triangleq \frac
{\Lambda_j\Big(s(\thetatrue,\theta);\thetatrue,\theta\Big)}
{\Lambda\Big(s(\thetatrue,\theta);\thetatrue,\theta\Big)}
\label{eq:simplifiedweights}
\end{equation}
are convex, i.e., they are nonnegative and add up to $1$. Accordingly, Eq. \eqref{eq:simplifiedcase} can be recast in the form
\begin{equation}
T_k=\sum_{j=1}^K q_j\, d_{jk}.
\label{eq:simplifiedelayform}
\end{equation}
This shows that, under the equal-loss condition, the learning delay experienced by agent $k$ is a convex combination of the distances $d_{jk}$ over the graph. The weight $q_j$ in \eqref{eq:simplifiedweights} accounts for the relative informativeness of agent $j$. Accordingly, we see from \eqref{eq:simplifiedelayform} that each distance $d_{jk}$ contributes to the learning delay $T_k$ depending on the ``inferential relevance'' of agent $j$.

\subsection{Achievability}
\label{sec:achiev}
In the previous section, we established the best attainable performance for any agent $k$ at any time $t$. To evaluate the performance, we computed the error probability corresponding to the maximum amount of data \eqref{eq:idealamount} that agent $k$ could access at time $t$, according to the decentralization constraints as per Definition~\ref{def:idec}. 
A communication protocol that allows each agent $k$ to access the data set \eqref{eq:idealamount} is as follows: at each time $t$, each agent forwards to its neighbors its own fresh observation and the observations received from its neighbors at time $t-1$. In these transmission steps, the agents attach to the data a label denoting which agent the relayed observation belongs to. 

The described scheme has the following drawbacks. First, to implement the MAP rule, each agent needs to know the likelihood models of the other agents, an assumption usually considered too strong in many decentralized settings; second, transmitting the raw data usually violates several constraints arising in practical applications, such as privacy or computational constraints (since the data dimensionality could be large). 
However, it is possible to overcome these limitations by means of the following strategy: at each iteration $t$, each agent forwards to its neighbors the likelihoods computed for the new observation and the likelihoods received from its neighbors at time $t-1$. We see that, in this manner, an agent need not know the likelihood models of the other agents and no raw data are transmitted (note also that the likelihood is always a scalar, so the data dimensionality is not an issue).\footnote{Technically, it is not necessary that the agents send the likelihoods for all $\theta\in \Theta$. It is possible to show that~\cite[Thm. 6.1]{MattaBordignonSayedBook} any belief function can be constructed starting from the likelihood \emph{ratios} computed with respect to some arbitrarily chosen pivot hypothesis, which requires then to share $|\Theta|-1$ likelihood ratios.}
We have thus shown that the limit performance evaluated in Sec.~\ref{sec:decperf} can be attained by a decentralized protocol where each agent possesses only local private likelihood models and where no raw data are transmitted. 

\section{Traditional Social Learning}
\label{sec:tradSL}
In Sec.~\ref{sec:achiev}, we identified two strategies that are able to attain the best possible decentralized performance. 
Observe that these strategies assume that each agent keeps trace of the data or likelihoods received from all the other agents. 
In the theory of social learning, a different approach is usually proposed, which prescribes that each agent builds its belief $\bm{\mu}_{k,t}(\theta)$ through the following strategy:
\begin{align}
\bm{\psi}_{k,t}(\theta)&=
\dfrac
{\bm{\mu}_{k,t-1}(\theta)\,\ell_{k}(\bm{x}_{k,t}|\theta)}{\sum_{\theta'\in\Theta}\bm{\mu}_{k,t-1}(\theta')\,\ell_{k}(\bm{x}_{k,t}|\theta')},
\label{eq:tradSLBayesup}
\\
\bm{\mu}_{k,t}(\theta)&=\dfrac
{\prod_{j\in\mathcal{N}_k}[\bm{\psi}_{j,t}(\theta)]^{a_{jk}}}
{\sum_{\theta'\in\Theta}\prod_{j\in\mathcal{N}_k}[\bm{\psi}_{j,t}(\theta')]^{a_{jk}}}.
\label{eq:tradSLpool}
\end{align}
We see that, at each iteration $t$, each agent $k$ performs the following steps: $i)$ first, it updates an intermediate belief $\bm{\psi}_{k,t}(\theta)$ by incorporating the local fresh information $\bm{x}_{k,t}$ by means of a local Bayesian update, based on the local likelihood $\ell_k(\bm{x}_{k,t}|\theta)$ and the previous belief $\bm{\mu}_{k,t-1}(\theta)$; $ii)$ then, it shares the intermediate beliefs over the network; $iii)$ finally, it combines the intermediate beliefs received from its neighbors by means of a weighted geometric average using some combination weights $a_{jk}\geq 0$. The weights follow the underlying network structure, in the sense that $a_{jk}=0$ when there is no directed link from $j$ to $k$. 
It is customary to collect these weights into a weighted adjacency matrix $A=[a_{jk}]$, a.k.a. combination matrix. 
Moreover, the weights can be chosen so as to make $A$ a \emph{doubly stochastic} matrix, i.e., a matrix whose entries add up to $1$ across the columns and across the rows.
It has been proved that, with doubly stochastic combination matrices, the strategy \eqref{eq:tradSLBayesup}--\eqref{eq:tradSLpool} attains the same error exponent as the centralized system~\cite{MattaBordignonSayedBook}. 

Traditional social learning is fundamentally different from the schemes attaining the limit decentralized performance that we described in Sec.~\ref{sec:achiev}. 
In traditional social learning, each agent does not keep trace of the information flowing over the network, and updates (i.e., overwrites) the same belief function continually. This process is more parsimonious in terms of communication/memory resources. However, the beliefs received by an agent usually contain common pieces of information arriving from multiple paths. Due to the continuous belief overwriting, these redundancies are not disentangled. For this reason, we do not expect traditional social learning to attain the limit performance of decentralized systems. 

To elaborate on this issue, let us consider a decision problem with two hypotheses $\theta\in\Theta=\{0,\nu\}$, which are assumed to be equally distributed \emph{a priori}, i.e., $\pi(0)=1/2$. Each agent $k$ at time $t$ collects an observation $\bm{x}_{k,t}$ distributed as a unit-variance Gaussian, with means $0$ and $\nu$ under the two hypotheses, respectively. 
Let
\begin{equation}
\varepsilon_k\triangleq \frac{\Delta}{4} \sum_{j=1}^K \sum_{\tau=1}^\infty 
\Big(
[A^\tau]_{jk}\,K - 1
\Big)^2,
\label{eq:tradSLosstermeps}
\end{equation}
where $[A^\tau]_{jk}$ denotes the $(j,k)$ entry of the power matrix $A^\tau$. 
It has been shown in~\cite{ourEUSIPCOpaperarxiv2025} that, when $A$ is a primitive matrix, the loss experienced by agent $k$ in traditional social learning \eqref{eq:tradSLBayesup}--\eqref{eq:tradSLpool} is 
\begin{equation}
\mathscr{L}_k^{\mathrm{SL}}=\exp(\varepsilon_k).
\label{eq:tradSLoss}
\end{equation}
Let us also evaluate the optimal decentralized performance loss $\mathscr{L}_k$ predicted by Theorem~\ref{th:mainfinalth}. 
It is a straightforward exercise to prove that the log likelihood ratios under the two hypotheses are distributed as follows: 
\begin{align}
\bm{\lambda}_{k,t}(0,\nu)&\sim \mathcal{G}(\Delta,2\Delta),\qquad\textnormal{under $\thetatrue=0$}
\\
\bm{\lambda}_{k,t}(\nu,0)&\sim \mathcal{G}(\Delta,2\Delta),\qquad\textnormal{under $\thetatrue=\nu$},
\end{align}
where $\Delta\triangleq \nu^2/2$ is the KL divergence between the distributions (which, in this particular example, is the same under both hypotheses), while the notation $\mathcal{G}(a,b)$ denotes a Gaussian distribution with mean $a$ and variance $b$. 
Recalling that the LMGF of a Gaussian variable with mean $a$ and variance $b$ is equal to $a t + b \, t^2 /2$, we conclude that the LMGFs of the log likelihood ratios for each agent $k$ are equal under both hypotheses and are given by  
\begin{equation}
\Lambda_k(s;0,\nu)=\Lambda_k(s;\nu,0)=\Delta \, s \left(
1+s 
\right).
\label{eq:individualLMGFex1}
\end{equation}
Likewise, the \emph{global} LMGFs in \eqref{eq:sumLMGF} are given by
\begin{equation}
\Lambda(s;0,\nu)=\Lambda(s;\nu,0)=K \Delta \, s \left(
1+s 
\right),
\label{eq:globalLMGFex1}
\end{equation}
which implies that the stationary equation \eqref{eq:stateqth} has the solution
\begin{equation}
s(0,\nu)=s(\nu,0)=-\frac 1 2.
\label{eq:s0solutonehalf}
\end{equation}
Substituting this result into \eqref{eq:individualLMGFex1}, we find that the performance losses under both hypotheses are
\begin{equation}
\mathscr{L}_k(0,\nu)=\mathscr{L}_k(\nu,0)=\exp\left(
\frac{\Delta}{4} \sum_{j=1}^K d_{jk}
\right).
\end{equation}
Since the weights in \eqref{eq:qqweights} add up to $1$, we conclude that the overall loss in \eqref{eq:overalloss} for agent $k$ is
\begin{equation}
\mathscr{L}_k=\exp\left(
\frac{\Delta}{4} \, \sum_{j=1}^K d_{jk}
\right).
\label{eq:lossex1init}
\end{equation}
We now show that the optimal loss in \eqref{eq:lossex1init} is always smaller than the loss in \eqref{eq:tradSLoss} that characterizes traditional SL. 
To this aim, we start by observing that the entries of the matrix powers, $[A^\tau]_{jk}$, are nonzero iff there exists a path of length $\tau$ that starts at $j$ and ends at $k$~\cite{HornJohnson,MattaBordignonSayedBook}. According to Definition~\ref{def:dist}, the shortest path from $j$ to $k$ has length $d_{jk}+1$, which means that $\forall \tau\leq d_{jk}$ we have $[A^\tau]_{jk}=0$, implying
\begin{align}
&\sum_{\tau=1}^{\infty} \Big(
[A^\tau]_{jk}\,K - 1
\Big)^2=
\sum_{\tau=1}^{d_{jk}} \Big(
\underbrace{[A^\tau]_{jk}\,K}_{=0} - 1
\Big)^2\nonumber\\
&+
\sum_{\tau=d_{jk}+1}^{\infty} \Big(
[A^\tau]_{jk}\,K - 1
\Big)^2\nonumber\\
&=d_{jk} +
\sum_{\tau=d_{jk}+1}^{\infty} \Big(
[A^\tau]_{jk}\,K - 1
\Big)^2.
\label{eq:secondsumspecialbound}
\end{align}
It is well-known that, when $A$ is primitive, all columns in the matrix power $A^{\tau}$ converge to the Perron vector~\cite{HornJohnson,MattaBordignonSayedBook}. For doubly stochastic matrices, this vector has all its entries equal to $1/K$~\cite{HornJohnson,MattaBordignonSayedBook}. Since, in general, the convergence of $[A^\tau]_{jk}$ to $1/K$, takes place asymptotically,\footnote{
The reader might wonder whether the matrix power entries can converge to the Perron vector entries in a finite number of steps. From a theoretical standpoint, this can happen for some exceptional topologies. 
We now show that, even in this special case, the decentralized system experiences a loss. Assume the extreme situation where the convergence is met after only two steps, i.e., $[A^{\tau}]_{jk}=1/K$ for all $\tau>1$. Note that, since all agents are connected in two hops, we have $d_{jk}=1$ for $j\notin\mathcal{N}_k$. From \eqref{eq:tradSLosstermeps}, \eqref{eq:tradSLoss}, and \eqref{eq:lossex1init}, we conclude that
\begin{equation}
\mathscr{L}_k^{\mathrm{SL}} = 
\mathscr{L}_k\times 
\exp\left(
\sum_{j\in\mathcal{N}_k} 
\Big(
a_{jk}\,K - 1
\Big)^2
\right).
\end{equation}
Thus, to achieve $\mathscr{L}_k^{\mathrm{SL}} = 
\mathscr{L}_k$, we must have $a_{jk}=1/K$ for all $j\in\mathcal{N}_k$. However, this means that agent $k$ should be connected to all other agents. Moreover, this cannot happen for all agents, but for the  trivial case where the network is fully connected.} we conclude that the summation on the RHS is positive, yielding, in view of \eqref{eq:tradSLosstermeps}, \eqref{eq:tradSLoss}, and \eqref{eq:lossex1init},
\begin{equation}
\mathscr{L}_k^{\mathrm{SL}} > \mathscr{L}_k.
\label{eq:tradSLvsdecloss}
\end{equation}
In summary, the limit decentralized performance in Theorem~\ref{th:mainfinalth} can be attained by the approaches illustrated in Sec.~\ref{sec:achiev}.
Now we have shown, by means of a counterexample, that in traditional social learning the limit decentralized performance is not attained in general. 

These results pave the way for a new research question, which can be regarded as a \emph{rate/distortion} formulation of decentralized decision-making. 
At one extreme, we can consider the scheme that attains the limit decentralized performance as the most expensive scheme in terms of communication/memory resources. 
In comparison, traditional social learning schemes save resources by maintaining and overwriting one belief vector for each agent. 
Then, after defining a suitable metric to quantify the communication/memory expense (i.e., the \emph{rate}), it would be legitimate to investigate the possibility of designing new decentralized strategies that attain an error probability (i.e., the distortion) closer to the optimal decentralized scheme, possibly at the price of an increased rate with respect to traditional SL. The most ambitious and challenging goal would be to establish the optimal rate/distortion curve for decentralized decision-making, i.e., the best error probability achievable for any rate budget.

\section{Illustrative Examples}
In this section, we apply the performance characterization provided by Theorem~\ref{th:mainfinalth} to some exemplary scenarios. Specifically, in Sec.~\ref{sec:ex1} we examine the impact of two classic graph structures on the decentralized performance loss, whereas in Sec.~\ref{sec:ex2} we examine the interplay between the inference power of the agents and their connectivity.

\subsection{Scaling Laws under Different Graphs}
\label{sec:ex1}
We want to establish how the performance slowdown scales with the number of agents, under two different network constructions, namely, a \emph{ring} topology and an \emph{Erd\H{o}s-R\'enyi random graph}. To this end, we examine the scaling law for the learning delay $T_k$ from Corollary~\ref{th:corolla2}.
In order to magnify the role of the topology itself, we focus on an inference problem where all agents are equally informed. Later, in the next section, we consider the case where different agents, with different connectivity, have unequal inference power. Specifically, in this section we consider the binary decision problem illustrated in Sec.~\ref{sec:tradSL}, for which the optimal decentralized loss is represented by \eqref{eq:lossex1init}. 
Since, in this example, the losses are equal under both hypotheses, we can use the simplified expression \eqref{eq:simplifiedelayform} to evaluate the learning delay $T_k$. 
From \eqref{eq:individualLMGFex1},  \eqref{eq:globalLMGFex1}, and \eqref{eq:s0solutonehalf}, we see that, under the considered setting, the weights in \eqref{eq:simplifiedweights} are uniform, i.e., $q_j=1/K$ for all $j\in\mathcal{K}$, yielding
\begin{equation}
T_k=\frac 1 K \sum_{j=1}^K d_{jk}.
\label{eq:arithmeticdelay}
\end{equation}
We conclude that the weighted structure in \eqref{eq:simplifiedelayform} (which, we recall, holds because the losses are equal under both hypotheses) translates into the arithmetic average of the distances from any agent $j$ to agent $k$. This is because we are considering equally informed agents and, hence, the weights $q_j$ that quantify the relative importance of each agents are uniform, therefore bringing no difference across the agents.

\begin{figure}
\centering
\includegraphics[width=0.3\linewidth]{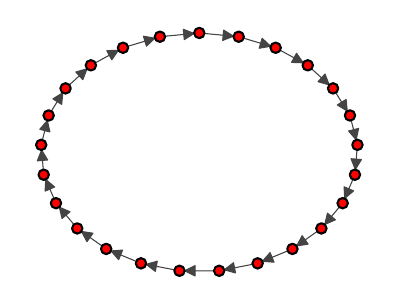}
\caption{The directed ring topology used in Sec.~\ref{sec:ex1}, with $K = 25$.}
\label{fig:ring}
\end{figure}

\begin{figure}
\centering
\includegraphics[width=0.3\linewidth]{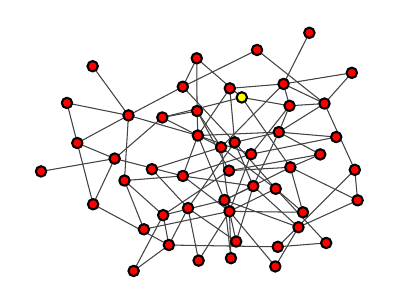}
\caption{One realization of the Erd\H{o}s-R\'enyi graph used in Sec.~\ref{sec:ex1}, with $c_K = 0.1 \ln(K)$ and $K = 50$.}
\label{fig:er-topology}
\end{figure}

\subsubsection{Ring topology} Let us consider the graph depicted in Fig.~\ref{fig:ring}. Observe that any agent $k$ has exactly one neighbor, one agent at distance $d_{jk}=1$, one agent at distance $2$, and so on, until the farthest agent, which is located at distance $K-2$ (recall that in our notation the distance is given by the number of hops \emph{minus $1$}). 
According to this description, the learning delay in \eqref{eq:arithmeticdelay} is given by
\begin{align}
T_k&=\frac 1 K \sum_{j=1}^K d_{jk}=
\frac 1 K\sum_{j=1}^{K-2} j=\frac{(K-1)(K-2)}{2\,K}
\nonumber\\
&=K\,
\left(1-\frac 1 K\right)\left(1-\frac 2 K\right),
\label{eq:lossex1ring}
\end{align}
which reveals that, over a ring topology, the learning delay for the decentralized system \emph{grows linearly with the network size}.

\subsubsection{Erd\H{o}s-R\'enyi graphs}
The ring topology is a very structured topology that does not favor inter-agent communication. We now switch to a graph model that lies somehow in the opposite extreme, namely, a \emph{random graph}. To avoid added complexity, we focus on the simplest generative mechanism, which is the Erd\H{o}s-R\'enyi model, where each edge is present with some probability. Since we are interested in connected networks, we focus on the regime where the graph is connected with probability that approaches $1$ as $K\rightarrow\infty$. It is known that this regime is characterized by a connection probability that scales with $K$ as~\cite{ErdosRenyi,RGbook}
\begin{equation}
\rho_K=\frac{\ln K + c_K}{K}, 
\end{equation}
where $c_K$ is any sequence that tends to $\infty$ as $K\rightarrow\infty$.

\begin{figure}
\centering
\includegraphics[width=0.5\linewidth]{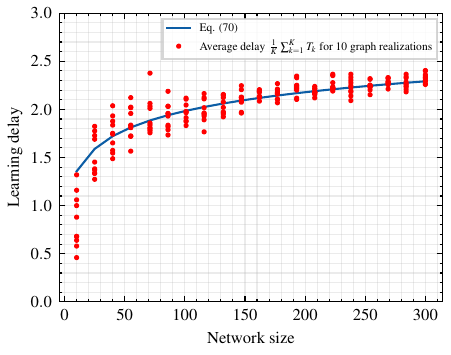}
\caption{Learning delay under Erd\H{o}s-R\'enyi network topologies as $K$ varies. $c_K = 0.1 \ln(K)$, $\Delta = 0.05$.}
\label{fig:er-performance-loss}
\end{figure}

\begin{figure}
\centering
\includegraphics[width=0.5\linewidth]{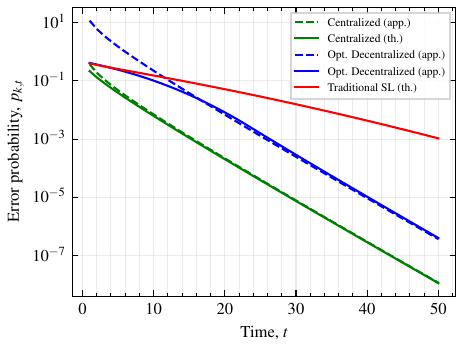}
\caption{Error probabilities of the considered decentralized decision-making strategies for the problem in Sec.~\ref{sec:ex1}, with $\Delta = 0.05$. The agents are connected according to the ring topology shown in Fig.~\ref{fig:ring}.}
\label{fig:ringproberr}
\end{figure}

\begin{figure}
\centering
\includegraphics[width=0.5\linewidth]{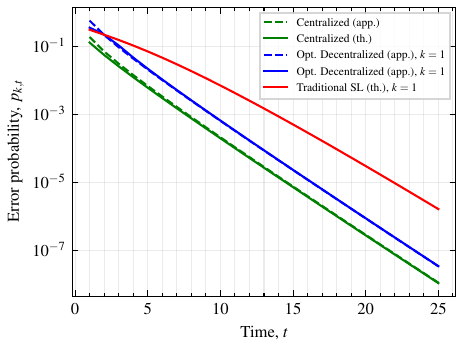}
\caption{Error probabilities of the considered decentralized decision-making strategies for the problem in Sec.~\ref{sec:ex1},  with $\Delta = 0.05$. The agents are connected according to the Erd\H{o}s-R\'enyi topology shown in Fig.~\ref{fig:er-topology}.}
\label{fig:ERproberr}
\end{figure}
In order to capture the scaling law of the loss with the number of agents, we resort to the following approximation for the typical distance between two network nodes~\cite{APLphysrevE}:
\begin{equation}
d_{jk} \approx \frac{\ln K - \gamma}{\ln (K \rho_K)}-\frac 1 2,
\label{eq:apptypdist}
\end{equation}
where $\gamma$ is Euler's constant. 
Substituting \eqref{eq:apptypdist} into \eqref{eq:arithmeticdelay}, we obtain
\begin{equation}
T_k\approx \frac{\ln K - \gamma}{\ln (K \rho_K)}-\frac 1 2.
\label{eq:ERapproxperf}
\end{equation}
Considering some slowly increasing sequence, e.g., $c_K\propto \ln K$, we see that the main scaling law for the learning delay $T_k$, under an Erd\H{o}s-R\'enyi graph, is approximately logarithmic with the network size, which is exponentially better than the law characterizing the ring topology. 
To check the goodness of the approximation in \eqref{eq:apptypdist}, in Fig.~\ref{fig:er-performance-loss} we plot the theoretical learning delay \eqref{eq:ERapproxperf} along with the learning delay averaged over the agents, for $10$ realizations of random graphs. We see that the random points follow well the logarithmic trend captured by \eqref{eq:ERapproxperf}.

In summary, by examining the ring and the random Erd\H{o}s-R\'enyi graphs, we find that: $i)$ over a ring topology, the learning delay caused by the decentralization grows linearly with the network size, while $ii)$ over an Erd\H{o}s-R\'enyi random graph, the significant reduction in the inter-agent distances (which scale logarithmically with the network size) yields an exponential reduction in the learning delay.

To gain further insight, in Fig.~\ref{fig:ringproberr} (ring topology) and Fig.~\ref{fig:ERproberr} (Erd\H{o}s-R\'enyi graph), we display the error probabilities for the optimal centralized decision system, the optimal decentralized decision system, and the traditional social learning strategy described in Sec.~\ref{sec:tradSL}. For the ring topology, we see from Fig.~\ref{fig:ringproberr} that the best decentralized system exhibits a loss of about two orders of magnitude with respect to the centralized MAP. Consistently with our theoretical analysis, this gap is \emph{not} reduced as $t$ increases; in fact, the curves pertaining to the centralized and decentralized systems stay parallel, showing the constant learning delay predicted by Corollary~\ref{th:corolla2}. Moreover, as $t$ increases, the predictions for the error probabilities in Theorem~\ref{th:mainfinalth} (dashed line) match the exact probabilities (solid line). 
The red curve depicts the performance achieved by traditional social learning. As revealed by \eqref{eq:tradSLvsdecloss}, traditional social learning exhibits a finite gap with respect to the best decentralized strategy, and we see that this gap is not negligible. All the error probabilities are evaluated exactly in this Gaussian problem~\cite{ourEUSIPCO2025}. 

The considerations made for the ring topology are essentially the same for the random graph in Fig.~\ref{fig:ERproberr}. Comparing this figure against Fig.~\ref{fig:ringproberr}, the main difference is that the loss experienced by the optimal decentralized scheme is reduced, which is consistent with the fact that the learning delay for the Erd\H{o}s-R\'enyi graph shown in \eqref{eq:ERapproxperf} is slower than the delay for the ring topology in \eqref{eq:lossex1ring}.

\begin{figure}
\centering
\includegraphics[width=0.5\linewidth]{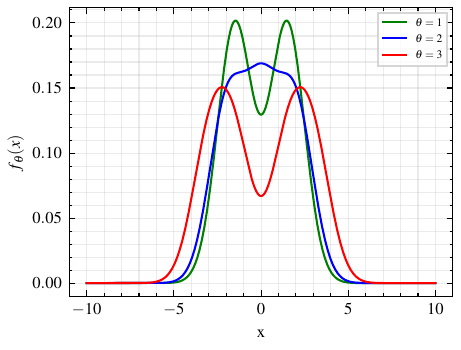}
\caption{Family of pdfs used in the decision problem from Sec.~\ref{sec:ex2}.}
\label{fig:mixt}
\end{figure}

\begin{table}[t]\small
\centering
\begin{tabular}{c|c|c|c|}
\cline{2-4}&
\multicolumn{3}{|c|}{
{\bf likelihood model} $\ell_j(\cdot|\theta)$}\\
\hline
\multicolumn{1}{|c|}{{\bf agent}}& 
$\theta=1$ & 
$\theta=2$ & 
$\theta=3$
\\
\hline
\multicolumn{1}{|c|}{$j=1,2,\ldots,45$}&
$f_1$ & $f_2$ & $f_2$
\\
\hline
\multicolumn{1}{|c|}{$j=46,47,\ldots,50$}&
$f_1$ & $f_2$ & $f_3$\\
\hline
\end{tabular}
\vspace{5pt}
\caption{Likelihood assignment for the example in Sec.~\ref{sec:ex2}.}
\label{tab:ident}
\end{table}

\subsection{Interplay Between Inference and Topology}
\label{sec:ex2}
In this section, we consider a decision problem where $K=50$ agents feature different levels of informativeness. Specifically, the decision problem consists of three possible hypotheses (namely, we set $\Theta=\{1,2,3\}$), which are assumed to be uniformly distributed a priori. 
The statistical distributions giving rise to the data are chosen from a family of Gaussian mixtures. Specifically, there are $3$ possible pdfs, which differ by the number of components and weights in the mixture. Specifically, denoting by $g(x)$ the pdf of the standard normal, we have the following three pdfs: 
\begin{align}
f_1(x)&=\frac 1 2 \, g(x+1.5) + \frac 1 2 \, g(x-1.5),\nonumber\\
f_2(x)&=\frac 1 3 \,g(x+2) + \frac 1 3 \,g(x) + \frac 1 3 \,g(x-2),\nonumber\\
f_3(x)&=\frac 1 4 \,g(x+3) + \frac 1 4 \,g(x+1.5) \nonumber\\
&+ \frac 1 4 \,g(x-1.5) + \frac 1 4 \,g(x-3),
\end{align}
which are displayed in Fig.~\ref{fig:mixt}.
The agents' likelihoods are chosen from these pdfs, and we assume that not all hypotheses are distinguishable for all agents. Specifically, the agents are clustered into two groups, corresponding to the likelihood assignment in Table~\ref{tab:ident}. 
For the first $45$ agents, we see from the table that, when $\theta=1$, the likelihood is given by $f_1$, while the likelihoods corresponding to $\theta=2$ and $\theta=3$ are equal to $f_2$. Accordingly, the agents in the first group are able to distinguish hypothesis $1$ from $2$ and $3$, but they confuse $2$ and $3$. 
The remaining $5$ agents are more informative, as they are able to distinguish all the hypotheses. 
The goal of this section is to show how formula \eqref{eq:overalloss} quantifies the interplay between the agents' informativeness and the network topology. 
To this end, we will examine the two topologies represented in Figs.~\ref{fig:ex2net1} and~\ref{fig:ex2net2}, respectively. 

\begin{figure}
\centering
\includegraphics[width=0.5\linewidth]{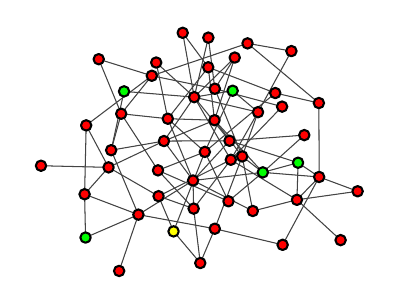}
\caption{The first topology used for the example in Sec.~\ref{sec:ex2}, with the most informative agents highlighted in green (5 out of 50 agents). Agent $1$, whose performance is displayed in Fig.~\ref{fig:perf1ex2}, is highlighted in yellow.}
\label{fig:ex2net1}
\end{figure}

\begin{figure}
\centering
\includegraphics[width=0.5\linewidth]{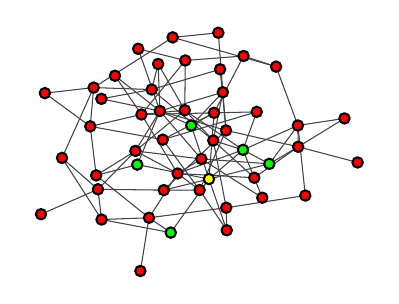}
\caption{The second topology used for the example in Sec.~\ref{sec:ex2}, with the most informative agents highlighted in green (5 out of 50 agents). Agent $1$, whose performance is displayed in Fig.~\ref{fig:perf1ex2}, is highlighted in yellow.}
\label{fig:ex2net2}
\end{figure}

In order to compute the losses $\mathscr{L}_k(\thetatrue,\theta)$ defined by \eqref{eq:lossth1} and appearing in \eqref{eq:overalloss}, we need to evaluate first the solution $s(\thetatrue,\theta)$ to the stationary equation \eqref{eq:stateqth}, and then the LMGFs
\begin{equation}
\Lambda_j(s(\thetatrue,\theta);\thetatrue,\theta).
\end{equation}
Then, after computing $\mathscr{L}_k(\thetatrue,\theta)$ for all pairs $(\thetatrue,\theta)$, we also evaluate the weights \eqref{eq:qqweights}, and finally obtain \eqref{eq:overalloss}.
Actually, in this example the LMGFs of the log likelihood ratios do not admit a closed form. Therefore, all the pertinent calculations have been performed numerically.

We are now ready to examine how the topologies in Figs.~\ref{fig:ex2net1} and~\ref{fig:ex2net2} affect performance. Specifically, for illustrative purposes we focus on the error probability of agent $1$ (represented by the circle highlighted in yellow).
\begin{figure}
\centering
\includegraphics[width=0.5\linewidth]{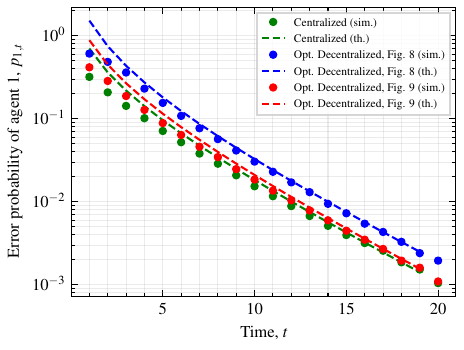}
\caption{Performance loss experienced by agent $1$ under the decision problem considered in Sec.~\ref{sec:ex2}, for the two topologies shown in Figs.~\ref{fig:ex2net1} and~\ref{fig:ex2net2}. The error probabilities are computed over $10000$ Monte Carlo trials.}
\label{fig:perf1ex2}
\end{figure}
Inspecting jointly Table~\ref{tab:ident} and the two topologies, we see that: $i)$ in Fig.~\ref{fig:ex2net1} the most informative agents (highlighted in green) are relatively distant from agent $1$, which is basically surrounded by the partially informative agents $j=1,2,\ldots,45$; while $ii)$ in Fig.~\ref{fig:ex2net2}, agent $1$ is surrounded by the most informative agents. Let us see how this difference affects the performance loss experienced by agent $1$, which is shown in Fig.~\ref{fig:perf1ex2}. 
We see that the loss is higher for the graph in Fig.~\ref{fig:ex2net1}. This is because, in that graph, the most informative agents are distant, which means that more information is lost due to decentralization. 
In comparison, for the topology in Fig.~\ref{fig:ex2net2} no loss is associated with the most informative agents, since they are immediate neighbors of agent $1$. 

Remarkably, the behavior observed empirically matches well the theoretical results. In fact, the analytical characterization provided by Theorems~\ref{th:theorem1maintext} and~\ref{th:mainfinalth} offer reliable formulas (see the dashed curves in Fig.~\ref{fig:perf1ex2}) that are able to quantify precisely how the agents' informativeness and connectivity interact and impact the final error probability performance.

\section{Conclusion and Future Work}
In this work, we established a fundamental limit on the error probability performance for decentralized decision systems. This limit probability admits a closed form that highlights the role of the network topology and the decision problem. The former appears in the analytical formulas through the shortest distances between pairs of agents. The latter is represented by quantities relative to the moment generating functions of the random observations collected by the agents. The analysis revealed that, contrary to a widespread belief, the performance of \emph{any} decentralized decision strategy \emph{cannot reach} the performance of the corresponding centralized system. Remarkably, even in the long run, i.e., over infinite streams of data, \emph{this gap persists}. 
We also showed that the gap in terms of error probability translates into a \emph{learning delay} that slows down the decentralized system irremediably.

We exploited the aforementioned theoretical formulas to gain insight into useful network structures. 
First, to magnify the impact of the network topology on the decentralization loss, we considered a scenario where all agents are equally informative. We established that in this case the performance loss scales exponentially with the sum of the network distances. Accordingly, we showed that the error probability for the best decentralized strategy can be larger than the optimal centralized error probability \emph{by orders of magnitude}.
We examined two standard network constructions, namely, the ring topology and the Erd\H{o}s-R\'enyi random graphs. 
In the former case, the learning delay of any agent \emph{grows linearly with the number of agents}. This is due to the fact that, over a ring network, the average distance between two agents scales linearly with the number of agents. Over the Erd\H{o}s-R\'enyi graphs, the learning delay is significantly reduced, namely, it \emph{grows logarithmically with the number of agents}. This is consistent with the fact that, according to the theory of random graphs, the average distance between two nodes over an Erd\H{o}s-R\'enyi topology scales logarithmically with the network size.

Second, we used the theoretical formulas to quantify the interplay between the topology and the agents' informativeness. While it might be intuitive that it is beneficial for an agent to be closer to the most informative agents, the evaluation of the error probability depends in a nontrivial manner on the interplay between the agents' connectivity (quantified by the distances over the graph) and the level of informativeness (quantified by the moment generating functions). We showed that our formulas are able to provide reliable predictions for the agents' performance under different conditions where the position of the most informative agents in the network is varied.

The derived performance limit should be used as a benchmark for the performance of any decentralized strategy. 
For example, we showed that traditional social learning schemes exhibits a finite gap with respect to the optimal decentralized strategy. In this connection, one useful extension would be the analytical evaluation of the performance achieved by existing decision-making schemes, to quantify how far they are from the best attainable decentralized performance.  

Another future research direction is to design decentralized decision schemes under different constraints (e.g., communication constraints) and exploit the rate-distortion tradeoff to quantify how the gap with respect to the best decentralized performance varies with the available communication resources. 

Finally, our analysis leads to a useful comparison between the two main inferential problems, namely, classification and estimation. In the former case, one deals with categorical variables (the hypotheses), for which it is necessary to employ a hit-or-miss metric like the error probability. In the latter problem, one can resort to error measures suited for real values, like the mean-square-error. For decentralized estimation problems, the optimal mean-square-error can be asymptotically achieved. In contrast, we have proved that the optimal error probability cannot be achieved by any decentralized strategy. 
This highlights an intrinsic difference between decision-making and estimation. 
At the same time, the results in the present work suggest that, under interval estimation (where the error probability is used for the performance evaluation of an estimator) the asymptotic optimality observed in the mean-square-error sense needs to be revisited.

\appendices

\section{Exact Asymptotics of Certain Random Sums}
\label{app:asysums}
In the following theorems, we will consider a doubly indexed sequence of random variables $\bm{\xi}_{j,\tau}$, for $j\in\mathcal{J}$ and $\tau\in\mathbb{N}$, which fulfill the following assumption.

\begin{assumption}[\textbf{Regularity conditions on} $\bm{\xi}_{j,\tau}$]
\label{assum:csijtau}
The random variables $\bm{\xi}_{j,\tau}$ are absolutely continuous with respect to the Lebesgue measure on $\mathbb{R}$, and statistically independent (across both indices $j$ and $\tau$). 
For all $\tau$, they share the same pdf $f_j$, with
\begin{equation}
\mathbb{E}\,\bm{\xi}_{j,\tau}>0,\qquad \inf \mathrm{Supp}(f_j)<0,
\label{eq:posnegcond}
\end{equation}
where $\mathrm{Supp}(f_j)$ is the support of $f_j$~\cite[Def. E.1]{MattaBordignonSayedBook}. For each $f_j$, the characteristic function 
\begin{equation}
\varphi_j(s)\triangleq \mathbb{E} \exp\big(\iota\, s\, \bm{\xi}_{j,\tau}\big)
\end{equation} (with $s\in\mathbb{R}$ and $\iota=\sqrt{-1}$) fulfills the following bound:
\begin{equation}
|\varphi_j(s)| \leq \frac{c}{|s|^\delta},\qquad |s|\geq r,
\label{eq:boundonchf}
\end{equation}
for some positive constants $\delta, c,$ and $r$.
Moreover, the log moment generating function 
\begin{equation}
\Lambda_j(s)\triangleq \ln \mathbb{E} \exp\big(s\, \bm{\xi}_{j,\tau}\big)
\end{equation} 
is finite for all $s\in\mathbb{R}$.
\hfill$\square$
\end{assumption}
Equation \eqref{eq:boundonchf} represents a mild condition that is verified, for instance, when the pdf $f_j$ has an integrable first derivative~\cite[Lemma 4, Chap. XV]{Feller2}. 
Regarding the finiteness of the LMGF, some examples satisfying this condition are: all random variables with finite support, the Gaussian distribution, or any Gaussian mixture.

Before stating the main result of this appendix, it is useful to introduce the aggregate variable
\begin{equation}
\bm{\xi}_\tau\triangleq\sum_{j\in\mathcal{J}}\bm{\xi}_{j,\tau} ,\quad \tau\in\mathbb{N},
\label{eq:barcsitau}
\end{equation}
along with its LMGF
\begin{equation}
\Lambda(s)\triangleq \ln \mathbb{E}\,\exp\big(s \, \bm{\xi}_\tau\big)= \sum_{j\in\mathcal{J}} \Lambda_j(s),
\end{equation}
where the equality follows from the additivity of the LMGF for independent variables. Note that, since each $\bm{\xi}_{j,\tau}$ is nondeterministic (being absolutely continuous with respect to the Lebesgue measure on $\mathbb{R}$), then each LMGF $\Lambda_j(s)$ is strictly convex, which implies that so is $\Lambda(s)$~\cite{DemboZeitouni,DenHollander,MattaBordignonSayedBook}. 
Now, it is known that the limit of the first derivative\footnote{We recall that, since by assumption the LMGF $\Lambda_j(s)$ is finite for all $s\in\mathbb{R}$, it is infinitely differentiable on $\mathbb{R}$~\cite{MattaBordignonSayedBook}.} $\Lambda^{\prime}(s)$ as $s\rightarrow -\infty$ (resp., $s\rightarrow +\infty$) is equal to the infimum (resp., the supremum) of the support of the pdf of $\bm{\xi}_\tau$~\cite[proof of Lemma E.1]{MattaBordignonSayedBook}. On the other hand, from \eqref{eq:posnegcond} we know that the infimum of the support of $f_j$ is negative, while the supremum is positive (because the mean $\mathbb{E}\,\bm{\xi}_{j,\tau}$ is positive). 
This implies that the stationary equation
\begin{equation}
\Lambda^{\prime}(s_0)=\sum_{j\in\mathcal{J}} \Lambda_j^{\prime}(s_0)=0 
\label{eq:stateq}
\end{equation}
has always a solution, which is the unique minimizer of $\Lambda(s)$ in view of the strict convexity.

Moreover, since the first derivative of the LMGF evaluated in $0$ is equal to the mean, we have $\Lambda^{\prime}(0)>0$. This implies that the minimizer of $\Lambda(s)$ is located on the negative axis. Moreover, since $\Lambda(0)=0$ by definition, the minimum of $\Lambda(s)$ is negative. In summary, we have
\begin{equation}
s_0<0,\qquad \Lambda(s_0)=\sum_{j\in\mathcal{J}} \Lambda_j(s_0)<0.
\end{equation}

\begin{theorem}[\textbf{Exact asymptotics of some useful random sums}]
\label{th:theorapp}
Let the random variables $\bm{\xi}_{j,\tau}$, for $j\in\mathcal{J}$ and $\tau\in\mathbb{N}$, satisfy Assumption~\ref{assum:csijtau}. For $j\in\mathcal{J}$, let $d_{j}$ be nonnegative integer numbers. 
For $t>\max\limits_{j\in\mathcal{J}}{d_j}$, define the random sequence \begin{equation}
\bm{z}_t=
\sum_{j\in\mathcal{J}} \sum_{\tau=1}^{t-d_j} \bm{\xi}_{j,\tau} 
+ \gamma,
\label{eq:ztdefTh1}
\end{equation}
where $\gamma\in\mathbb{R}$. 
Let $s_0<0$ be the solution to the stationary equation \eqref{eq:stateq} and define
\begin{equation}
\sigma^2_t\triangleq s_0^2\,t\,\sum_{j\in\mathcal{J}} \Lambda_j^{\prime\prime}(s_0)
\label{eq:sigmatdef}
\end{equation}
and
\begin{equation}
p_t\triangleq 
\frac{1}{\sqrt{2\pi\,\sigma^2_t}}
\exp\Bigg(t\,\sum_{j\in\mathcal{J}}\Lambda_j(s_0)
- \sum_{j\in\mathcal{J}} d_j \Lambda_j(s_0)
-|s_0|\,\gamma  
\Bigg).
\label{eq:ptdef}
\end{equation}
Then,
\begin{equation}
\lim_{t\rightarrow\infty}
\frac
{\mathbb{P}[\bm{z}_t\leq 0]}
{p_t}=1.
\label{eq:claimth1}
\end{equation}
\end{theorem}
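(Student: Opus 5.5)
The proof will be a Bahadur--Rao-type exact large deviation argument: an exponential change of measure (tilting) at $s_0$, followed by a local limit analysis of the tilted sum.

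\textbf{Step 1: tilting.} Write $\bm{z}_t-\gamma=\sum_{j\in\mathcal{J}}\bm{S}_{j,t}$ with $\bm{S}_{j,t}=\sum_{\tau=1}^{t-d_j}\bm{\xi}_{j,\tau}$. For every $j$, introduce the tilted pdf $\tilde f_j(x)=f_j(x)e^{s_0x-\Lambda_j(s_0)}$. Under the product of tilted laws, the variable $\bm{w}_t\triangleq\bm{z}_t-\gamma$ has mean $\sum_j (t-d_j)\Lambda_j'(s_0)$. By the stationary equation \eqref{eq:stateq}, $\sum_j\Lambda'_j(s_0)=0$, so this mean equals $-\sum_j d_j\Lambda_j'(s_0)$, a constant independent of $t$. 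The tilted variance is $v_t\triangleq\sum_j(t-d_j)\Lambda_j''(s_0)=\sigma_t^2/s_0^2+O(1)$. The Radon--Nikodym identity then gives
\begin{equation*}
\mathbb{P}[\bm{z}_t\le 0]=\exp\Big(\sum_{j}(t-d_j)\Lambda_j(s_0)\Big)\,\tilde{\mathbb{E}}\big[e^{-s_0\bm{w}_t}\mathbb{1}\{\bm{w}_t\le-\gamma\}\big].
\end{equation*}
The exponential prefactor is exactly the exponent in \eqref{eq:ptdef} except for the $\gamma$ term. Since $s_0<0$, the factor $e^{-s_0 w}=e^{|s_0|w}$ is increasing in $w$, and the indicator restricts to $w\le-\gamma$.

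\textbf{Step 2: local limit theorem for the tilted sum.} I will show that the tilted density $\tilde g_t$ of $\bm{w}_t$ satisfies $\sqrt{v_t}\,\tilde g_t(w)\to 1/\sqrt{2\pi}$ uniformly for $w$ in compact sets, with a uniform bound $\sup_w\tilde g_t(w)=O(1/\sqrt t)$. This is a non-identically-distributed local CLT, but it involves only finitely many distinct summand laws, each repeated $t-d_j$ times. The characteristic function of $\bm{w}_t$ under the tilt is $\prod_j\tilde\varphi_j(u)^{t-d_j}$, where $\tilde\varphi_j(u)=\varphi_j(u-\iota s_0)/\varphi_j(-\iota s_0)$ extends analytically because the LMGF is finite everywhere. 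I then use Fourier inversion and split the frequency range into three regions. For $|u|\le\epsilon$, a Taylor expansion gives the Gaussian term. For $\epsilon\le|u|\le r'$, the tilted laws are nonlattice (absolutely continuous), so $|\tilde\varphi_j|\le q<1$ and the contribution decays geometrically. For $|u|\ge r'$, I need the tilted analogue of \eqref{eq:boundonchf}: because $f_j$ has a density with finite exponential moments, $\tilde f_j$ is also a density, and its characteristic function decays by Riemann--Lebesgue. To make $|\tilde\varphi_j|^{t-d_j}$ integrable, I would rather transfer the polynomial bound \eqref{eq:boundonchf} to the tilt, e.g.\ by checking that $|\tilde\varphi_j(u)|\le c'/|u|^{\delta}$ for large $|u|$ (via the same integrable-derivative reasoning, or by smoothing). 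Once this holds, with $m\delta>1$ copies the product is integrable, and the remaining copies are bounded by $q^{t}$.

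\textbf{Step 3: conclusion.} Substituting $w=-\gamma-y$ gives
\begin{equation*}
\tilde{\mathbb{E}}[\cdots]=\int_0^\infty e^{|s_0|(-\gamma-y)}\tilde g_t(-\gamma-y)\,dy .
\end{equation*}
Dominated convergence applies, using the uniform bound from Step 2 and the integrable weight $e^{-|s_0|y}$. The integral is therefore asymptotically
\begin{equation*}
e^{-|s_0|\gamma}\,\frac{1}{\sqrt{2\pi v_t}}\,\frac{1}{|s_0|}=\frac{e^{-|s_0|\gamma}}{\sqrt{2\pi\sigma_t^2}}\,(1+o(1)),
\end{equation*}
using $|s_0|\sqrt{v_t}\simeq\sigma_t$. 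The constant shift of the mean is negligible because it is $O(1)$ against a spread of order $\sqrt t$. Combining with Step 1 yields \eqref{eq:claimth1}.

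The main obstacle is Step 2, specifically the high-frequency control of the \emph{tilted} characteristic functions. This is the only place where \eqref{eq:boundonchf} is needed, and it must be shown to survive the exponential tilt. The rest of the argument is standard Bahadur--Rao bookkeeping.
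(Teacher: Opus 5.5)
Your Steps~1 and~3 are correct. The architecture (tilt at $s_0$, a local limit theorem for the tilted density $\tilde g_t$ of $\bm{w}_t$, then dominated convergence against $e^{-|s_0|y}$) is a sound alternative to the paper's. The gap is exactly the step you flag, and the repairs you sketch do not work.

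Assumption~\ref{assum:csijtau} gives the decay \eqref{eq:boundonchf} only for the \emph{untilted} characteristic function $\varphi_j$. It says nothing about $f_j'$, so there is no integrable-derivative argument to repeat. Even if $f_j'$ were integrable, $\tilde f_j'=(f_j'+s_0f_j)\,e^{s_0x-\Lambda_j(s_0)}$ need not be. The factor $e^{s_0x}$ blows up as $x\to-\infty$, and finite exponential moments of $f_j$ do not control $f_j'$ in the left tail. Nor can you fall back on absolute continuity of $\tilde f_j$. Riemann--Lebesgue gives $\tilde\varphi_j\to 0$ but no integrable power $|\tilde\varphi_j|^m$. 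Without one, the uniform density local limit theorem can fail, and with it your bound $\sup_w\tilde g_t(w)=O(1/\sqrt t)$, which the dominated convergence in Step~3 needs. As written, Step~2 is unproved, and your alternative ``by smoothing'' is not specified.

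The estimate is nevertheless true and can be closed as follows. Since all exponential moments are finite, $M_j(z)\triangleq\mathbb{E}\,e^{z\bm{\xi}_{j,\tau}}$ is entire with $|M_j(z)|\le M_j(\mathrm{Re}\,z)$, hence bounded on the strip $\{2s_0\le\mathrm{Re}\,z\le 0\}$. Put $G(z)=M_j(z)M_j(2s_0-z)$.
\begin{itemize}
\item On each boundary line, one factor is a value of $\varphi_j$ and the other is at most $e^{\Lambda_j(2s_0)}$. So $|G(z)|\le c\,e^{\Lambda_j(2s_0)}|\mathrm{Im}\,z|^{-\delta}$ there for $|\mathrm{Im}\,z|\ge r$.
\item Because $f_j$ is real, $G(s_0+\iota u)=|M_j(s_0+\iota u)|^2$.
\item The function $G(z)(1-z)^{\delta}$ is bounded on the boundary and of polynomial growth inside. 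Phragm\'en--Lindel\"of makes it bounded on the strip.
\end{itemize}
This gives $|\tilde\varphi_j(u)|\le C|u|^{-\delta/2}$ for large $|u|$, which suffices for your high-frequency region with $m\delta>2$.

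Alternatively, drop densities, which is what the paper does. It applies a first-order Edgeworth expansion, with uniform $o(1/\sqrt t)$ error, to the cdf $\mathsf{F}_t$ of the standardized tilted sum. It then rewrites $\mathbb{E}_{\tilde f}[e^{-s_0\bm{z}_t}\mathbb{I}_{\bm{z}_t\le 0}]$ by integration by parts as an integral of cdf increments against $e^{\zeta}$. From this it reads off a Gaussian term giving the limit, a skewness term that vanishes, and a negligible remainder. At the cdf level, as in the classical Bahadur--Rao argument, what matters is non-latticeness of the tilted laws, which is immediate from absolute continuity, so the high-frequency issue never arises. An interval (Stone-type) local limit theorem would also work, with your Step~3 carried out on a grid of small intervals instead of pointwise. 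Your density route buys a stronger intermediate result (a uniform local limit theorem for $\tilde g_t$), at the price of this extra tilted characteristic-function estimate.
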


\begin{IEEEproof}
Let us consider the change of measure identified by the so-called \emph{tilted pdf}~\cite[App. E]{MattaBordignonSayedBook}
\begin{equation}
\tilde{f}_j(\xi)=f_j(\xi)\,
\frac{e^{s_0 \, \xi}}{e^{\Lambda_j(s_0)}},\qquad 
j\in\mathcal{J},\; \xi\in\mathbb{R}.
\label{eq:tiltedef}
\end{equation}
From the definition of the LMGF, it is easily verified that $\tilde{f}_j$ is a pdf. Moreover, it is known that~\cite[App. E]{MattaBordignonSayedBook} the first three moments (which will be relevant later in our proof) of $\bm{\xi}_{j,\tau}$ under the tilted distribution are given by
\begin{align}
\E_{\tilde{f}_j}\, \bm{\xi}_{j,\tau}&=\Lambda_j^{\prime}(s_0),
\label{eq:tildedmu1}
\\
\E_{\tilde{f}_j} 
\left[
\left(\bm{\xi}_{j,\tau} - \E_{\tilde{f}_j}\, \bm{\xi}_{j,\tau}\right)^2
\right]&=\Lambda_j^{\prime\prime}(s_0),
\label{eq:tildedmu2}
\\
\E_{\tilde{f}_j} 
\left[
\left(\bm{\xi}_{j,\tau} - \E_{\tilde{f}_j}\, \bm{\xi}_{j,\tau}\right)^3
\right]&=\Lambda_j^{\prime\prime\prime}(s_0)
\label{eq:tildedmu3},
\end{align}
where the subscript $\tilde{f}_j$ denotes that the expectation is computed under the tilted distribution. 

We have the following chain of identities:
\begin{align}
\mathbb{P}[\bm{z}_t\leq 0]&=
\mathbb{E}\big[\mathbb{I}_{\bm{z}_t\leq 0}\big]\nonumber\\
&\overset{\mathrm{(a)}}{=}
\int \mathbb{I}_{z_t\leq 0}\,
\left(\prod_{j\in\mathcal{J}}\prod_{\tau=1}^{t-d_j}
f_j(\xi_{j,\tau})\,d\xi_{j,\tau}
\right)
\nonumber\\
&\overset{\mathrm{(b)}}{=}\int \mathbb{I}_{z_t\leq 0}\left(\prod_{j\in\mathcal{J}}\prod_{\tau=1}^{t-d_j}
\frac{e^{\Lambda_j(s_0)}}
{e^{s_0 \, \xi_{j,\tau}}}
\tilde{f}_j(\xi_{j,\tau})\,d\xi_{j,\tau}\right)
\nonumber\\
&\overset{\mathrm{(c)}}{=}
\mathbb{E}_{\tilde{f}}\left[
\frac{
\exp\left(
\sum_{j\in\mathcal{J}}\sum_{\tau=1}^{t-d_j}
\Lambda_j(s_0)
\right)
}
{\exp\left(
s_0 \, 
\sum_{j\in\mathcal{J}}\sum_{\tau=1}^{t-d_j}\bm{\xi}_{j,\tau}
\right)}\;\;
\mathbb{I}_{\bm{z}_t\leq 0}\right]
\nonumber\\
&\overset{\mathrm{(d)}}{=}\mathbb{E}_{\tilde{f}}\left[
\frac{
\exp\left(
\sum_{j\in\mathcal{J}}(t-d_j)\,
\Lambda_j(s_0)
\right)
}
{\exp\big(
s_0 \, (\bm{z}_t - \gamma) 
\big)}\;\;
\mathbb{I}_{\bm{z}_t\leq 0}\right],
\label{eq:fundamentaltrick}
\end{align}
where (a) holds because the random variables $\bm{\xi}_{j,\tau}$ are independent across $j$ and $\tau$; (b) follows from 
\eqref{eq:tiltedef}; in (c) we introduce the subscript $\tilde{f}$ to denote the expectation computed under the tilted measure associated with the product of the tilted pdfs $\tilde{f}_j(\xi_{j,\tau})$; and (d) follows from computing the sum over $\tau$ in the numerator (note that the LMGFs $\Lambda_j$ do not depend on $\tau$) and from applying \eqref{eq:ztdefTh1} in the denominator. 
Pulling the deterministic quantities out of the expectation operator, we can rewrite \eqref{eq:fundamentaltrick} as
\begin{align}
\mathbb{P}[\bm{z}_t\leq 0]&=\mathbb{E}_{\tilde{f}}\Big[
e^{-s_0\,\bm{z}_t}\,
\mathbb{I}_{\bm{z}_t\leq 0}
\Big]
\nonumber\\
&\times
\exp\left(t\sum\limits_{j\in\mathcal{J}} \Lambda_j(s_0)
-
\sum\limits_{j\in\mathcal{J}} d_j \Lambda_j(s_0) - 
|s_0|\,\gamma\right),
\label{eq:Pzt0}
\end{align}
where we used the fact that $s_0<0$.

Consider now an integer number $n>d_j$. 
It is convenient to split the random variable $\bm{z}_t$ in \eqref{eq:ztdefTh1} as follows:
\begin{align}
\bm{z}_t&=
\sum_{j\in\mathcal{J}} \sum_{\tau=1}^{t-n} \bm{\xi}_{j,\tau}
+
\sum_{j\in\mathcal{J}} \sum_{\tau=t-n+1}^{t-d_j} \bm{\xi}_{j,\tau}
+ \gamma\nonumber\\
&=
\sum_{\tau=1}^{t-n} \bm{\xi}_{\tau} + \sum_{j\in\mathcal{J}} \sum_{\tau=t-n+1}^{t-d_j} \bm{\xi}_{j,\tau}
+ \gamma,
\label{eq:zsplitnew}
\end{align}
where in the last equality we applied the definition of $\bm{\xi}_\tau$ from \eqref{eq:barcsitau}.
By introducing a random variable $\bm{\xi}_0$,  statistically independent of the variables $\bm{\xi}_{j,\tau}$ and such that
\begin{equation}
\bm{\xi}_0\stackrel{\mathrm{d}}{=}\sum_{j\in\mathcal{J}} \sum_{\tau=t-n+1}^{t-d_j} \bm{\xi}_{j,\tau}
+ \gamma,
\label{eq:barcsi0}
\end{equation}
where $\stackrel{\mathrm{d}}{=}$ denotes equality in distribution, we get
\begin{equation}
\bm{z}_t\stackrel{\mathrm{d}}{=}
\bm{\xi}_0 + \sum_{\tau=1}^{t-n}\bm{\xi}_\tau=
\sum_{\tau=0}^{t-n} \bm{\xi}_\tau.
\label{eq:distequality}
\end{equation}
Observe that, in view of \eqref{eq:stateq} and \eqref{eq:tildedmu1}, the means, under the tilted distribution, of the random variables defined by \eqref{eq:barcsitau} are zero.
On the other hand, from \eqref{eq:barcsi0} we see that the mean of $\bm{\xi}_0$ under the tilted distribution is given by
\begin{equation}
\mathbb{E}_{\tilde{f}} \,\bm{\xi}_0=
n \underbrace{\sum_{j\in\mathcal{J}} \Lambda_j^{\prime}(s_0)}_{=0 \textnormal{ from \eqref{eq:stateq}}}
-
\sum_{j\in\mathcal{J}} d_j \Lambda_j^{\prime}(s_0)+\gamma. 
\label{eq:meanvartransform}
\end{equation}
Therefore, we can write
\begin{equation}
m_z\triangleq \mathbb{E}_{\tilde{f}}\,\bm{z}_t=-\sum_{j\in\mathcal{J}} d_j \Lambda_j^{\prime}(s_0)+\gamma.
\label{eq:mzdef}
\end{equation}
Likewise, by exploiting \eqref{eq:tildedmu1} and the independence of the random variables $\bm{\xi}_{j,\tau}$, we obtain (the notation for the variance should be self-explaining),
\begin{equation}
\sigma^2_{z,t}\triangleq \mathbb{VAR}_{\tilde{f}}\, \bm{z}_t =
t \sum_{j\in\mathcal{J}} \Lambda_j^{\prime\prime}(s_0)
-
\sum_{j\in\mathcal{J}} d_j \, \Lambda_j^{\prime\prime}(s_0).
\label{eq:sigmaztdef}
\end{equation}
Let us introduce the standardized version of $\bm{z}_t$,
\begin{equation}
\bm{\zeta}_t\triangleq \frac{\bm{z}_t - m_z}{\sigma_{z,t}},
\label{eq:standz}
\end{equation}
and denote by $\mathsf{F}_t$ its cumulative distribution function (cdf) under the tilted measure. From~\cite[Thm. 7, Chap. VI]{Petrov} (whose sufficient conditions can be readily verified to hold in view of the properties in Assumption~\ref{assum:csijtau}), it turns out that: $i)$ $\bm{\zeta}_t$ converges in distribution (under the tilted measure) to a standard normal; $ii)$ its cdf $\mathsf{F}_t$ can be approximated by the following first-order Edgeworth expansion: 
\begin{equation}
\widehat{\mathsf{F}}_t(\zeta)\triangleq G(\zeta) + \frac{m_{z,t}^{(3)}}{6\,\sigma_{z,t}^{3}}\,h(\zeta),
\label{eq:Edgeworthexpansionfirsthatdef}
\end{equation}
where
\begin{equation}
m^{(3)}_{z,t}\triangleq
\sum_{\tau=0}^{t-n} 
\E_{\tilde{f}} 
\left[
\left(\bm{\xi}_{\tau} - \E_{\tilde{f}}\, \bm{\xi}_{\tau}\right)^3
\right], \quad h(\zeta)\triangleq (1-\zeta^2)g(\zeta),
\label{eq:m3def}
\end{equation}
with $G$ and $g$ being the cdf and the pdf of the standard normal, respectively. 
The approximation holds in the following sense~\cite[Thm. 7, Chap. VI]{Petrov}:
\begin{equation}
\lim_{t\rightarrow\infty}
\sqrt{t} \,
\sup_{\zeta\in\mathbb{R}}
\left|
\mathsf{F}_t(\zeta) - \widehat{\mathsf{F}}_t(\zeta) 
\right|=0,
\label{eq:Edgeworth}
\end{equation}
that is, the error 
\begin{equation}
\mathsf{E}_t(\zeta)\triangleq 
\mathsf{F}_t(\zeta) - \widehat{\mathsf{F}}_t(\zeta) 
\label{eq:errFhatF}
\end{equation}
between $\mathsf{F}_t$ and the Edgeworth approximation $\widehat{\mathsf{F}}_t$ converges to zero faster than $1/\sqrt{t}$, and uniformly in $\zeta$.

Exploiting the definition of the standardized variable $\bm{\zeta}_t$ from  \eqref{eq:standz}, we have the identities (we recall that $s_0<0$)
\begin{align}
&\mathbb{E}_{\tilde{f}}
\left[
e^{-s_0\bm{z}_t}\,
\mathbb{I}_{\bm{z}_t\leq 0}
\right]=
e^{-s_0 m_z}
\,
\mathbb{E}_{\tilde{f}}
\left[
e^{-s_0 \sigma_{z,t} \bm{\zeta}_t}\,
\mathbb{I}_{\bm{\zeta}_t\leq - m_z/\sigma_{z,t}}
\right]\nonumber\\
&=
|s_0| \,\sigma_{z,t}\,
e^{|s_0| m_z}\!\!\!
\int_{-\infty}^{-m_z/\sigma_{z,t}}
\!\!\!\!\!\!\!\!\!\!\!\!\!\!\!\!
e^{|s_0| \sigma_{z,t} \zeta}
\left[
\mathsf{F}_t\left(-\frac{m_z}{\sigma_{z,t}}\right)
-
\mathsf{F}_t(\zeta)
\right]
d\zeta,
\label{eq:prefinalform}
\end{align}
where the last equality follows from the definition of the expected value and the rule of integration by parts. 
By using the change of variable $|s_0|(\sigma_{z,t}\,\zeta+m_z)\mapsto \zeta $, we get
\begin{align}
&\mathbb{E}_{\tilde{f}}
\left[
e^{-s_0\bm{z}_t}\,
\mathbb{I}_{\bm{z}_t\leq 0}
\right]
\nonumber\\
&=
\underbrace{
\int_{-\infty}^{0}
e^{\zeta}
\left[
\mathsf{F}_t\left(-\frac{m_z}{\sigma_{z,t}}\right)
-
\mathsf{F}_t\left(\frac{\zeta-|s_0|m_z}{|s_0|\,\sigma_{z,t}}\right)\right]
d\zeta}_{\triangleq I_t}.
\label{eq:withItdef}
\end{align}
Combining \eqref{eq:ptdef}, \eqref{eq:Pzt0}, and \eqref{eq:withItdef}, we can write
\begin{equation}
\frac{\mathbb{P}[\bm{z}_t\leq 0]}{p_t}=
\sqrt{2\pi}\,\sigma_t\,I_t.
\label{eq:finalim}
\end{equation}
Therefore, the claim in \eqref{eq:claimth1} is equivalent to
\begin{equation}
\lim_{t\rightarrow\infty}\sqrt{2\pi} \sigma_t \, I_t = 1.
\label{eq:equivclaimlimItnewlast}
\end{equation}
Let us then elaborate to prove \eqref{eq:equivclaimlimItnewlast}. Using \eqref{eq:errFhatF}, the quantity $I_t$ defined in \eqref{eq:withItdef} can be represented as
\begin{align}
I_t
&=\underbrace{
\int_{-\infty}^{0}
e^{\zeta}
\left[
\widehat{\mathsf{F}}_t\left(-\frac{m_z}{\sigma_{z,t}}\right)
-
\widehat{\mathsf{F}}_t\left(\frac{\zeta-|s_0|m_z}{|s_0|\,\sigma_{z,t}}\right)\right]
d\zeta}_{\triangleq I_{1,t}}
\nonumber\\
&+
\underbrace{
\int_{-\infty}^{0}
e^{\zeta}
\left[
\mathsf{E}_t\left(-\frac{m_z}{\sigma_{z,t}}\right)
-
\mathsf{E}_t\left(\frac{\zeta-|s_0|m_z}{|s_0|\,\sigma_{z,t}}\right)\right]
d\zeta}_{\triangleq I_{2,t}}.
\label{eq:firstappearanceofI1andI2}
\end{align}
Regarding the term $I_{2,t}$, from the triangle inequality we have the upper bound
\begin{align}
\left|I_{2,t}\right|
\leq 2\,\sup_{\zeta\in\mathbb{R}}|\mathsf{E}_t(\zeta)|,
\label{eq:newboundonI2t}
\end{align}
where we also exploited the identity $\int_{-\infty}^{0}e^{\zeta}\,d\zeta=1$. 
Applying the definition of $\sigma_t$ from \eqref{eq:sigmatdef}, we can write
\begin{equation}
\sqrt{2\pi} \sigma_t \, I_{2,t}=
\sqrt{t}\,I_{2,t}
\Bigg(
2\pi s_0^2\,\sum_{j\in\mathcal{J}} \Lambda_j^{\prime\prime}(s_0)
\Bigg)^{1/2}.
\label{eq:explicitI2tlimnew}
\end{equation}
From \eqref{eq:Edgeworth}, \eqref{eq:errFhatF}, \eqref{eq:newboundonI2t}, and \eqref{eq:explicitI2tlimnew}, we conclude that
\begin{equation}
\lim_{t\rightarrow\infty} \sqrt{2\pi} \sigma_t \, I_{2,t}=0.
\end{equation}
Thus, to prove \eqref{eq:equivclaimlimItnewlast}, it remains to show that 
\begin{equation}
\lim_{t\rightarrow\infty} \sqrt{2\pi} \sigma_t \, I_{1,t}=1.
\label{eq:newnewclaimlimI1tpartialterm}
\end{equation}
To this end, we exploit the definitions of $\widehat{\mathsf{F}}_t$ from \eqref{eq:Edgeworthexpansionfirsthatdef} and $I_{1,t}$ from \eqref{eq:firstappearanceofI1andI2} to write
\begin{align}
&\sqrt{2\pi}\,\sigma_t\,I_{1,t}
\nonumber\\
&=\sqrt{2\pi}\,\sigma_t\,\int_{-\infty}^0
\left[G\left(-\frac{m_z}{\sigma_{z,t}}\right)
-
G\left(\frac{\zeta-|s_0|m_z}{|s_0|\,\sigma_{z,t}}\right)
\right]
e^{\zeta}
d\zeta\nonumber\\
&+
\sqrt{\frac{\pi}{18}}
\frac{m_{z,t}^{(3)} \, \sigma_t}
{\sigma_{z,t}^{3}}\nonumber\\
&\times
\int_{-\infty}^0
\left[h\left(-\frac{m_z}{\sigma_{z,t}}\right)
-
h\left(\frac{\zeta-|s_0|m_z}{|s_0|\,\sigma_{z,t}}\right)
\right]
e^{\zeta}
d\zeta.
\label{eq:closetoend}
\end{align}
First, we focus on the second term. 
The ratio 
\begin{equation}
\frac{m_{z,t}^{(3)} \, \sigma_t}
{\sigma_{z,t}^{3}}
\end{equation}
converges to a finite limit as $t\rightarrow\infty$ because $m^{(3)}_{z,t}$ scales as $t$ in view of \eqref{eq:m3def}, while $\sigma_t$ and $\sigma_{z,t}$ scale as $\sqrt{t}$ in view of \eqref{eq:sigmatdef} and \eqref{eq:sigmaztdef}, respectively. Moreover, since
\begin{equation}
\lim_{t\rightarrow\infty }
\left[h\left(-\frac{m_z}{\sigma_{z,t}}\right)
-
h\left(\frac{\zeta-|s_0|m_z}{|s_0|\,\sigma_{z,t}}\right)
\right]
e^{\zeta}=0,
\end{equation}
and since the function $h$ is bounded, by applying the dominated convergence theorem we conclude that the second term in \eqref{eq:closetoend} vanishes. 
Let us move on to examining the first term in \eqref{eq:closetoend}. By applying the rule of integration by parts, we can rewrite that term as
\begin{equation}
\sqrt{2\pi} \frac{\sigma_t}{|s_0|\sigma_{z,t}} \int_{-\infty}^0
g\left(
\frac
{\zeta-|s_0|m_z}
{|s_0|\sigma_{z,t}}
\right)e^\zeta d\zeta.
\end{equation}
From \eqref{eq:sigmatdef} and \eqref{eq:sigmaztdef} we conclude that the ratio $\sigma_t/(|s_0|\sigma_{z,t})$ converges to $1$. By applying the dominated convergence theorem (observe that $g$ is bounded), we finally obtain
\begin{align}
&\lim_{t\rightarrow\infty}
\sqrt{2\pi} \frac{\sigma_t}{|s_0|\sigma_{z,t}} \int_{-\infty}^0
g\left(
\frac
{\zeta-|s_0|m_z}
{|s_0|\sigma_{z,t}}
\right)e^\zeta d\zeta
\nonumber\\
&=
\sqrt{2\pi} g(0) \int_{-\infty}^0
e^\zeta d\zeta=1,
\end{align}
which completes the proof.
\end{IEEEproof}

\section{Pairs of Log Likelihood Ratios}
\label{app:propLLR}

\begin{lemma}[\textbf{Non-degeneracy of the global LLRs}]
\label{lem:LLRnondeg}
Let Assumptions~\ref{assum:luckylike} and~\ref{assum:ident} be satisfied.
Consider the true hypothesis $\thetatrue$, and two other hypotheses $\theta_1$ and $\theta_2$. 
Then, the global log likelihood ratios $\bm{\lambda}_t(\thetatrue,\theta_1)$ and $\bm{\lambda}_t(\thetatrue,\theta_2)$ are not proportional to each other.
\end{lemma}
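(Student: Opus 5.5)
Assume, for contradiction, that there is a constant $c$ with $\bm{\lambda}_t(\thetatrue,\theta_2)=c\,\bm{\lambda}_t(\thetatrue,\theta_1)$ almost surely under $\thetatrue$. We then derive constraints from the structure of LLRs and show they force $\theta_1=\theta_2$ or a violation of Assumption~\ref{assum:ident}. (If $\theta_1=\theta_2$ the statement is vacuous, so we take $\theta_1\neq\theta_2$, both different from $\thetatrue$.)

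The key tool is the identity $\ell(x|\theta)=\ell(x|\thetatrue)\,e^{-\lambda(\thetatrue,\theta)}$ for the joint likelihood $\ell=\prod_k \ell_k$, together with the normalization $\mathbb{E}_{\thetatrue}e^{-\bm{\lambda}_t(\thetatrue,\theta)}=1$, i.e., $\Lambda(-1;\thetatrue,\theta)=0$. Proportionality gives $\Lambda(s;\thetatrue,\theta_2)=\Lambda(cs;\thetatrue,\theta_1)$ for all $s$. Setting $s=-1$ yields $\Lambda(-c;\thetatrue,\theta_1)=0$. By Assumption~\ref{assum:ident}, $\bm{\lambda}_t(\thetatrue,\theta_1)$ is nondegenerate with positive mean, so $\Lambda(\cdot;\thetatrue,\theta_1)$ is strictly convex with zeros exactly at $0$ and $-1$. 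Hence $c\in\{0,1\}$.

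If $c=0$, then $\bm{\lambda}_t(\thetatrue,\theta_2)=0$ a.s., so its mean $\sum_k D_k(\thetatrue\|\theta_2)$ is zero, contradicting Assumption~\ref{assum:ident}. If $c=1$, the joint likelihoods under $\theta_1$ and $\theta_2$ coincide almost everywhere on the common support, because $\ell(x|\theta_i)=\ell(x|\thetatrue)e^{-\lambda(\thetatrue,\theta_i)}$. Then $\bm{\lambda}_t(\theta_1,\theta_2)=0$ and the KL divergence between the joint likelihoods under $\theta_1$ and $\theta_2$ vanishes, again contradicting Assumption~\ref{assum:ident} applied to the pair $(\theta_1,\theta_2)$. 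This uses that the supports are the same (Assumption~\ref{assum:luckylike}), so a.s. equality under $\thetatrue$ transfers to equality of densities.

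The main subtle point is the reduction $c\in\{0,1\}$. One must verify strict convexity of the global LMGF, using that at least one $j\in\mathcal{J}(\thetatrue,\theta_1)$ contributes a nondegenerate variable, as argued after Assumption~\ref{assum:csijtau}. One must also handle the meaning of ``proportional'': if it includes an additive offset, the same normalization trick applies with $\Lambda(-1)$ equations, and a slightly longer case analysis is needed.
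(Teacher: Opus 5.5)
Your proof is correct. It rests on the same inequality as the paper's proof, organized differently.

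\textbf{Comparison with the paper's route.} The paper works with densities. From $\ln\frac{\ell(x|\theta_0)}{\ell(x|\theta_1)}=a\ln\frac{\ell(x|\theta_0)}{\ell(x|\theta_2)}$ it rules out $a\in\{0,1\}$ by identifiability, and $a<0$ by the sign of the summed KL divergences. It rules out $0<a<1$ by the equality case of H\"older's inequality applied to $\ell^a(x|\theta_2)\,\ell^{1-a}(x|\theta_0)$, and $a>1$ by swapping $\theta_1$ and $\theta_2$. Since $\int\ell^a(x|\theta_2)\,\ell^{1-a}(x|\theta_0)\,dx=\exp\Lambda(-a;\theta_0,\theta_2)$, that H\"older step says exactly that the LMGF is strictly negative between its two zeros $-1$ and $0$.

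In your formulation, carrying the normalization $\Lambda(-1;\theta_0,\theta_2)=0$ through the proportionality gives $\Lambda(-c;\theta_0,\theta_1)=0$. Then ``a strictly convex function has at most two zeros'' eliminates every $c\notin\{0,1\}$ at once. The separate KL-sign and swapping arguments become unnecessary.

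Your route also does not need Assumption~\ref{assum:finiteMGFs}, which the lemma does not assume. The value at $-c$ is forced to be finite. Strict convexity on the effective domain follows from strict H\"older for a nondegenerate variable. You should cite that, rather than the differentiability-based discussion after Assumption~\ref{assum:csijtau}. (Minor: for $\theta_1=\theta_2$ the claim is false, not vacuous; distinctness is an implicit hypothesis.)

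\textbf{The additive-offset remark is wrong.} The affine version of the statement is false, so no longer case analysis can establish it. Take a single agent with $\ell(x|\theta)$ the $\mathcal{G}(\theta,1)$ density, and $\theta_0=0$, $\theta_1=1$, $\theta_2=2$. Then $\bm{\lambda}_t(0,\theta)=-\theta\,\bm{x}_{1,t}+\theta^2/2$, so $\bm{\lambda}_t(0,2)=2\,\bm{\lambda}_t(0,1)+1$.

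With $\bm{\lambda}_t(\theta_0,\theta_2)=c\,\bm{\lambda}_t(\theta_0,\theta_1)+b$, your normalization trick only gives $\Lambda(-c;\theta_0,\theta_1)=b$. Here this indeed holds: $\Lambda(-2;0,1)=1=b$. So ``proportional'' must be read as purely linear, which is also how the paper's proof reads it.

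The distinction is not cosmetic. The covariance argument in Lemma~\ref{lem:LMGFsc} uses \emph{centered} LLRs, and their covariance matrix is singular exactly under affine dependence. The present lemma does not exclude that. In the example above, the Hessian of the bivariate LMGF is singular.
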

\begin{IEEEproof}
It is convenient to introduce the vector $x=[x_{1},x_{2},\ldots,x_{K}]$, which stacks the vectors $x_k\in\mathbb{R}^{n_k}$ for $k\in\mathcal{K}$. 
By introducing the \emph{global} likelihood
\begin{equation}
\ell(x|\theta)\triangleq \prod_{k=1}^K \ell_k(x_k|\theta),\quad \theta\in\Theta,
\label{eq:globlikeli}
\end{equation}
the claim of the lemma is equivalent to showing that $\ln\frac{\ell(x|\thetatrue)}{\ell(x|\theta_1)}$ and $\ln\frac{\ell(x|\thetatrue)}{\ell(x|\theta_2)}$ are not proportional to each other.
By contrapositive statement, assume that
\begin{equation}
\ln\frac{\ell(x|\thetatrue)}{\ell(x|\theta_1)}=
a\,\ln\frac{\ell(x|\thetatrue)}{\ell(x|\theta_2)}
\label{eq:proportione}
\end{equation}
for some $a\in\mathbb{R}$ and all $x\in\mathbb{R}^{\sum_{k=1}^K n_k}$, but for a zero probability set under $\ell(x|\thetatrue)$. Now, the constant $a$ cannot be equal to $0$ or $1$ because otherwise we would violate the identifiability assumption (in particular, when $a=1$ we would violate the distinguishability between $\theta_1$ and $\theta_2$). 
On the other hand, $a$ cannot be negative, otherwise (since identifiability corresponds to positivity of the KL divergence), we would have
\begin{equation}
0<\sum_{k=1}^K D_k(\thetatrue||\theta_1)=a\,\sum_{k=1}^K D_k(\thetatrue||\theta_2)<0.
\end{equation}
Accordingly the residual possibility is that $a>0$ with $a\neq 1$. 
If Eq. \eqref{eq:proportione} holds, we have
\begin{equation}
\ell(x|\theta_1)=\ell^a(x|\theta_2)\,\ell^{1-a}(x|\thetatrue).
\label{eq:proporconjec}
\end{equation}
Assume that $0<a<1$, and set
\begin{equation}
u(x)=\ell^a(x|\theta_2),\quad 
v(x)=\ell^{1-a}(x|\thetatrue),
\label{eq:fgdef}
\end{equation}
and
\begin{equation}
p=\frac{1}{a}, \qquad q=\frac{1}{1-a}. 
\label{eq:pqdef}
\end{equation}
In view of Hölder's inequality
\begin{equation}
\int u(x)\,v(x) \,dx \leq 
\left(\int u^p(x) \,dx\right)^\frac{1}{p}
\!\!
\left(\int v^q(x) \,dx\right)^\frac{1}{q}
\!\!\!,
\end{equation}
which, by substituting \eqref{eq:fgdef} and \eqref{eq:pqdef}, yields
\begin{equation}
\int \ell^a(x|\theta_2)\,\ell^{1-a}(x|\thetatrue)\,dx\leq 1.
\end{equation}
However, if \eqref{eq:proporconjec} holds, the above inequality should in fact be an equality, which, by Hölder's result, holds iff 
\begin{equation}
u^p(x)\propto v^q(x),
\end{equation}
that is, iff
\begin{equation}
\ell(x|\theta_2)\propto \ell(x|\thetatrue),
\label{eq:Holderllrprop}
\end{equation}
which is impossible in view of identifiability.\footnote{
Since $\int\ell(x|\theta_2)dx=\int\ell(x|\thetatrue)dx$, Eq. \eqref{eq:Holderllrprop} would correspond to $\ell(\cdot|\theta_2)=\ell(\cdot|\thetatrue)$.}
The case $a>1$ can be handled similarly, by considering $1/a$ in place of $a$ and exchanging the role of $\theta_1$ and $\theta_2$.
\end{IEEEproof}

\begin{lemma}[\textbf{Strict convexity of the bivariate LMGF}]
\label{lem:LMGFsc}
Let Assumptions~\ref{assum:luckylike}--\ref{assum:finiteMGFs} be satisfied, and consider the bivariate LMGF
\begin{align}
&\Lambda(s_1,s_2;\thetatrue,\theta_1,\theta_2)\nonumber\\
&\triangleq
\ln\mathbb{E}_{\thetatrue}\, \exp
\Big(
s_1\,\bm{\lambda}_t(\thetatrue,\theta_1)
+
s_2\,\bm{\lambda}_t(\thetatrue,\theta_2)
\Big).
\label{eq:bivLMGFlikedef}
\end{align}
Then, $\Lambda(s_1,s_2;\thetatrue,\theta_1,\theta_2)$ is strictly convex.
\end{lemma}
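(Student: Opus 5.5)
The plan is to use the standard fact that a log moment generating function is convex, and that it is strictly convex exactly when the underlying random vector is not supported on an affine hyperplane. Set $\bm{v}\triangleq[\bm{\lambda}_t(\thetatrue,\theta_1),\bm{\lambda}_t(\thetatrue,\theta_2)]^{\top}$ and $s=[s_1,s_2]^{\top}$, so that $\Lambda(s_1,s_2;\thetatrue,\theta_1,\theta_2)=\ln\mathbb{E}_{\thetatrue}e^{s^{\top}\bm{v}}$.

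First, finiteness and smoothness. By Cauchy--Schwarz (or H\"older), $\mathbb{E}e^{s_1\bm{\lambda}_1+s_2\bm{\lambda}_2}\le(\mathbb{E}e^{2s_1\bm{\lambda}_1})^{1/2}(\mathbb{E}e^{2s_2\bm{\lambda}_2})^{1/2}$. The right-hand side is finite for all $(s_1,s_2)$ by Assumption~\ref{assum:finiteMGFs} and the additivity in \eqref{eq:sumLMGF}. Hence $\Lambda$ is finite on $\mathbb{R}^2$ and infinitely differentiable there, since we may differentiate under the expectation by dominated convergence.

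Second, the Hessian. Its value at $s$ is the covariance matrix of $\bm{v}$ under the tilted measure $d\widetilde{\mathbb{P}}_s\propto e^{s^{\top}\bm{v}}d\mathbb{P}_{\thetatrue}$. For any nonzero $w=[w_1,w_2]^{\top}$ we get $w^{\top}\nabla^2\Lambda(s)\,w=\mathbb{VAR}_{\widetilde{\mathbb{P}}_s}(w^{\top}\bm{v})\ge 0$. This variance vanishes iff $w_1\bm{\lambda}_t(\thetatrue,\theta_1)+w_2\bm{\lambda}_t(\thetatrue,\theta_2)=c$ holds $\widetilde{\mathbb{P}}_s$-a.s. for some constant $c$. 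Because $\widetilde{\mathbb{P}}_s$ and $\mathbb{P}_{\thetatrue}$ are mutually absolutely continuous (the density $e^{s^{\top}\bm v-\Lambda(s)}$ is strictly positive), this is the same as holding $\mathbb{P}_{\thetatrue}$-a.s. So positive definiteness of the Hessian everywhere, and hence strict convexity, reduces to excluding such an affine relation.

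Third, the main obstacle is excluding the affine relation, because Lemma~\ref{lem:LLRnondeg} only rules out proportionality, which is the case $c=0$. Two cases need separate handling.

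If $w_2=0$, the relation says $\bm{\lambda}_t(\thetatrue,\theta_1)$ is a.s. constant. But Assumption~\ref{assum:ident} makes $\mathcal{J}(\thetatrue,\theta_1)\neq\emptyset$, and each $\bm{\lambda}_{j,t}$ with $j$ in that set is absolutely continuous, so the sum is nondegenerate. The case $w_1=0$ is symmetric.

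If both $w_1,w_2\neq0$, write $\bm{\lambda}_t(\thetatrue,\theta_1)=a\,\bm{\lambda}_t(\thetatrue,\theta_2)+b$. Exponentiating gives $\ell(x|\theta_1)=e^{-b}\ell^a(x|\theta_2)\ell^{1-a}(x|\thetatrue)$. We then rerun the argument of Lemma~\ref{lem:LLRnondeg} with the constant $e^{-b}$ carried along.
\begin{itemize}
\item If $a=0$, the relation is the degenerate case already excluded above.
\item If $a=1$, then $\ell(\cdot|\theta_1)\propto\ell(\cdot|\theta_2)$. Both integrate to $1$, so they are equal, which contradicts distinguishability of $\theta_1,\theta_2$.
\item If $a<0$, the relation $\ell(x|\thetatrue)^{1-a}\propto\ell(x|\theta_1)\,\ell(x|\theta_2)^{-a}$ has the same H\"older structure after relabeling roles. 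Alternatively, take expectations: $D(\thetatrue\|\theta_1)=aD(\thetatrue\|\theta_2)+b$. Pair this with the identity obtained by integrating $\ell(x|\theta_1)=e^{-b}\ell^a(x|\theta_2)\ell^{1-a}(x|\thetatrue)$ over $x$, namely $e^{b}=\int\ell^a(x|\theta_2)\ell^{1-a}(x|\thetatrue)\,dx=e^{\Lambda(-a;\thetatrue,\theta_2)}$.
\item If $0<a<1$, H\"older gives $e^{b}\le1$, with equality iff $\ell(\cdot|\theta_2)=\ell(\cdot|\thetatrue)$. Equality is impossible by identifiability.
\end{itemize}

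Where this still leaves room, I expect convexity of $\Lambda(\cdot;\thetatrue,\theta_2)$ to finish the $a\notin[0,1]$ cases. Convexity together with $\Lambda(0)=\Lambda(-1)=0$ gives the sign of $b=\Lambda(-a;\thetatrue,\theta_2)$. Applying the same identity with the roles of $\theta_1,\theta_2$ swapped (coefficient $1/a$, offset $-b/a$) then produces a sign contradiction unless the distributions coincide.

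This case analysis on $(a,b)$ is the delicate part. Once it is done, the Hessian is positive definite at every $s$, which gives strict convexity.
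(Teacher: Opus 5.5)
Your first two steps (finiteness via Cauchy--Schwarz, and the Hessian as the covariance of $\bm{v}$ under the tilted measure) coincide with the paper's proof. You also correctly observe that what must be excluded is an \emph{affine} relation $w_1\bm{\lambda}_t(\theta_0,\theta_1)+w_2\bm{\lambda}_t(\theta_0,\theta_2)=c$, whereas Lemma~\ref{lem:LLRnondeg} only excludes $c=0$. The paper's proof glosses over exactly this point: it says the covariance of the centered vector $\bar{\bm{v}}$ is singular only if its components are proportional, then cites Lemma~\ref{lem:LLRnondeg}, which concerns the uncentered LLRs.

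The genuine gap is your step~3: the case analysis on $(a,b)$ yields no contradiction. Integrating $\ell(x|\theta_1)=e^{-b}\ell^a(x|\theta_2)\ell^{1-a}(x|\theta_0)$ gives $b=\Lambda(-a;\theta_0,\theta_2)$, and nothing more. For $0<a<1$, H\"older gives $e^{b}<1$, i.e., $b<0$. This is fully consistent with $\Lambda(-a;\theta_0,\theta_2)<0$ on $(-1,0)$. The free constant $e^{-b}$ absorbs the slack in H\"older's inequality, which is why the argument of Lemma~\ref{lem:LLRnondeg} (where $b=0$ forces equality) does not transfer. Your sign argument fails too. For $a<0$ you get $b=\Lambda(-a;\theta_0,\theta_2)>0$ and $-b/a=\Lambda(-1/a;\theta_0,\theta_1)>0$. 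For $a>1$ you get $b>0$ and $-b/a<0$. In both cases the signs are compatible.

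In fact no argument can close this step, because the statement is false under Assumptions~\ref{assum:luckylike}--\ref{assum:finiteMGFs}. Take the Gaussian model of Sec.~\ref{sec:tradSL} with three hypotheses: $\bm{x}_{k,t}\sim\mathcal{G}(\theta,1)$, $\Theta=\{0,1,2\}$, $\theta_0=0$. Then $\bm{\lambda}_{k,t}(0,\theta)=-\theta\,\bm{x}_{k,t}+\theta^2/2$, so $\bm{\lambda}_t(0,1)=\frac12\bm{\lambda}_t(0,2)-\frac K2$. This is your case $a=1/2$, with $b=-K/2=\Lambda(-1/2;0,2)$. Since $\sum_k\bm{x}_{k,t}\sim\mathcal{G}(0,K)$,
\[
\Lambda(s_1,s_2;0,1,2)=K\left[\frac{s_1}{2}+2s_2+\frac{(s_1+2s_2)^2}{2}\right].
\]
Its Hessian $K\left[\begin{smallmatrix}1&2\\2&4\end{smallmatrix}\right]$ is singular, and $\Lambda(2r,-r;0,1,2)=-Kr$ is affine in $r$. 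So $\Lambda$ is convex but not strictly convex (it is even unbounded below). The same occurs whenever all agents draw data from a common one-parameter exponential family, since every global LLR is then affine in the same statistic.

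The lemma therefore needs an extra hypothesis: that $1$, $\bm{\lambda}_t(\theta_0,\theta_1)$, $\bm{\lambda}_t(\theta_0,\theta_2)$ are linearly independent in $L^2(\mathbb{P}_{\theta_0})$. Under that hypothesis your steps 1--2 already finish the proof. The affinely dependent case then has to be handled separately in Theorem~\ref{th:jointhreshcross}. There, by independence across agents, the affine relation holds agent by agent with a common slope, so the joint event reduces to a one-dimensional threshold event.
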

\begin{IEEEproof}
For ease of notation, in the following we skip the explicit dependence of the pertinent quantities on $\thetatrue$, $\theta_1$, and $\theta_2$.
Consider the bivariate moment generating function (MGF)
\begin{equation}
M(s_1,s_2)
\triangleq
\mathbb{E}_{\thetatrue} \exp
\Big(
s_1\,\bm{v}_1
+
s_2\,\bm{v}_2
\Big),
\end{equation}
where we set
\begin{equation}
\bm{v}_1\triangleq \bm{\lambda}_t(\thetatrue,\theta_1),\qquad
\bm{v}_2\triangleq \bm{\lambda}_t(\thetatrue,\theta_2).
\label{eq:v1v2def}
\end{equation}
In view of Assumption~\ref{assum:finiteMGFs}, the marginal MGFs for $\bm{v}_1$ and $\bm{v}_2$, say $M_1(s_1)$ and $M_2(s_2)$, exist and are finite over the entire real axis. Then, by applying the Cauchy-Schwarz inequality for random variables, we can write
\begin{equation}
M(s_1,s_2)\leq \sqrt{M_1(2 s_1)\,M_2(2 s_2)},
\end{equation}
which shows that the bivariate MGF exists and is finite for all $(s_1,s_2)\in\mathbb{R}^2$.
Moreover, by exploiting the properties of the exponential function and the dominated convergence theorem, it is possible to show that the MGF is infinitely differentiable over $\mathbb{R}^2$ and that its derivatives can be computed by exchanging the differentiation and integration operators, which yields, in particular, for $n=1,2$,~\cite{BillingsleyProbMeas}
\begin{equation}
\frac{\partial M(s_1,s_2)}{\partial s_n}=
\mathbb{E}_{\thetatrue}\, \left
[\bm{v}_n\,e^{
s_1\,\bm{v}_1
+
s_2\,\bm{v}_2
}
\right]
\label{eq:mgfirstder}
\end{equation}
and, likewise, for all pairs $m,n$, including $m=n$,
\begin{equation}
\frac{\partial M(s_1,s_2)}{\partial s_m\partial s_n}=
\mathbb{E}_{\thetatrue}\, \left
[\bm{v}_m\, \bm{v}_n
\,e^{
s_1\,\bm{v}_1
+
s_2\,\bm{v}_2
}
\right].
\label{eq:mgfsecder}
\end{equation}
Since $\Lambda(s_1,s_2)=\ln M(s_1,s_2)$, from \eqref{eq:mgfirstder} we obtain 
\begin{equation}
\frac{\partial \Lambda(s_1,s_2)}{\partial s_n}=
\frac{\mathbb{E}_{\thetatrue}\, \left
[\bm{v}_n\,e^{
s_1\,\bm{v}_1
+
s_2\,\bm{v}_2
}
\right]}
{M(s_1,s_2)}.
\label{eq:Lambda12firstder}
\end{equation}
Referring to the global likelihood $\ell(x|\thetatrue)$ defined by \eqref{eq:globlikeli}, we introduce the tilted measure 
\begin{equation}
\tilde{\ell}(x|\thetatrue)=\frac
{e^{s_1 v_1+s_2 v_2}
}
{
e^{\Lambda(s_1,s_2)}
}
\,\ell(x|\thetatrue),
\label{eq:tilt2dim}
\end{equation}
where we note that the variables $v_1$ and $v_2$ depend implicitly on $x$ since they are log likelihood ratios; see \eqref{eq:v1v2def}.
Using \eqref{eq:tilt2dim} in \eqref{eq:Lambda12firstder}, we conclude that
\begin{equation}
\frac{\partial \Lambda(s_1,s_2)}{\partial s_n}
=\mathbb{E}_{\tilde{\ell}_{\thetatrue}} \bm{v}_n,
\label{eq:meantilt2dim}
\end{equation}
where the subscript denotes that the expectation is computed under the tilted pdf \eqref{eq:tilt2dim}. 

In order to establish the strict convexity of $\Lambda(s_1,s_2)$, we will prove that the associated Hessian matrix is positive definite. Let us accordingly evaluate the second partial derivatives of $\Lambda(s_1,s_2)$. Using again the identity $\Lambda(s_1,s_2)=\ln M(s_1,s_2)$, we have, for all pairs $m,n$, including $m=n$, 
\begin{equation}
\frac{\partial \Lambda(s_1,s_2)}{\partial s_m\partial s_n}=
\frac{\dfrac{\partial M(s_1,s_2)}{\partial s_m\partial s_n}}{M(s_1,s_2)}\nonumber\\
-
\frac{
\dfrac{\partial M(s_1,s_2)}{\partial s_m}\dfrac{\partial M(s_1,s_2)}{\partial s_n}
}{M^2(s_1,s_2)},
\label{eq:lmgfsecder0}
\end{equation}
which, exploiting \eqref{eq:mgfirstder}--\eqref{eq:meantilt2dim}, can be cast in the form
\begin{equation}
\frac{\partial \Lambda(s_1,s_2)}{\partial s_m\partial s_n}=
\mathbb{E}_{\tilde{\ell}_{\thetatrue}}[\bm{v}_m\,\bm{v}_n] - 
\mathbb{E}_{\tilde{\ell}_{\thetatrue}}[\bm{v}_m]\,\mathbb{E}_{\tilde{\ell}_{\thetatrue}}[\bm{v}_n]. 
\label{eq:basicsecder}
\end{equation}
By introducing the vector
\begin{equation}
\bm{v}\triangleq \big[
\bm{v}_1,\,\bm{v}_2
\big]^\top
\end{equation}
and the centered (under the tilted measure) version thereof,
\begin{equation}
\bar{\bm{v}}=
\bm{v} - \nabla\Lambda(s_1,s_2),
\end{equation}
from \eqref{eq:basicsecder} we get the following representation for the Hessian matrix of the bivariate LMGF:
\begin{equation}
\nabla^2 \Lambda(s_1,s_2)=
\mathbb{E}_{\tilde{\ell}_{\thetatrue}} \left[
\bar{\bm{v}}_t\,\bar{\bm{v}}_t^{\top}
\right].
\end{equation}
That is, the Hessian matrix is the covariance matrix of $\bar{\bm{v}}_t$. Now, it is known that the covariance matrix is positive definite unless the vector $\bar{\bm{v}}_t$ has at least one deterministic component, or the two components are proportional. However, both cases are impossible in view of Lemma~\ref{lem:LLRnondeg}, and the strict convexity of $\Lambda(s_1,s_2)$ follows. 
\end{IEEEproof}

\begin{theorem}[\textbf{Exponents for joint threshold crossing}]
\label{th:jointhreshcross}
Under Assumptions~\ref{assum:luckylike}--\ref{assum:finiteMGFs}, consider the true hypothesis $\thetatrue$ and two other hypotheses $\theta_1$ and $\theta_2$ such that
\begin{equation}
\mathcal{E}(\thetatrue,\theta_1)=\mathcal{E}(\thetatrue,\theta_2).
\end{equation}
Then,\footnote{$o(t)$ is a function satisfying the condition $\lim\limits_{t\rightarrow\infty}o(t)/t=0$.}
\begin{equation}
\mathbb{P}\left[
\bm{\beta}_{k,t}(\thetatrue,\theta_1)\leq 0\, , \bm{\beta}_{k,t}(\thetatrue,\theta_2)\leq 0
\right]
=e^{
- t \mathcal{E}(\thetatrue,\theta_1,\theta_2) + o(t)
},
\label{eq:jointerrexplemclaim}
\end{equation}
with $\mathcal{E}(\thetatrue,\theta_1,\theta_2)>\mathcal{E}(\thetatrue,\theta_1)$.
\end{theorem}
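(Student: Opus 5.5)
The plan is to treat the pair $(\bm{\beta}_{k,t}(\thetatrue,\theta_1),\bm{\beta}_{k,t}(\thetatrue,\theta_2))$ as a bivariate random walk and apply the multivariate Cram\'er theorem (G\"artner--Ellis). From \eqref{eq:betaktinitdef}, each component equals a constant prior term plus $\sum_j\sum_{\tau=1}^{t-d_{jk}}\bm{\lambda}_{j,\tau}(\thetatrue,\theta_n)$. The missing terms ($d_{jk}$ per agent, a number fixed in $t$) and the prior constants only shift the sums by quantities whose log-MGF is bounded uniformly in $t$. Hence the normalized scaled log-MGF of $\frac1t(\bm{\beta}_{k,t}(\thetatrue,\theta_1),\bm{\beta}_{k,t}(\thetatrue,\theta_2))$ converges to the bivariate LMGF $\Lambda(s_1,s_2;\thetatrue,\theta_1,\theta_2)$ of Lemma~\ref{lem:LMGFsc}. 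This limit is finite and differentiable on all of $\mathbb{R}^2$ by Assumption~\ref{assum:finiteMGFs} and Cauchy--Schwarz. G\"artner--Ellis therefore gives an LDP with good convex rate function $\Lambda^*(y)=\sup_{s}\{s^\top y-\Lambda(s)\}$.

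Define $\mathcal{E}(\thetatrue,\theta_1,\theta_2)\triangleq\inf_{y\in Q}\Lambda^*(y)$, where $Q=(-\infty,0]^2$ is the closed negative quadrant. The LDP upper bound gives $\limsup\frac1t\ln\mathbb{P}\le-\mathcal{E}(\thetatrue,\theta_1,\theta_2)$. For the lower bound, I will use the interior $(-\infty,0)^2$ and show that the infimum over the interior equals the infimum over $Q$. This follows from convexity of $\Lambda^*$ and finiteness of $\Lambda^*$ near the relevant points. By strict convexity/essential smoothness, $\Lambda^*$ is finite on the interior of the convex hull of the support of $(\bm{v}_1,\bm{v}_2)$. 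That support contains points in the open negative quadrant, because $\Lambda(s_1,s_2)$ is finite everywhere and the quadrant has positive probability (Theorem~\ref{th:theorem1maintext} already gives positive probabilities of crossings). This yields \eqref{eq:jointerrexplemclaim}.

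The main obstacle is the strict inequality $\mathcal{E}(\thetatrue,\theta_1,\theta_2)>\mathcal{E}(\thetatrue,\theta_1)$. First, the inequality $\ge$ is immediate, since $Q\subset\{y_1\le0\}$ and the one-dimensional rate obtained by contraction is $\inf_{y_1\le0}\Lambda_1^*(y_1)=\Lambda_1^*(0)=\mathcal{E}(\thetatrue,\theta_1)$. For strictness, I argue by contradiction. Goodness of $\Lambda^*$ gives a minimizer $y^\star\in Q$, and $\Lambda^*(y^\star)=\Lambda_1^*(0)$ forces $y^\star_1=0$, since $\Lambda_1^*$ is strictly increasing on $(-\infty,0]$. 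Moreover, $y^\star$ must be the conditional minimizer of $\Lambda^*(0,\cdot)$, whose value equals $\Lambda_1^*(0)$ and whose dual point is $(s(\thetatrue,\theta_1),0)$. Hence $y^\star=\nabla\Lambda(s(\thetatrue,\theta_1),0)=(0,\mathbb{E}_{\tilde\ell}\bm{v}_2)$, the tilted mean of $\bm{v}_2$ at $s_1=s(\thetatrue,\theta_1)$. Feasibility requires $\mathbb{E}_{\tilde\ell}\bm{v}_2\le0$.

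By symmetry, $(\Lambda_2^*(y_2))$ must equal $\mathcal{E}(\thetatrue,\theta_2)=\Lambda_2^*(0)$, since $\Lambda^*(y^\star)\ge\Lambda_2^*(y^\star_2)\ge\Lambda_2^*(0)$. This gives equality $\Lambda^*(y^\star)=\Lambda_2^*(y^\star_2)$ with $y^\star_2\le 0$, which forces $y^\star_2=0$ and $y^\star=(0,0)$. The dual point of $(0,0)$ is then simultaneously $(s(\thetatrue,\theta_1),0)$ and $(0,s(\thetatrue,\theta_2))$. Strict convexity of $\Lambda(s_1,s_2)$ (Lemma~\ref{lem:LMGFsc}) makes $\nabla\Lambda$ injective, so $(s(\thetatrue,\theta_1),0)=(0,s(\thetatrue,\theta_2))$. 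This contradicts $s(\thetatrue,\theta_n)<0$, which proves the strict inequality. The delicate step to check carefully is the identification of the equality case in the contraction inequality $\Lambda^*(y_1,y_2)\ge\Lambda_1^*(y_1)$, namely that equality holds exactly when $y_2$ is the tilted mean. I would verify this directly from differentiability and strict convexity of $\Lambda$.
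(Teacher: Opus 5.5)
Your argument for the strict inequality is correct and is essentially the paper's. In both, equality would force the minimizer of $\Lambda^{\star}$ over $Q$ into the corner $(0,0)$. There $(s(\thetatrue,\theta_1),0)$ and $(0,s(\thetatrue,\theta_2))$ would both be critical points of the bivariate LMGF, contradicting Lemma~\ref{lem:LMGFsc}. You only streamline the route: the two marginal bounds $\Lambda^{\star}(y)\ge\Lambda^{\star}(y_n;\thetatrue,\theta_n)\ge\mathcal{E}(\thetatrue,\theta_n)$ replace the paper's chord reduction to $\partial\mathbb{R}^2_{-}$ and its case split on $\bar y_1<0$ versus $\bar y_1=0$. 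The equality case you flagged as delicate needs only convexity and differentiability, since equality means $(s(\thetatrue,\theta_1),0)$ maximizes the concave map $(s_1,s_2)\mapsto s_2y_2-\Lambda(s_1,s_2)$. One slip: $\Lambda^{\star}(\cdot\,;\thetatrue,\theta_1)$ is strictly \emph{decreasing} on $(-\infty,0]$ (its minimum is at $m_1>0$), which is what you actually use.

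The step that fails is your justification of the G\"artner--Ellis lower bound. Theorem~\ref{th:theorem1maintext} concerns \emph{marginal} crossings only, and finiteness of $\Lambda$ says nothing about where the support of $(\bm{v}_1,\bm{v}_2)$ lies. In fact that support can miss the open negative quadrant entirely. If $\ell(x|\theta_1)^a\ell(x|\theta_2)^{1-a}\le\ell(x|\thetatrue)$ for some $a\in(0,1)$, then $a\bm{v}_1+(1-a)\bm{v}_2\ge 0$ a.s. A concrete case is unit-variance Gaussian observations with means $0,\mu,-\mu$ under $\thetatrue,\theta_1,\theta_2$. There $\bm{v}_1+\bm{v}_2$ is a positive constant, so $\inf_Q\Lambda^{\star}=+\infty$ and the joint event is eventually empty. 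The degenerate case therefore needs separate treatment; in this example the claim can only mean $\frac1t\ln\mathbb{P}\to-\infty$. The paper also glosses over this, applying G\"artner--Ellis directly to the closed quadrant. For Theorem~\ref{th:mainfinalth}, however, only the upper bound $\limsup_t\frac1t\ln b_t(\thetatrue,\theta_1,\theta_2)\le-\mathcal{E}(\thetatrue,\theta_1,\theta_2)$ and the strict inequality are needed, and your argument supplies both.

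The same example also has a singular Hessian ($\Lambda$ is affine along $(1,1)$). So Lemma~\ref{lem:LMGFsc} does not follow from Lemma~\ref{lem:LLRnondeg}: non-proportional raw LLRs can have proportional \emph{centered} versions. Your final step, like the paper's, can avoid this. Suppose $\nabla\Lambda$ vanished at both $(s(\thetatrue,\theta_1),0)$ and $(0,s(\thetatrue,\theta_2))$. Then $\Lambda$ would be constant on the segment joining them, so $|s(\thetatrue,\theta_1)|\bm{v}_1-|s(\thetatrue,\theta_2)|\bm{v}_2$ would be a.s. constant. The zero tilted means force that constant to be $0$, contradicting Lemma~\ref{lem:LLRnondeg}.
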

\begin{IEEEproof}
We start by computing the error exponent for joint threshold crossing, $\mathcal{E}(\thetatrue,\theta_1,\theta_2)$, which appears in \eqref{eq:jointerrexplemclaim}. To this aim, we will apply the G{\"a}rtner-Ellis theorem~\cite{DemboZeitouni,DenHollander}. First, we need to introduce the bivariate LMGF of the \emph{scaled} log belief ratios 
\begin{equation}
\frac 1 t \, \bm{\beta}_{k,t}(\thetatrue,\theta_1),\quad
\frac 1 t \, \bm{\beta}_{k,t}(\thetatrue,\theta_2),
\end{equation}
which is given by~\cite{DemboZeitouni,DenHollander}
\begin{align}
\Lambda_{\beta}&(s_1,s_2;\theta_0,\theta_1,\theta_2)
\nonumber\\
&\triangleq
\ln \mathbb{E}_{\thetatrue} \exp
\Big(
\frac{s_1}{t}\,\bm{\beta}_{k,t}(\thetatrue,\theta_1)
+
\frac{s_2}{t} \,\bm{\beta}_{k,t}(\thetatrue,\theta_2)
\Big).
\end{align}
The G{\"a}rtner-Ellis theorem prescribes to compute the following limit value for the LMGF:
\begin{equation}
\lim_{t\rightarrow\infty} \frac 1 t\, \Lambda_{\beta}(s_1\,t,s_2\,t;\theta_0,\theta_1,\theta_2).
\label{eq:limLMGFnectoGEll}
\end{equation}
Applying the definition of the log belief ratios in \eqref{eq:betaktinitdef}, we can write
\begin{align}
&\frac 1 t\, \Lambda_{\beta}(s_1\,t,s_2\,t;\theta_0,\theta_1,\theta_2)
\nonumber\\
&=
\frac 1 t
\left(
s_1\,\ln\frac{\pi(\thetatrue)}{\pi(\theta_1)}
+
s_2\,\ln\frac{\pi(\thetatrue)}{\pi(\theta_2)}
\right)
\nonumber\\
&+\frac 1 t
\ln\mathbb{E}_{\thetatrue}\,e^{
\sum_{j=1}^K\sum_{\tau=1}^{t-d_{jk}} 
\big(
s_1 \bm{\lambda}_{j,\tau}(\thetatrue,\theta_1)
+ 
s_2 \bm{\lambda}_{j,\tau}(\thetatrue,\theta_2)
\big)}
\nonumber\\
&=
\frac 1 t 
\left(
s_1\,\ln\frac{\pi(\thetatrue)}{\pi(\theta_1)}
+
s_2\,\ln\frac{\pi(\thetatrue)}{\pi(\theta_2)}
\right)
\nonumber\\
&+
\frac 1 t 
\sum_{j=1}^K\sum_{\tau=1}^{t-d_{jk}} 
\!\!\ln\mathbb{E}_{\thetatrue}\!\exp\Big(
s_1 \bm{\lambda}_{j,\tau}(\thetatrue,\theta_1)
+ 
s_2 \bm{\lambda}_{j,\tau}(\thetatrue,\theta_2)
\Big),
\label{eq:LMGFbetanewlastapp}
\end{align}
where the last step follows from the independence over space and time. We remark that the expectations are finite in view of Assumption~\ref{assum:finiteMGFs} and the Cauchy-Schwarz inequality for random variables.
The first term on the RHS in \eqref{eq:LMGFbetanewlastapp} vanishes as $t\rightarrow\infty$. We focus next on the second term. Since the distances $d_{jk}$ are finite, we can equivalently compute the limit of the quantity
\begin{align}
&\frac 1 t 
\sum_{j=1}^K\sum_{\tau=1}^{t} 
\ln\mathbb{E}_{\thetatrue}\exp\Big(
s_1 \bm{\lambda}_{j,\tau}(\thetatrue,\theta_1)
+ 
s_2 \bm{\lambda}_{j,\tau}(\thetatrue,\theta_2)
\Big)
\nonumber\\
&=
\frac 1 t \sum_{\tau=1}^{t} \ln\mathbb{E}_{\thetatrue}e^{
\sum_{j=1}^K
\big(
s_1 \,\bm{\lambda}_{j,\tau}(\thetatrue,\theta_1)
+ 
s_2 \,\bm{\lambda}_{j,\tau}(\thetatrue,\theta_2)
\big)
}\nonumber\\
&=
\frac 1 t \sum_{\tau=1}^{t} \ln\mathbb{E}_{\thetatrue}\exp
\Big(
s_1 \,\bm{\lambda}_{\tau}(\thetatrue,\theta_1)
+ 
s_2 \,\bm{\lambda}_{\tau}(\thetatrue,\theta_2)
\Big)
\nonumber\\
&=
\frac 1 t \sum_{\tau=1}^{t} \Lambda
(s_1,s_2;\thetatrue,\theta_1,\theta_2
)=\Lambda
(s_1,s_2;\thetatrue,\theta_1,\theta_2
),
\end{align}
where: in the first equality we exploited the independence over space; in the second step we applied the definition the aggregate log likelihood ratios from \eqref{eq:sumLLRvariable}; and in the third equality we used the definition of the bivariate LMGF for the aggregate log likelihood ratios from \eqref{eq:bivLMGFlikedef}. 

Once the limiting LMGF in \eqref{eq:limLMGFnectoGEll} exists, the G\"{a}rtner-Ellis theorem ensures the exponential behavior for the probability described by  \eqref{eq:jointerrexplemclaim}, with the following error exponent:
\begin{equation}
\mathcal{E}(\thetatrue,\theta_1,\theta_2)=
\inf_{y_1\leq 0, y_2\leq 0}\Lambda^{\star}(y_1,y_2;\thetatrue,\theta_1,\theta_2),
\label{eq:ratefunlemdef}
\end{equation}
where $\Lambda^{\star}(y_1,y_2;\thetatrue,\theta_1,\theta_2)$ is the Fenchel-Legendre transform of $\Lambda(s_1,s_2;\thetatrue,\theta_1,\theta_2)$, defined as~\cite{DemboZeitouni,DenHollander}:\footnote{We remark that the function in \eqref{eq:FLbivdef} is defined as an extended real number\cite{DemboZeitouni}.}
\begin{align}
    \Lambda^{\star}&(y_1,y_2;\thetatrue,\theta_1,\theta_2) = \nonumber\\
    &\sup_{s_1,s_2 \in \mathbb{R}^2}\, \big[(s_1 y_1 + s_2 y_2) - \Lambda(s_1,s_2;\thetatrue,\theta_1,\theta_2) \big].
    \label{eq:FLbivdef}
\end{align}
From \cite[Lemma 2.2.31]{DemboZeitouni}, we know that the rate function is nonnegative, convex, and that
\begin{equation}
\Lambda^{\star}(m_1,m_2;\thetatrue,\theta_1,\theta_2)=0,
\label{eq:zerorate}
\end{equation}
where
\begin{equation}
m_1=\mathbb{E}_{\thetatrue}\bm{\lambda}_{\tau}(\thetatrue,\theta_1)>0,\qquad
m_2=\mathbb{E}_{\thetatrue}\bm{\lambda}_{\tau}(\thetatrue,\theta_2)>0.
\end{equation}
(Recall that the positivity of the means follows from Assumption~\ref{assum:ident}).
To compute the error exponent in \eqref{eq:ratefunlemdef}, we need to evaluate the infimum of the rate function over the region $\mathbb{R}^2_{-}\triangleq\{y_1\leq 0, y_2\leq 0\}$. 
We now show that it is sufficient to focus on the boundary of this region, \begin{equation}
\partial \mathbb{R}^2_{-}=\{y_1\leq 0, y_2=0\}\cup\{y_1=0, y_2\leq 0\}.
\end{equation}
Consider a point $y^{o}$ internal to $\mathbb{R}^2_{-}$, i.e., $y^{o}=[y^{o}_1,y^{o}_2]^{\top}\in \mathbb{R}^2_{-}\setminus\partial \mathbb{R}^2_{-}$, and consider the boundary point $y=[y_1,y_2]^{\top}\in \partial \mathbb{R}^2_{-}$ that lies over the chord joining $y^{o}$ to $m=[m_1,m_2]^{\top}$. For some $0<\alpha<1$, the point $y$ can be represented as
\begin{equation}
y=\alpha \, y^{o} + (1-\alpha) \, m.
\end{equation}
From the convexity of $\Lambda^{\star}(y_1,y_2;\thetatrue,\theta_1,\theta_2)$, we have that
\begin{align}
&\Lambda^{\star}(y_1,y_2;\thetatrue,\theta_1,\theta_2)\nonumber\\
&\leq
\alpha \Lambda^{\star}(y^o_1,y^o_2;\thetatrue,\theta_1,\theta_2)+
(1-\alpha)\underbrace{\Lambda^{\star}(m_1,m_2;\thetatrue,\theta_1,\theta_2)}_{=0\textnormal{ in view of \eqref{eq:zerorate}}}\nonumber\\
&\leq
\Lambda^{\star}(y^o_1,y^o_2;\thetatrue,\theta_1,\theta_2),
\label{eq:ratefunineqcvx}
\end{align}
where the last step follows from the fact that $0<\alpha<1$ and from the nonnegativity of the rate function. 
Using \eqref{eq:ratefunineqcvx} in \eqref{eq:ratefunlemdef}, we conclude that
\begin{equation}
\mathcal{E}(\thetatrue,\theta_1,\theta_2)=
\inf_{y\in\partial\mathbb{R}^2_{-}}\Lambda^{\star}(y_1,y_2;\thetatrue,\theta_1,\theta_2).
\end{equation}
In \cite[Lemma 2.2.31]{DemboZeitouni}, it is also established that $\Lambda^{\star}(y_1,y_2;\thetatrue,\theta_1,\theta_2)$ is a ``good'' rate function. This means~\cite[p. 4]{DemboZeitouni} that its level sets are compact subsets of $\mathbb{R}^2$, which in turn implies that the function is coercive, namely, it diverges to $+\infty$ as $\|y\|\rightarrow + \infty$. As a result, the infimum is in fact a minimum attained at some point $y\in\partial\mathbb{R}^2_{-}$. Assume that the minimum is attained at $[\bar{y}_1,0]^{\top}$, for some $\bar{y}_1 \leq 0$ (the following arguments apply even in the complementary case where the minimum is attained at $[0, \bar{y}_2]^{\top}$, for some $\bar{y}_2 \leq 0$). 
Then we can write
\begin{align}
\mathcal{E}(\thetatrue,\theta_1,\theta_2)&=
\Lambda^{\star}(\bar{y}_1,0;\thetatrue,\theta_1,\theta_2)\nonumber\\
&=\sup_{(s_1,s_2)\in\mathbb{R}^2} 
\big[
s_1\,\bar{y}_1 - \Lambda(s_1,s_2;\thetatrue,\theta_1,\theta_2)
\big]
\nonumber\\
&\geq
\sup_{s_1\in\mathbb{R}} 
\big[
s_1\,\bar{y}_1 - \Lambda(s_1,0;\thetatrue,\theta_1,\theta_2)
\big]
\nonumber\\
&=
\sup_{s_1\in\mathbb{R}} 
\big[
s_1\,\bar{y}_1 - \Lambda(s_1;\thetatrue,\theta_1)
\big]\!=\!\Lambda^{\star}(\bar{y}_1;\thetatrue,\theta_1),
\label{eq:ineqchainmultiLMGF}
\end{align}
where the second-to-last step follows from the fact that the bivariate LMGF is equal to the univariate LMGF whenever one of the two arguments is zero, while the last step follows from the definition of the rate function in the univariate case. 
Since the univariate rate function $\Lambda^\star(y_1;\thetatrue,\theta_1)$ is minimized at $m_1 > 0$~\cite[Lemma 2.2.5]{DemboZeitouni}, if $\bar{y}_1<0$, the strict convexity of $\Lambda^\star(y_1;\thetatrue,\theta_1)$~\cite[Lemma E.1]{MattaBordignonSayedBook} implies that
\begin{equation}
\Lambda^{\star}(\bar{y}_1;\thetatrue,\theta_1)>\Lambda^{\star}(0;\thetatrue,\theta_1)=\mathcal{E}(\thetatrue,\theta_1),
\end{equation}
and the claim of the theorem is proved. 
When $\bar{y}_1=0$, from \eqref{eq:ineqchainmultiLMGF}, we know that
\begin{align}
\sup_{(s_1,s_2)\in\mathbb{R}^2} 
\big[
- \Lambda(s_1,s_2;\thetatrue,\theta_1,\theta_2)
\big]
&\geq 
\sup_{s_1\in\mathbb{R}} 
\left[
- \Lambda(s_1;\thetatrue,\theta_1)
\right]\nonumber\\
&=\mathcal{E}(\thetatrue,\theta_1).
\label{eq:blabla1}
\end{align}
Recall that the univariate LMGF $\Lambda(s_1;\thetatrue,\theta_1)$ admits a unique minimizer $s(\thetatrue,\theta_1)$. 
Thus, if the relation in \eqref{eq:blabla1} were satisfied with equality, we would conclude that the function $\Lambda(s_1,s_2;\thetatrue,\theta_1,\theta_2)$ admits a minimum located at the point $(s(\thetatrue,\theta_1),0)$. 

Similarly, since 
\begin{equation}
\Lambda(0,s_2;\thetatrue,\theta_1,\theta_2)
=
\Lambda(s_2;\thetatrue,\theta_2),
\end{equation}
by exchanging $s_1$ with $s_2$ and $\theta_1$ with $\theta_2$ in \eqref{eq:ineqchainmultiLMGF}, we have
\begin{align}
\sup_{(s_1,s_2)\in\mathbb{R}^2} 
\big[
- \Lambda(s_1,s_2;\thetatrue,\theta_1,\theta_2)
\big]
&\geq 
\sup_{s_2\in\mathbb{R}} 
\left[
- \Lambda(s_2;\thetatrue,\theta_2)
\right]\nonumber\\
&=\mathcal{E}(\thetatrue,\theta_2),
\label{eq:blabla2}
\end{align}
and since $\mathcal{E}(\thetatrue,\theta_1)=\mathcal{E}(\thetatrue,\theta_2)$ by assumption, we would conclude that the bivariate LMGF has another minimizer located at the point $(0,s(\thetatrue,\theta_2))$, which is distinct from $(s(\thetatrue,\theta_1),0)$ since both $s(\thetatrue,\theta_1)$ and $s(\thetatrue,\theta_2)$ are nonzero. On the other hand, the bivariate LMGF is strictly convex in view of Lemma~\ref{lem:LMGFsc}, and, thus, it cannot have two distinct minimizers, which in turn implies that the equalities in \eqref{eq:blabla1} and \eqref{eq:blabla2} cannot hold, and the proof is complete.
\end{IEEEproof}

\bibliographystyle{IEEEtran}

\begin{thebibliography}{10}
\providecommand{\url}[1]{#1}
\csname url@samestyle\endcsname
\providecommand{\newblock}{\relax}
\providecommand{\bibinfo}[2]{#2}
\providecommand{\BIBentrySTDinterwordspacing}{\spaceskip=0pt\relax}
\providecommand{\BIBentryALTinterwordstretchfactor}{4}
\providecommand{\BIBentryALTinterwordspacing}{\spaceskip=\fontdimen2\font plus
\BIBentryALTinterwordstretchfactor\fontdimen3\font minus
  \fontdimen4\font\relax}
\providecommand{\BIBforeignlanguage}[2]{{%
\expandafter\ifx\csname l@#1\endcsname\relax
\typeout{** WARNING: IEEEtran.bst: No hyphenation pattern has been}%
\typeout{** loaded for the language `#1'. Using the pattern for}%
\typeout{** the default language instead.}%
\else
\language=\csname l@#1\endcsname
\fi
#2}}
\providecommand{\BIBdecl}{\relax}
\BIBdecl

\bibitem{Veeravalli}
J.~F. Chamberland and V.~V. Veeravalli, ``Decentralized detection in sensor networks,'' \emph{IEEE Trans. Signal Process.}, vol.~51, no.~2, pp. 407--416, Feb. 2003.

\bibitem{Varshney}
R.~Viswanathan and P.~K. Varshney, ``Distributed detection with multiple sensors Part I. Fundamentals,'' \emph{Proceedings of the IEEE}, vol.~85, no.~1, pp. 54--63, Jan. 1997.

\bibitem{Poor}
R.~S. Blum, S.~A. Kassam, and H.~V. Poor, ``Distributed detection with multiple sensors II. Advanced topics,'' \emph{Proceedings of the IEEE}, vol.~85, no.~1, pp. 64--79, Jan. 1997.

\bibitem{Tong}
G.~Mergen, V.~Naware, and L.~Tong, ``Asymptotic detection performance of type-based multiple access over multiaccess fading channels,'' \emph{IEEE Trans. Signal Process.}, vol.~55, no.~3, pp. 1081--1092, Mar. 2007.

\bibitem{Willett}
S.~Marano, V.~Matta, L. Tong, and P.~Willett, ``A likelihood-based multiple access for estimation in sensor networks,'' \emph{IEEE Trans. Signal Process.}, vol.~55, no.~11, pp. 5155--5166, Nov. 2007.

\bibitem{MattaBordignonSayedBook}
 V.~Matta, V.~Bordignon, and A.~H. Sayed, \emph{Social {{Learning}}: {{Opinion
   Formation}} and {{Decision-Making Over Graphs}}}.\hskip 1em plus 0.5em minus
   0.4em\relax Emerald Publishing, 2025.

\bibitem{SocLearnSPmagazine}
V.~Bordignon, V.~Matta, and A.~H. Sayed, ``Socially {{intelligent networks}}:
  {{A}} framework for decision making over graphs,'' \emph{IEEE Sig.
  Process. Mag.}, vol.~41, no.~4, pp. 20--39, Jul. 2024.

\bibitem{zhaoLearningSocialNetworks2012}
X.~Zhao and A.~H. Sayed, ``Learning over social networks via diffusion adaptation,'' in \emph{Proc. Asilomar Conference on Signals, Systems, and Computers}, Pacific Grove, CA, USA, 2012, pp. 709--713.
 
\bibitem{jadbabaieNonBayesianSocialLearning2012}
A.~Jadbabaie, P.~Molavi, A.~Sandroni, and A.~{Tahbaz-Salehi},
  ``Non-{{Bayesian}} social learning,'' \emph{Games and Economic Behavior},
  vol.~76, no.~1, pp. 210--225, Sep. 2012.

\bibitem{lalithaSocialLearningDistributed2018}
A.~Lalitha, T.~Javidi, and A.~D. Sarwate, ``Social learning and distributed hypothesis testing,'' \emph{IEEE Trans. Inf. Theory}, vol.~64, no.~9, pp. 6161--6179, Sep. 2018.
 
\bibitem{nedicFastConvergenceRates2017}
A.~Nedi{\'c}, A.~Olshevsky, and C.~A. Uribe, ``Fast convergence rates for distributed non-Bayesian learning,'' \emph{IEEE Trans. Automat. Control}, vol.~62, no.~11, pp. 5538--5553, Nov. 2017.

\bibitem{Jadbabaie2018}
P.~Molavi, A.~Tahbaz-Salehi, and A.~Jadbabaie, ``A theory of non-{B}ayesian social learning,'' \emph{Econometrica}, vol.~86, no.~2, pp. 445--490, Mar. 2018.

\bibitem{MouraLDnonGauss}
D.~Bajović, D.~Jakovetic, J.~M.~F. Moura, J.~Xavier, and B.~Sinopoli, ``Large deviations performance of consensus+innovations distributed detection with non-{G}aussian observations,'' \emph{{IEEE} Trans. Signal Process.}, vol.~60, no.~11, pp. 5987--6002, Nov. 2012.

\bibitem{KayDet}
S.~M. Kay, \emph{Fundamentals of Statistical Signal Processing: Detection Theory}.\hskip 1em plus 0.5em minus 0.4em\relax Prentice Hall, NJ, 1998.

\bibitem{chenSayedLearningBehavior2015PartI}
J.~Chen and A.~H. Sayed, ``On the learning behavior of adaptive networks—Part I: Transient analysis,'' \emph{IEEE Trans. Inf. Theory}, vol.~61, no.~6, pp. 3487--3517, Jun. 2015.

\bibitem{chenSayedLearningBehavior2015PartII}
J.~Chen and A.~H. Sayed, ``On the learning behavior of adaptive networks—Part II: Performance analysis,'' \emph{IEEE Trans. Inf. Theory}, vol.~61, no.~6, pp. 3518--3548, Jun. 2015.

\bibitem{DemboZeitouni}
A.~Dembo and O.~Zeitouni, \emph{Large Deviations Techniques and Applications}.\hskip 1em plus 0.5em minus 0.4em\relax Springer, 2009.
 
\bibitem{DenHollander}
F.~{den Hollander}, \emph{Large {{Deviations}}}.\hskip 1em plus 0.5em minus
  0.4em\relax AMS, 2000.

\bibitem{HuangWang}
B.~Huang, I-H.~Wang ``On the price of decentralization in decentralized detection,'' \emph{IEEE Trans. Inf. Theory}, vol.~71, no.~4, pp. 2341--2359, Apr. 2025.

\bibitem{ourEUSIPCO2025}
 F.~Scala, M.~Carpentiero, V.~Matta and A.~H. Sayed, ``On the performance of social learning,'' in \emph{Proc. EUSIPCO}, Palermo, Italy, Sep. 2025, pp. 1040--1044.

\bibitem{ourEUSIPCOpaperarxiv2025}
 F.~Scala, M.~Carpentiero, V.~Matta and A.~H. Sayed, ``Performance evaluation of social learning,'' \emph{accepted for publication in IEEE Trans. Sig. Process.}, Jun. 2026, available online at arXiv:2606.09176 [cs.MA].

\bibitem{CoverThomas}
T.~M. Cover and J.~A. Thomas, \emph{Elements of Information Theory}. \hskip 1em plus 0.5em minus
   0.4em\relax Wiley, 1991.

\bibitem{HornJohnson}
 R.~A. Horn and C.~R. Johnson, \emph{Matrix Analysis}. Cambridge {U}niversity {P}ress, 2012.

\bibitem{ErdosRenyi}
P.~Erd\H{o}s and A.~R\'{e}nyi, ``On random graphs I,'' \emph{Publicationes Mathematicae (Debrecen)}, vol.~6, pp. 290--297, 1959.

\bibitem{RGbook}
 B.~Bollob\'as, \emph{Random Graphs}. Cambridge {U}niversity {P}ress, 2001.

\bibitem{APLphysrevE}
A.~Fronczak, P.~Fronczak, and J. A.~Ho{\l}yst,
   ``Average path length in random networks,'' \emph{Phys. Rev. E},
   vol.~70, no.~056110, pp. 056110-1--056110-7, 2004.

\bibitem{BillingsleyProbMeas}
 P.~Billingsley, \emph{Probability and Measure}.\hskip 1em plus 0.5em minus
   0.4em\relax Wiley, 1995.

\bibitem{Petrov}
 V.~V. Petrov, \emph{Sums of Independent Random Variables}.\hskip 1em plus 0.5em minus
   0.4em\relax Springer, 1975.
   
\bibitem{Feller2}
W.~Feller, \emph{An Introduction to Probability and Its Applications, {\rm
  vol.} 2},\hskip 1em plus 0.5em minus 0.4em\relax Wiley, 1971.

\bibitem{GalambosBonferroniBook}
J.~Galambos and I.~Simonelli, \emph{Bonferroni-Type Inequalities with Applications}.\hskip 1em plus 0.5em minus
  0.4em\relax Springer-Verlag, 1996.



\end{thebibliography}


\end{document}